\begin{document}

\begin{titlepage}
\vspace*{10mm}
\begin{center}
{\Huge Resource Management for Device-\\ [2mm]to-Device Underlay Communication}\\[30mm]
{\Large Chen Xu$^1$, Lingyang Song$^1$, and Zhu Han$^2$}\\[5mm]
{$^1$School of Electrical Engineering and Computer
Science\\ Peking University, Beijing, China.\\
$^2$Electrical and Computer Engineering Department\\
University of Houston, Houston, USA.}
\end{center}
\end{titlepage}

\pagebreak \tableofcontents \pagebreak
\chapter{Introduction}\label{chap:intro}

\section{Overview of Device-to-Device Communication}
As one of next-generation wireless communication systems, Third
Generation Partnership Project (3GPP) Long Term Evolution (LTE) is
committed to provide technologies for high data rates and system
capacity. Further, LTE-Advanced (LTE-A) was defined to support new
components for LTE to meet higher communication demands
\cite{Doppler2009}. Local area services are considered as popular
issues to be improved, and by reusing spectrum resources local data
rates have been increased dramatically. However, the unlicensed
spectrum reuse may bring inconvenience for local service providers
to guarantee a stable controlled environment, e.g., ad hoc network
\cite{Basagni2004}, which is not in the control of the base station
(BS) or other central nodes. Hence, accessing to the licensed
spectrum has attracted much attention.

Device-to-Device (D2D) communication is a technology component for
LTE-A. The existing researches allow D2D as an underlay to the
cellular network to increase the spectral efficiency
\cite{Doppler2009,Yu_VTC}. In D2D communication, user equipments
(UEs) transmit data signals to each other over a direct link using
the cellular resources instead of through the BS, which differs from
femtocell \cite{Zhang2009} where users communicate with the help of
small low-power cellular base stations. D2D users communicate
directly while remaining controlled under the BS. Therefore, the
potential of improving spectral utilization has promoted much work
in recent years
\cite{Koskela2010,Doppler_ICC,Doppler2010,Min2011,Hakola2010,Yu_PIMRC},
which shows that D2D can improve system performances by reusing
cellular resources. As a result, D2D is expected to be a key feature
supported by next generation cellular networks.


Although D2D communication brings improvement in spectral efficiency
and makes large benefits on system capacity, it also causes
interference to the cellular network as a result of spectrum
sharing. Thus, an efficient interference coordination must be
formulated to guarantee a target performance level of the cellular
communication. There exists several work about the power control of
D2D UEs for restricting co-channel interference
\cite{Doppler2009,Yu_VTC,Yu_ICC,Xing2010}. The authors in
\cite{Janis_PIMRC} utilized MIMO transmission schemes to avoid
interference from cellular downlink to D2D receivers sharing the
same resources, which aims at guaranteeing D2D performances.
Interference management both from cellular to D2D communication and
from D2D to cellular networks are considered in \cite{Peng2009}. In
order to further improve the gain from intra-cell spectrum reuse,
properly pairing the cellular and D2D users for sharing the same
resources has been studied \cite{Janis_VTC,Zulhasnine2010}. The
authors in \cite{Zulhasnine2010} proposed an alternative greedy
heuristic algorithm to lessen interference to the primary cellular
networks using channel state information (CSI). The scheme is
easy-operated but cannot prevent signaling overhead. In
\cite{Xu2010}, the resource allocation scheme avoids the harmful
interference by tracking the near-far interference, identifies the
interfering cellular users, and makes the uplink (UL) frequency
bands efficiently used. Also, the target is to prevent interference
from cellular to D2D communication. In \cite{Yu_TWC}, the authors
provided analysis on optimum resource allocation and power control
between the cellular and D2D connections that share the same
resources for different resource sharing modes, and evaluated the
performance of the D2D underlay system in both a single cell
scenario and the Manhattan grid environment. Then, the schemes are
to further optimize the resource usage among users sharing the same
resources. Based on the aforementioned work, it indicates that by
proper resource management, D2D communication can effectively
improve the system throughput with the interference between cellular
networks and D2D transmissions being restricted. However, the
problem of allocating cellular resources to D2D transmissions is of
great complexity. Our works differ from all mentioned above in that
we consider some schemes to maximize the system sum rate by allowing
multiple pairs share one cellular user's spectrum resource.

The organization of this book is as follows. The rest of Chapter \ref{chap:intro}
gives the basic signal and interference model for D2D communication underlaying cellular
networks, and discusses performances of different communication modes for user equipments (UEs). Finally,
game theory is proposed to solve some resource management problems of D2D underlay cellular systems.
In Chapter \ref{chap:tech}, some physical-layer techniques in D2D underlay communications are investigated,
including power control for D2D users and beamforming for interference avoidance. Radio resource
management for maximizing system throughput is studied in Chapter \ref{chap:allocation} which includes resource allocation algorithm, analysis and performance results of the proposed algorithm. In Chapter \ref{chap:cross}, cross-layer optimization is the key issue. A joint scheduling and resource allocation scheme is proposed. An example of D2D communication improving energy efficiency is given, where an auction-based resource allocation scheme is applied and battery lifetime of UEs is explicitly considered as the optimization goal.


\section{Signal and Interference Model}\label{sec:signalModel}

We consider a single cell scenario as illustrated in Fig.
\ref{fig:SystemModel_Intro}. For simplicity, just one cellular user (UE1)
and one D2D pair (UE2 and UE3) which is in D2D mode are located in
the cell. Three users share the same radio resources at the same
time, thus co-channel interference should be considered. The
position of UE2 is fixed as long as the distance from BS to it is D.
The position of the other D2D user UE3 is described by a uniform
distribution inside a region at most L from UE2. As most traditional
cellular system, UE1 is free to be anywhere inside the cell,
following a uniform distribution. In the simulation, we update the
locations of three users in each loop.
\begin{figure}[h!]
\begin{center}
\includegraphics[width=3.5in]{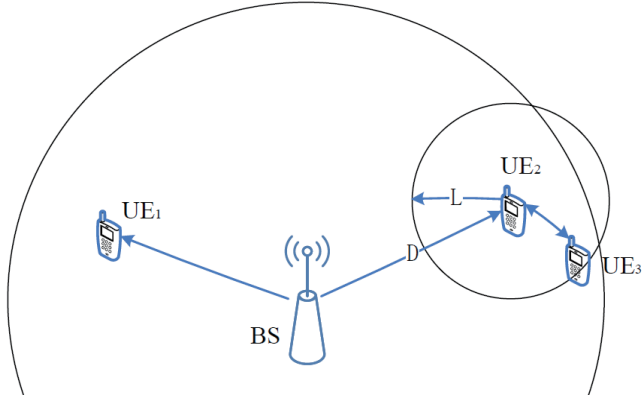}
\caption{Scenario of D2D underlay communication.}
\label{fig:SystemModel_Intro}
\end{center}
\end{figure}

According to Fig. \ref{fig:SystemModel_Intro}, three communicating users are in the system.
UE2 and UE3 are in D2D mode, and UE1 is cellular user. We set the
maximum distance between UE2 and UE3 is 25 meters. Actually, no more
than 100 meters distance between them can be effective. The existing
results only give a possible situation. Table \ref{tab:1} gives the main
simulation parameters.

The wireless propagation is modeled according to WINNER II channel models, and
D2D channel is based on office/indoor scenario while cellular
channel is based on urban macrocell scenario. Table \ref{tab:2} gives path
loss models. $d$ is the link distance in meters, and $n_{walls}$ is the amount of
walls penetrated in the link. $d_{BP}'  = {{4h_{BS}' h_{MS}' f_c }
\mathord{\left/ {\vphantom {{4h_{BS}' h_{MS}' f_c } c}} \right.
 \kern-\nulldelimiterspace} c}$, where $f_c$ is the centre frequency
in Hz, $c = 3.0 \times 10^8$m/s is the propagation velocity in free
space, and $h_{BS}'$ and $h_{MS}'$ are the effective antenna heights
at the BS and the MS, respectively. The effective antenna heights
$h_{BS}'$ and $h_{MS}'$ are computed as follows:
$h_{BS}'=h_{BS}-1.0m$, $h_{MS}'=h_{MS}-1.0$m where $h_{BS}$ and
$h_{MS}$ are the actual antenna heights, and the effective
environment height in urban environments is assumed to be equal to
1.0 m. The LOS probability is given in Table \ref{tab:3}.

\begin{table*}
\begin{center}
\caption{Main Simulation Parameters.}\label{tab:1}
\begin{tabular}{|l|l|}
\hline \bf{Parameter} & \bf{Value}\\ \hline Cellular & Isolated
cell,1-sector \\ \hline System area & User devices are distributed
within \\ & a range of 500m from the BS \\ \hline Noise spectral density
& -174dBm/Hz \\ \hline System bandwidth & 5MHz \\ \hline Noise
figure & 5dB at BS/9dB at device \\ \hline Antenna gains and
patterns & BS:14dBi; Device: Omnidirectional 0dBi \\ \hline Cluster
radius & 5m,10m,15m,20m,25m \\ \hline Transmit power & BS:46dBm;\\
& Device: 24dBm (without power control) \\ \hline
\end{tabular}
\end{center}
\end{table*}

\begin{table*}
\begin{center}
\caption{Path-loss Models \cite{WINNER}.}\label{tab:2}
\begin{tabular}{|l|l|l|}
\hline \bf{Scenario} & \bf{Path loss [dB]} & \bf {Shadow fading [dB] }\\
\hline D2D (LOS) & $18.7log_{10}(d)+46.8$ & 3 \\ \hline D2D  &
$36.8log_{10}(d)+43.8$ & 4\\(NLOS) & $+5(n_{walls}-1)$ & \\ \hline Cellular &
$26log_{10}(d)+39$ & 4 for $10m<d<d_{BP}$ \\ (LOS) &
$40.0log_{10}(d)+13.47$ & 6 for $d_{BP}'<d<5km$, \\& $-14.0log_{10}(h_{BS}')$ & $h_{BS}=25m$,
\\& $-14.0log_{10}(h_{MS}')$ &   $h_{MS}=1.5m$ \\
\hline Cellular  &
$(44.9-6.55log_{10}(h_{BS}))$ & 8 for $50m<d<5km$,\\(NLOS) & $\cdot log_{10}(d)+34.46$ & $h_{BS}=25m$,\\ & $+5.83log_{10}(h_{BS})$ & $h_{MS}=1.5m$ \\ \hline
D2D user- & $PL=PL_{b}+PL_{tw}+PL_{in}$ & 7 \\
Cellular user &
${PL_b  = PL_{B1} \left( {d_{out}  + d_{in} } \right)}$ &  \\
& ${PL_{tw}  = 14 + 15\left( {1 - \cos \left( \theta  \right)} \right)^2 }$ & \\
& ${PL_{in}  = 0.5d_{in} }$  &\\ \hline
\end{tabular}
\end{center}
\end{table*}

\begin{table*}
\begin{center}
\caption{LOS Probability \cite{WINNER}.}\label{tab:3}
\begin{tabular}{|l|l|}
\hline \bf{Scenario} & \bf{LOS probability}\\ \hline D2D & $P_{LOS}
= \left\{ {\begin{array}{*{20}c}
   1 & \hspace{-0.4cm}{,d \le 2.5}  \\
   {1 - 0.9\left( {1 - \left( {1.24 - 0.61\log _{10} \left( d \right)} \right)^3 } \right)^{{1 \mathord{\left/
 {\vphantom {1 3}} \right.
 \kern-\nulldelimiterspace} 3}} } & \hspace{-0.4cm}{,d > 2.5}  \\
\end{array}} \right.$ \\ \hline Cellular & $P_{LOS}  = \min \left( {{{18} \mathord{\left/
 {\vphantom {{18} d}} \right.
 \kern-\nulldelimiterspace} d},1} \right) \cdot \left( {1 - \exp \left( {{{ - d} \mathord{\left/
 {\vphantom {{ - d} {63}}} \right.
 \kern-\nulldelimiterspace} {63}}} \right)} \right) + \exp \left( {{{ - d} \mathord{\left/
 {\vphantom {{ - d} {63}}} \right.
 \kern-\nulldelimiterspace} {63}}} \right)$ \\ \hline
\end{tabular}
\end{center}
\end{table*}

Next, we investigate D2D and cellular SINR distribution with
power control. The LTE uplink open loop fraction power control
scheme (OFPC) is given as \cite{Ref:3GPP}
\begin{equation}
P = \min \left\{ {P_{\max } ,P_0  + 10 \cdot \log _{10} M + \alpha
\cdot L} \right\}
\end{equation}
The parameters for the power control scheme are given in Table \ref{fig:OFPC}.

\begin{table*}
\begin{center}
\caption{OFPC Parameters.}\label{fig:OFPC}
\begin{tabular}{|l|l|}
\hline \bf{Parameter} & \bf{Value}\\ \hline $P_{max}$ & 24dBm \\
\hline $P_0$ & -78dBm \\ \hline $\alpha$ & 0.8 \\ \hline L & Path
loss between two UEs in a pair. \\ \hline M & 1 \\ \hline
\end{tabular}
\end{center}
\end{table*}

In this scenario, the interference between D2D and cellular users
has been taken into account due to UL resource sharing. When the
distance between D2D and co-channel cellular users is not larger
than the maximum distance of D2D communication, the interference
channel can be based on indoor/office scenario. But when co-channel
interference comes from a farther location, D2D channel model is not
suitable. According to WINNER II channel models, we choose
indoor-to-outdoor/outdoor-to-indoor scenario to simulate
long-distance interference channel.

Table \ref{tab:2} also gives the interference channel model. $PL_{B1}$ is the
path loss of urban microcell scenario (see pp.44 in \cite{WINNER}
for parameter detail), $d_{out}$ is the distance between the outdoor
terminal and the point on the wall that is nearest to the indoor
terminal, $d_{in}$ is the distance from the wall to the indoor
terminal, $\theta$ is the angle between the outdoor path and the
normal of the wall. For simplicity, we can assume $\theta=0$ so that
$d_{out}+d_{in}=d$ in the simulation.


\begin{figure}[!ht]
\begin{center}
\includegraphics[height=3.1in]{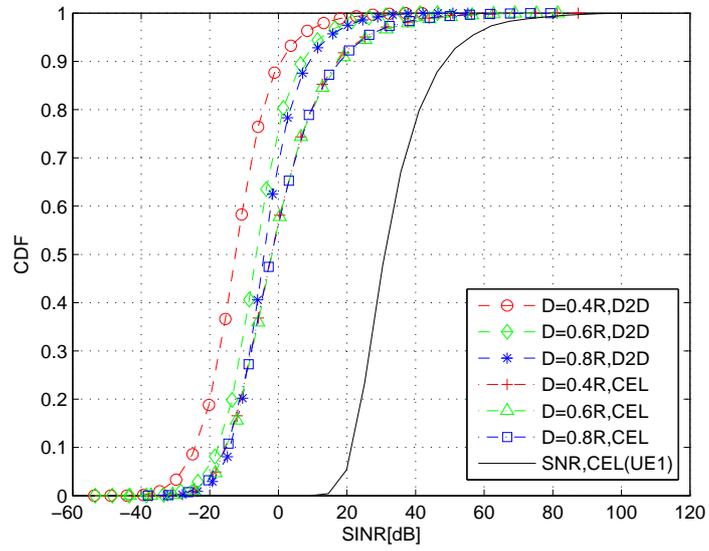}
\caption{SINR distribution of D2D underlay communication with L=25m
(DL).} \label{fig:SINR_DL_nonPC}
\end{center}
\end{figure}

\begin{figure}[!ht]
\begin{center}
\includegraphics[height=3.1in]{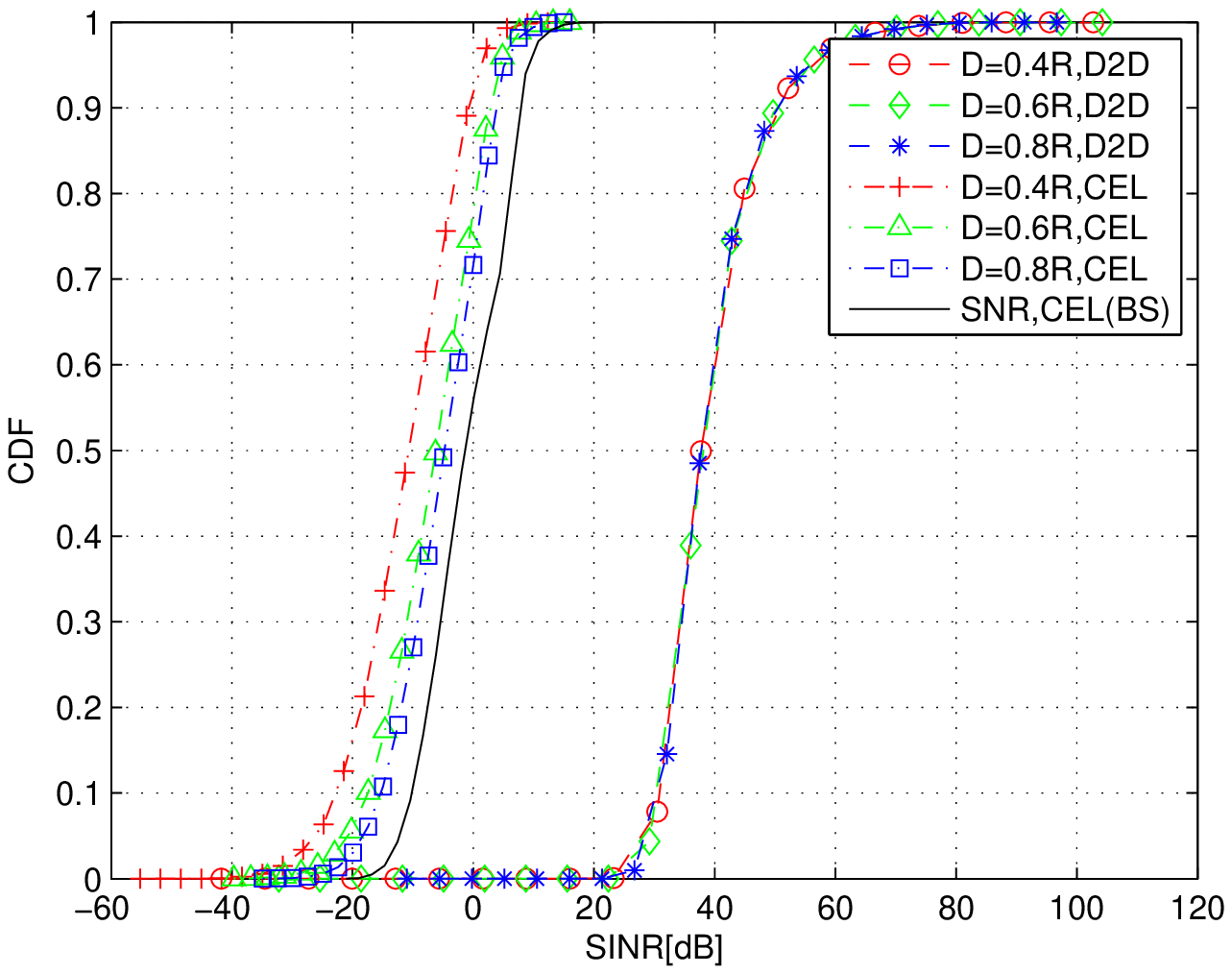}
\caption{SINR distribution of D2D underlay communication with L=25m
(UL).} \label{fig:SINR_UL_nonPC}
\end{center}
\end{figure}

\begin{figure}[!ht]
\begin{center}
\includegraphics[height=3.1in]{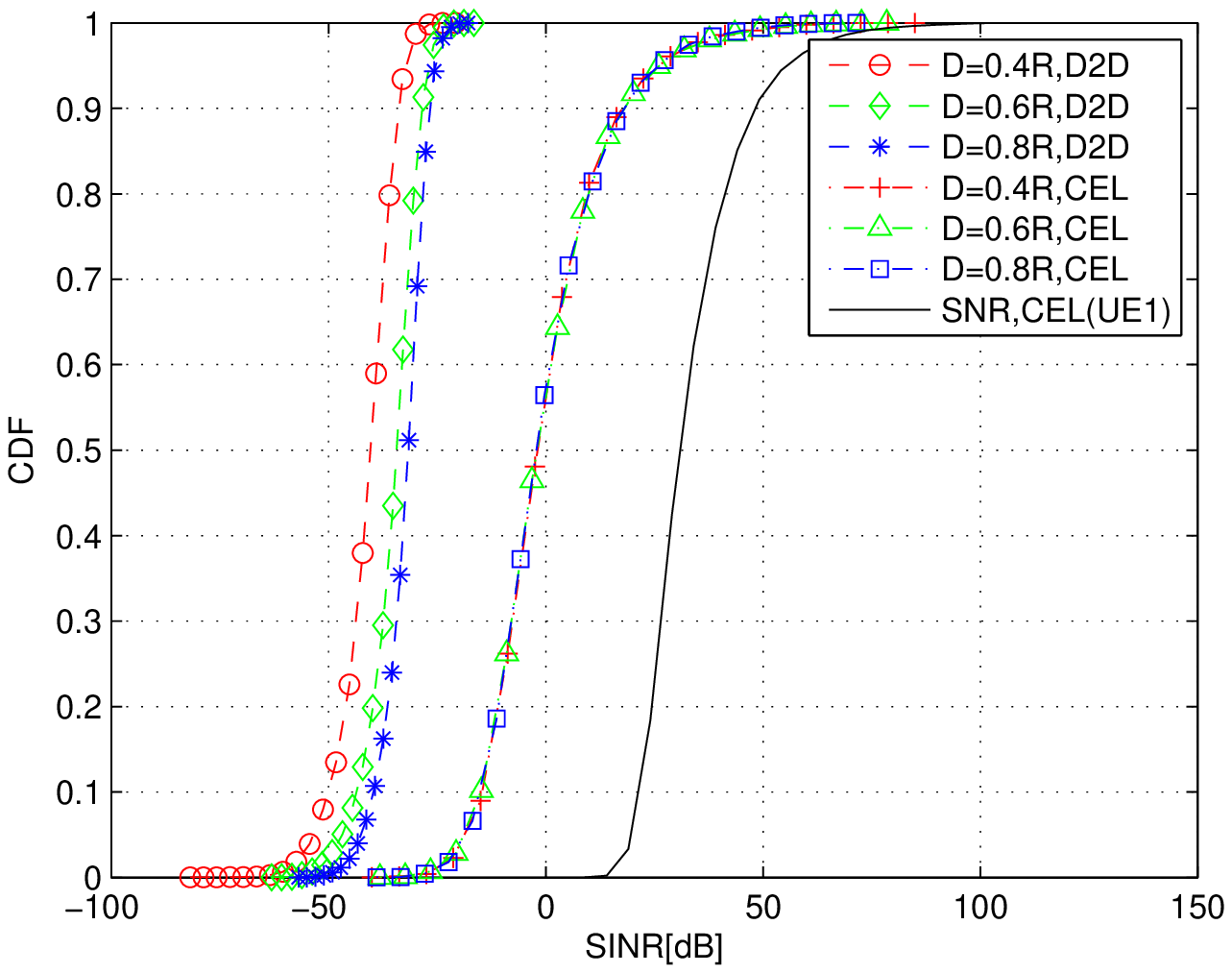}
\caption{SINR distribution of D2D underlay communication under PC
with L=25m (DL). } \label{fig:SINR_DL_PC}
\end{center}
\end{figure}

\begin{figure}[!ht]
\begin{center}
\includegraphics[height=3.1in]{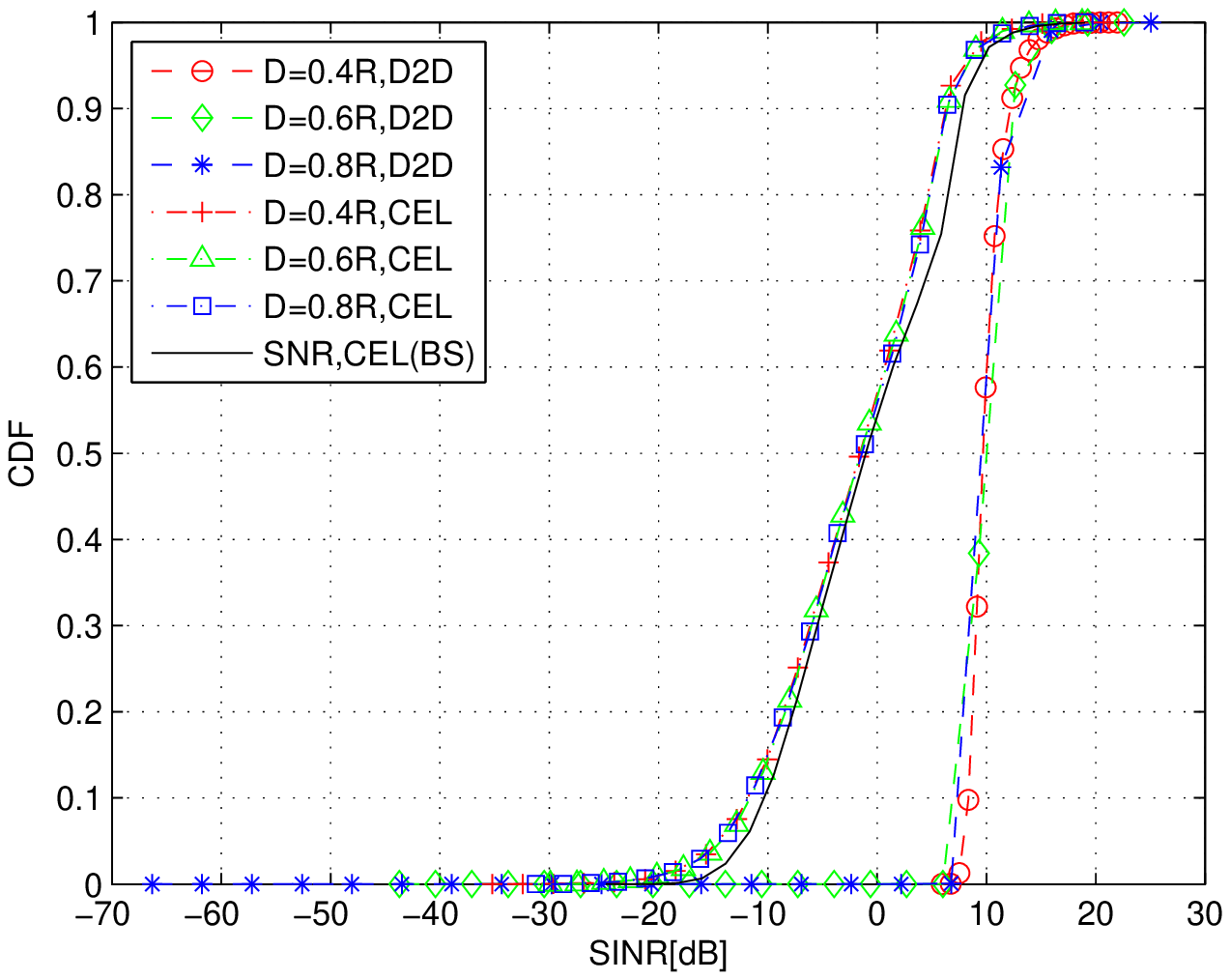}
\caption{SINR distribution of D2D underlay communication under PC
with L=25m (UL). } \label{fig:SINR_UL_PC}
\end{center}
\end{figure}

\begin{figure}[!ht]
\begin{center}
\includegraphics[height=3.1in]{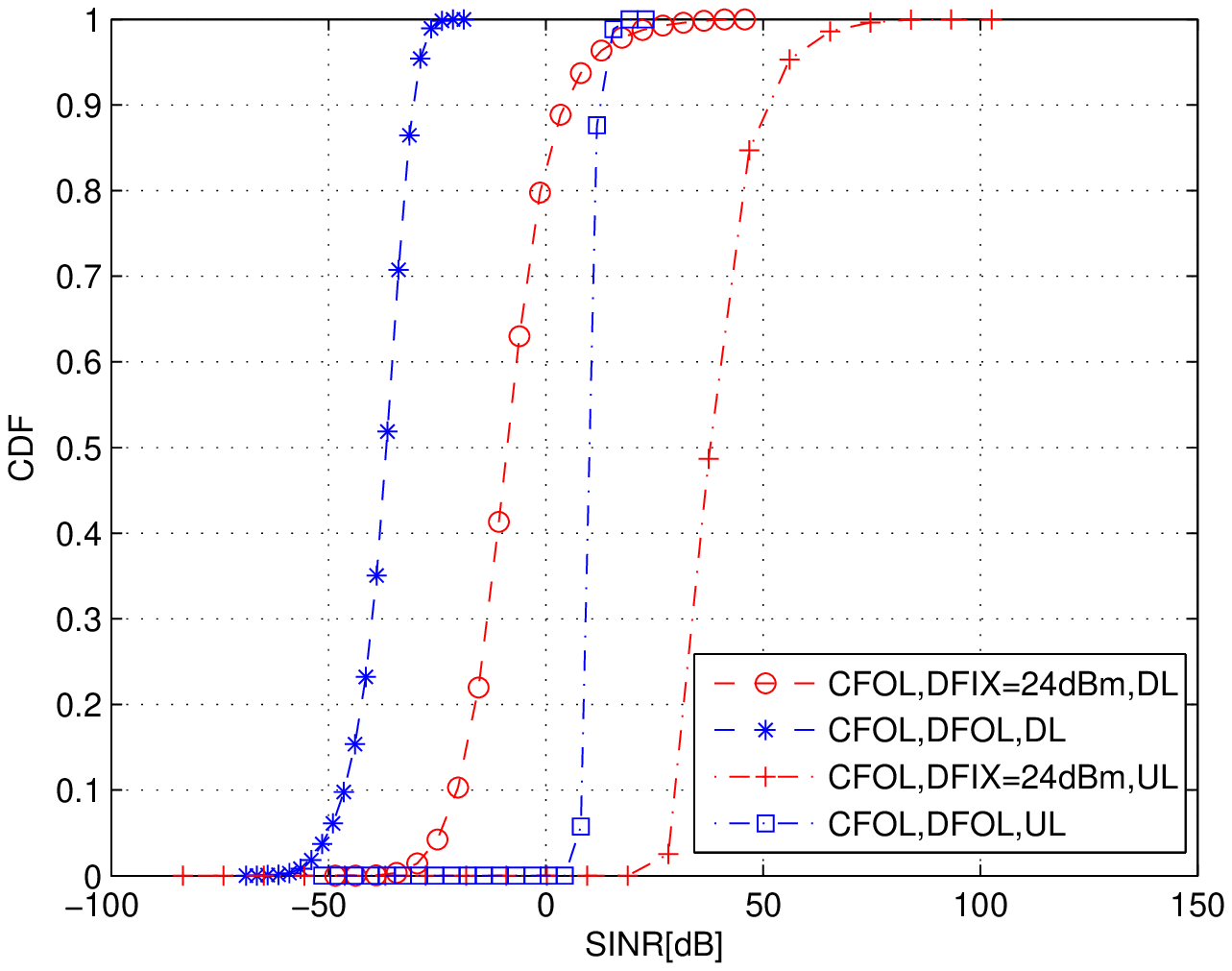}
\caption{SINR distribution of D2D underlay communication under PC
with L=25m, D=0.5R. CFOL: Cellular fraction open loop PC; DFIX: D2D
fixed power; DFOL: D2D fraction open loop PC.}
\label{fig:SINR_PC_nonPC}
\end{center}
\end{figure}

Consider a scenario of 19 cells, each of which is shown as Fig.
\ref{fig:SystemModel_Intro}. For simplicity, a model with one cellular
user and one D2D pair is considered. Update the locations of three
users in each simulation loop. Moreover, co-channel interference is
taken into account. Neighbor cell interference is also from D2D,
cellular users(UL), and BS(DL). Fig. \ref{fig:SINR_DL_nonPC} and
Fig. \ref{fig:SINR_UL_nonPC} give the SINR distribution of D2D
communication without power control (PC) mechanism in downlink (DL)
and uplink (UL) period, respectively.

When D2D users share cellular DL resources without PC: D2D SINR is
better when the pair is farther away from BS. Cellular (UE1) SINR is
less sensitive to the location of D2D users. UE1 SINR is better than
D2D SINR. When sharing DL resources, the interference to D2D is from
BS. The position of the pair has direct influence on the strength of
the interference. For cellular user, the strength of interference
depends not only on the position of D2D users, but also on the
position of cellular user. Since both are randomly distributed, the
position of D2D pair does not significantly affect the results. UE
transmit power is smaller than BS, thus interference from BS can be
higher than that from D2D. As a result, D2D SINR is clearly worse
than UE1.

When D2D users share cellular UL resources without PC: D2D SINR is
almost unchanged as the distance from BS to the pair changes. BS
SINR is better when D2D pair is farther away from BS. D2D SINR is
better than BS SINR. In UL resource sharing, the strength of D2D
interference depends not only on the position of D2D users, but also
on the position of cellular user. Since both are randomly
distributed, the position of D2D pair does not significantly affect
the results. For BS, the interference is from D2D users. The
position of the pair has direct influence on the strength of the
interference. UE3 is 0 $\sim$ 25m from UE2 and UE1 is 0 $\sim$ 500m away from BS,
which is likely to make D2D receive power larger than BS receive
power.

Fig. \ref{fig:SINR_DL_PC} and Fig. \ref{fig:SINR_UL_PC} are the SINR
distribution of D2D communication under PC in DL and UL period,
respectively. When D2D users share cellular DL resources with PC:
D2D SINR has decreased. UE1 SINR has increased. When D2D users share
cellular UL resources with PC: D2D SINR has decreased. BS SINR has
increased. Fig. \ref{fig:SINR_PC_nonPC} is the comparison of SINR
distribution between D2D communication with and without PC. By PC,
D2D SINR has decreased about 30dB. D2D SINR with PC gives smaller
dynamic range.

OFPC scheme limits the transmit power of D2D users, which leads to
D2D SINR degradation. Because D2D transmit power drops down, the
interference to cellular user and BS decreases. Due to the path loss
compensation in OFPC scheme, D2D SINR obtains a more concentrated
distribution.

\section{Mode Selection}\label{sec:modeSelection}
In Device-to-Device (D2D) underlay communication system, one of the
most challenging problems is to decide whether communicating devices
should use cellular or direct communication mode. In D2D mode the
data is directly transmitted to the receiver while cellular
communication mode requires the source device transmit to the base
station (BS) and then the destination device receives from the BS on
downlink (DL). Here we consider three different mode selection
criteria.
\begin{enumerate}
\item \emph{Cellular:} All devices are in cellular mode.
\item \emph{Force D2D:} D2D mode is selected always for all the
communicating devices.
\item \emph{PL D2D:} D2D mode is selected if any of the path losses
between source device and its serving BS, destination device and its
serving BS, is greater than the path loss between a source and a
destination in a pair.
\end{enumerate}

Here, we give the results in the multi-cell scenario of D2D
communication. Because it is closer to the practical applications.
The single-cell scenario is similar to this case, and will not be
repeated.
\begin{figure}[!ht]
\begin{center}
\includegraphics[height=3.2in]{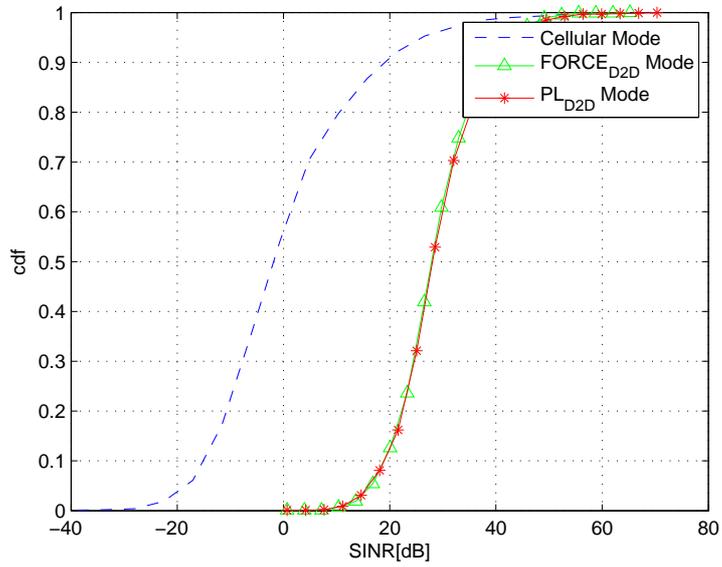}
\caption{SINR distribution of D2D underlay communication under
different mode with L=5m.} \label{fig:SINR_L=5}
\end{center}
\end{figure}
\begin{figure}[!ht]
\begin{center}
\includegraphics[height=3.2in]{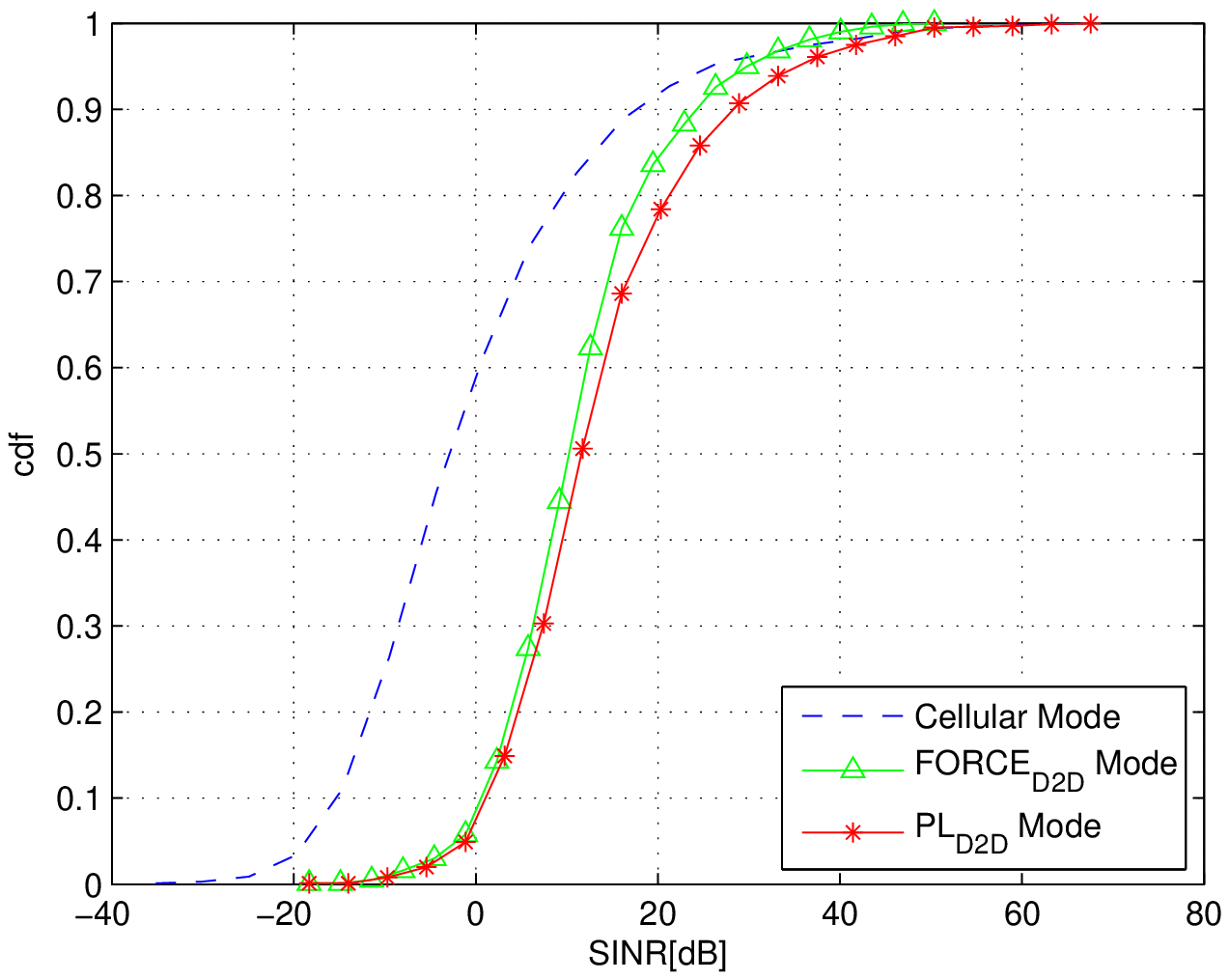}
\caption{SINR distribution of D2D underlay communication under
different mode with L=15m.}\label{fig:SINR_L=15}
\end{center}
\end{figure}
\begin{figure}[!ht]
\begin{center}
\includegraphics[height=3.2in]{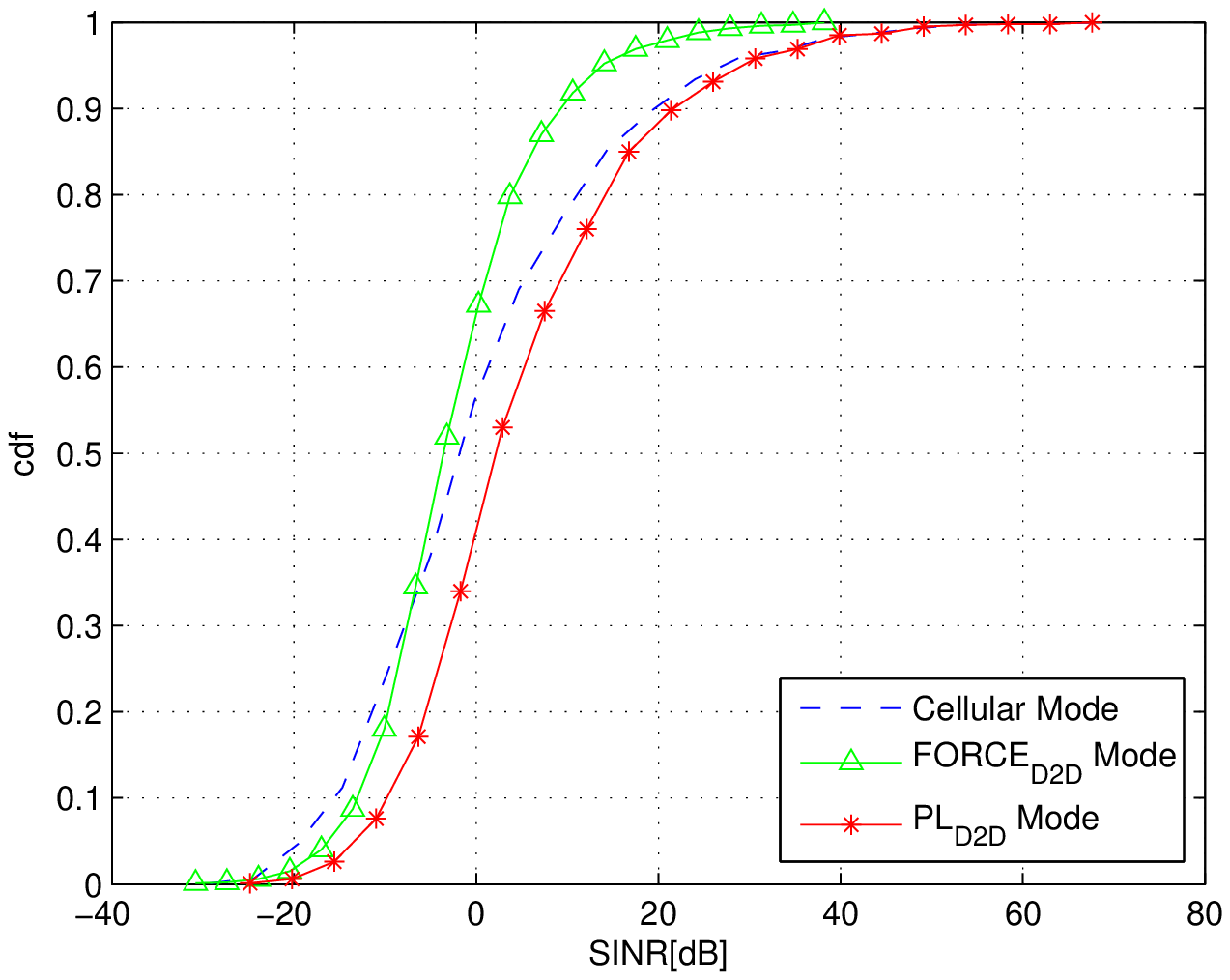}
\caption{SINR distribution of D2D underlay communication under
different mode with L=35m.} \label{fig:SINR_L=35}
\end{center}
\end{figure}
\begin{figure}[!ht]
\begin{center}
\includegraphics[height=3.2in]{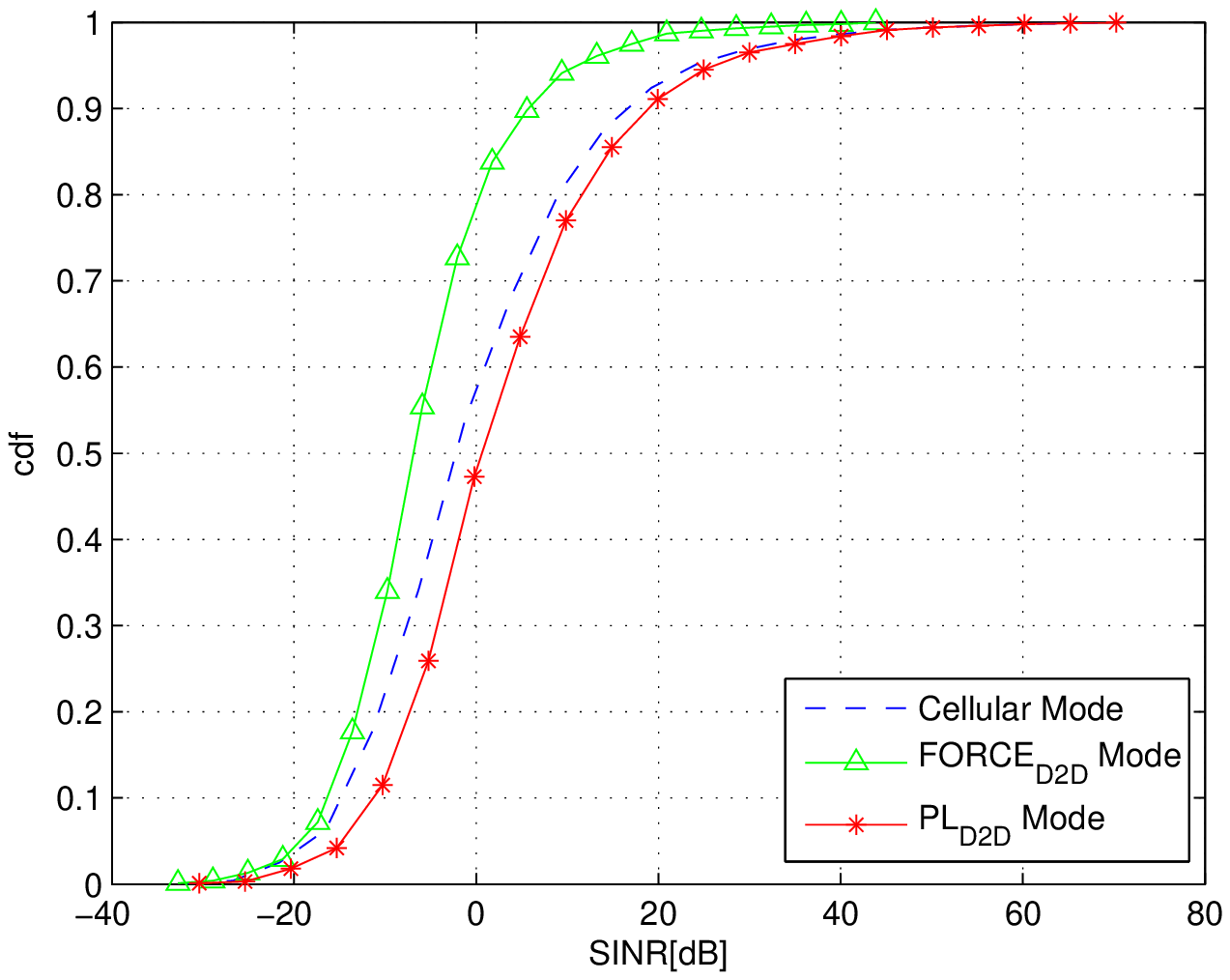}
\caption{SINR distribution of D2D underlay communication under
different mode with L=45m.} \label{fig:SINR_L=45}
\end{center}
\end{figure}

Consider a scenario of 19 cells with multiple users. The total
number of users in the center cell is $10^4$, in which 2000 users
are communicating.
At first, a communicating user is distributed uniformly in the cell.
Another one follows a uniform distribution inside a region at most L
from the first user, thus a link pair is formed. Accordingly, all
the users should be located in the cell.
Fig. \ref{fig:SINR_L=5}, \ref{fig:SINR_L=15},
\ref{fig:SINR_L=35}, \ref{fig:SINR_L=45} show the SINR distribution
of D2D communication under different mode with $L=5,15,35,45$m,
respectively. When the distance between D2D devices is small (L=5m):
FORCE, PL criterion give the same capacity distribution, and CEL
mode falls behind. As the maximum distance between D2D devices
increases to 15m: The capacity is still better under FORCE, PL
criterion, but it has decreased. PL mode gives higher capacity than
FORCE mode does. The maximum distance between D2D devices increases
to a larger value (L=35m): FORCE mode gives a large SINR dynamic
range and more than $50\%$ users SINR is lower than that under CEL
mode. PL mode is the best one. D2D devices are away from each other
at most 45m: FORCE mode gives the worst capacity distribution. CEL
and PL mode are similar.

As the maximum distance between users in a pair increases, the
performance of direct communication  degrades. There exists a
threshold value of L for deciding whether to use D2D communication.
PL mode is a method to solve the problem. Choose the better channel
condition from cellular and D2D communication can obtain an optimal
performance results.

\section{Introduction to Game Theory in D2D Communication}\label{sec:gameTheory}

Due to the interference caused by spectrum resource sharing between D2D and cellular
users, resource management becomes a key issue to be settled. Game theory offers a set of mathematical tools to study the complex
interactions among interdependent rational players and to predict
their choices of strategies \cite{Fudenberg1993}. In recent years, game theory
has emerged as a tool for the design of wireless communication networks. Therefore,
we employ game theory on resource scheduling and interference avoidance in D2D communication. In this section,
some necessary definitions are briefly introduced, and current researches
on wireless communications based on game theory are also mentioned.
Moreover, we give an overview of game theory method solving problems of D2D communication in this book.

The key elements of a game are the players, the actions, the payoffs
(utilities) and the information, together known as the rule of the game.
Players are the individuals making decisions, which can be denoted
as a set $\mathcal{M}=\{1,2,\ldots,M\}$. An action $a_i$ is a choice
that player $i$ makes. An action profile ${\bf a}=\{a_i | i \in \mathcal{M}\}$
is a set of all players' actions. In an auction, players are the bidders and actions
are the bids submitted by the bidders. Player $i$'s utility $u_i({\bf a})$ is a function of the action profile ${\bf a}$, and the utility describes how much gains the
player gets from the game for each possible action profile. In the games,
a player's utility equals his value for the action profile $v_i({\bf a})$
minus his payment $c_i({\bf a})$, i.e., $u_i({\bf a})=v_i({\bf a})-c_i({\bf a})$.
An important assumption of game theory is that all players are rational, i.e., they intend to choose actions to maximize their utilities. A player's information can be characterized by an information set, which tells what kind
of knowledge the player has at the decision instances. In order to maximize their utilities, the players would design their strategies, which are mappings from
one player's information sets to his actions.

A reasonable prediction of a game's outcome is an equilibrium, where
each player chooses a best strategy to maximize his utility. Among several
available equilibrium concepts, we mainly put emphasis on the Nash Equilibrium (NE).
In a static game, an NE is a strategy profile where no player can increase his utility by deviating unilaterally.

In the present
researches, game theory including a large number of different game
methods are used to analyze resource allocation problems, such as
power and wireless spectrum allocations in communication networks
\cite{Bae2008}, resource management in grids \cite{Das2005}, and
distributed resource coordination in mega-scale container terminal
\cite{Lau2007}. In \cite{Bae2008}, the authors proposed a sequential
auction for sharing the wireless resource, which is managed by a
spectrum broker that collects bids and allocates discrete resource
units using a sequential second-price auction.


A combinatorial auction model for resource management was introduced
in \cite{Das2005,Lau2007}. The combinatorial auction-based resource
allocation mechanism allows an agent (bidder) to place bids on
combinations of resources, called ``packages", rather than just
individual resource unit. Actually, the combinatorial auctions (CAs)
have been employed in a variety of industries for, e.g., truckload
transportation, airport arrival and departure slots, as well as
wireless communication services. The benchmark environment of
auction theory is the private value model, introduced by Vickrey
(1961), in which one bidder has a value for each package of items
and the value is not related to the private information of other
bidders \cite{Cramton2005}. Much of work has not recognized that
bidders care in complex ways about the items they compete. The CAs
motivate bidders to fully express their preferences, which is an
advantage in improving system efficiency and auction revenues. Up to
that point, our interest is to apply the CA game in solving
arbitrary D2D links reusing the same cellular frequency bands with
the purpose of optimizing the system capacity.

However, it exists a series of problems and challenges in CAs, such
as pricing and bidding rules, the winner determination problem (WDP)
which, as mentioned in the literature, leads to the NP-hard
allocation problem. Therefore, we focus on the evolution mechanisms
named iterative combinatorial auctions (I-CAs)
\cite{Pikovsky2008,Bichler2009}. In I-CAs, the bidders submit
multiple bids iteratively, and the auctioneer computes provisional
allocations and ask prices in each auction round.

In Chapter \ref{chap:allocation}, we study an effective spectrum resource allocation
for D2D communication as an underlay to further improve system
efficiency based on the I-CA. The whole system consists of the BS,
multiple cellular users that receive signals from the BS, and
multiple D2D pairs that communicate with respective receivers using
licensed spectrum resources. Considering that interference
minimization is a key point and multiple D2D pairs sharing the same
resources can bring large benefits on system capacity, we formulate
the problem as a reverse I-CA game. That means, the resources as the
bidders compete to obtain business, while D2D links as the goods or
services wait to be sold. By this way, the packages of D2D pairs are
auctioned off in each auction round. Furthermore, we investigate
some important properties of the proposed resource allocation
mechanism such as cheat-proof, convergence and price-monotonicity.
We reduce the computational complexity and apply the scheme to WINNER II channel
models \cite{WINNER} which contain a well-known indoor scenario. The
simulation results show that the auction algorithm leads to a good
performance on the system sum rate, and provides high system
efficiency while has lower complexity than the exhaustive search
allocation.

Prior works have considered little about time-domain scheduling of
D2D communication. In Section \ref{sec:scheduling}, we study joint scheduling, power
control and channel allocation for D2D communication using a game
theoretic approach. Note that if cellular and D2D UEs are simply
modeled as selfish players, the outcome is usually inefficient,
since cellular UEs do not want to share the channels with D2D UEs.
As D2D communication is an underlay to the primary cellular
networks, the concept of the Stackelberg game is well suited for the
system. The Stackelberg game is hierarchical, and has a leader and a
follower. The leader acts first, and then the follower observes the
leader's behavior and decides his strategy. The Stackelberg game has
been employed in cooperative networks \cite{Wang2007} and cognitive
radio networks \cite{Zhang_ACM,Bloem2007}. In our proposed
Stackelberg game, the cellular UEs are viewed as leaders and the D2D
UEs are followers. We group cellular and D2D UEs into
leader-follower pairs. The leader charges some fees for the follower
using the channel. We analyze the optimal price for the leader and
the optimal power for the follower. The strategies lead to a
Stackelberg equilibrium. Then, we propose a joint scheduling and
resource allocation algorithm. The leader-follower pairs form a
priority queue based on their utilities, and the system schedules
the D2D UEs according to their orders in the queue.

The booming wireless services are drawing more energy from UE
batteries. However, UEs are typically handheld equipments with
limited battery lifetime. Users have to constantly charge their
batteries. One major advantage of D2D communication is to decrease
UE transmit power consumption, and thus, extend the battery
lifetime. In Section \ref{sec:energy}, we further explore this issue. We consider
the energy consumption of UEs includes transmission energy and
circuit energy, and model the battery lifetime using the Peukert¡¯s
law \cite{Rao2003}. We formulate the problem as maximizing the
battery lifetime of D2D UEs subject to a rate constraint. The
problem is complicated to solve directly. Thus, we consider a
game-theoretic approach, where D2D UEs are viewed as players. The
players are self-interested, and they complete to maximize their own
battery lifetime. We construct the resource allocation game and
analyze the best response, Nash equilibrium and Pareto efficiency of
the game. The players may create externality when they selfishly
occupy radio resources, causing a decrease in the quality of
cellular communications. Thus, we modify the game by adding pricing
as penalty and propose the resource auction. Simulation results show
that the proposed algorithm has good performances in battery
lifetime.

\newpage

\chapter{Physical-layer Techniques}\label{chap:tech}
The demand of high-speed data services to wireless bandwidths grows
rapidly, which has promoted various technology development. D2D has
been proposed to be an underlay to the cellular network aiming to
improve spectrum efficiency and system sum rate. For its potential
to resource reuse and system capacity improvement, D2D communication
is considered to be a key feature of the next generation wireless
network, and attracts much attention. But it is worth noting that
D2D may cause undesirable interference to the primary cellular users
due to the spectrum sharing. This chapter focuses on some physical-layer
techniques for resource management and interference avoidance of D2D
communication underlaying cellular networks.

A simple power control scheme for D2D communication
is proposed in Section \ref{sec:power}, and in Section \ref{sec:beam}, we investigate a joint
beamforming and power control mechanism to avoid interference between cellular and D2D links as well as
maximize system throughput.

\section{Power Control}\label{sec:power}
In this section, a threshold-based
power control scheme for D2D links is proposed to improve the
performance of D2D underlay systems. Both interference management
and power saving are considered in this scheme.

\subsection{Power Control Scheme}

Fig. \ref{fig:thre_model} gives the radio resource sharing of D2D
and cellular links. We can see that the co-channel interference
cannot be ignored. Uplink (UL) resource reuse has better performance
on D2D channel rate and operability compared to downlink resource
reuse. However, to efficiently sharing UL spectrum resource, it is
necessary to mitigate the interference from D2D transmitter to the
BS; moreover, saving power as much as possible while satisfying a
reliable level of performance for D2D users will promote the system
efficiency.
\begin{figure}[!ht]
\centering
\includegraphics[width=4.8in]{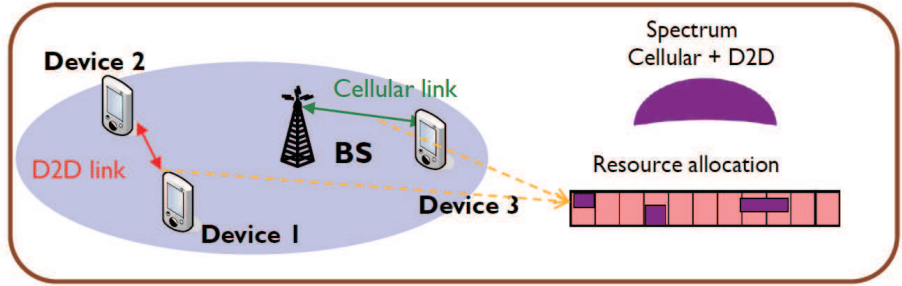}
\caption{A scenario of resource reuse between D2D and cellular
links.} \label{fig:thre_model}
\end{figure}

Fig. \ref{fig:thre_interf} shows the interference scenario of UL
resource sharing. We can see that D2D transmitter UE1 causes
interference to the BS, while cellular user UE3 causes interference
to D2D receiver UE2. Existing work \cite{Doppler2009,Yu_VTC,Yu_ICC}
propose some power control schemes for D2D transmissions.
\cite{Doppler2009} employs the eNB to control the maximum transmit
power of D2D transmitters which achieves the purpose of limiting the
co-channel interference. In \cite{Yu_VTC}, D2D power is controlled
by the eNB according to statistical results at the eNB.
\cite{Yu_ICC} proposes a greedy sum rate maximization optimization
with full CSI assumption. However, these schemes do not consider the
practical communication constraints and detailed mechanism design.
The patent \cite{Patent_1} adjusts D2D transmit power according to
HARQ feedback from the eNB to cellular UE, which is unreliable in
judging the interference status by HARQ monitoring.

\begin{figure}[!ht]
\centering
\includegraphics[width=2.8in]{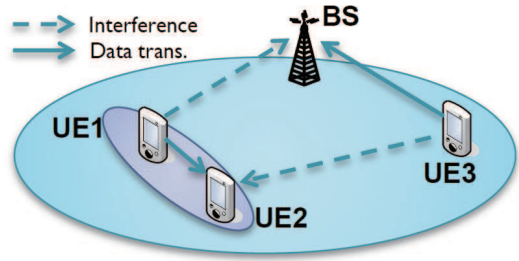}
\caption{The interference scenario of D2D and cellular links under
uplink resource sharing} \label{fig:thre_interf}
\end{figure}

The patent \cite{Patent_2} takes the following scheme: the eNB makes
measurements of interference from D2D transmitter to cellular link,
computes appropriate backoff or boost values, and sends a power
control command to D2D transmitter. Although this scheme control the
interference relatively effectively, the quality of D2D link is not
counted for the system, which may cause performance loss. Further,
it needs centralized scheduling as the eNB should control both D2D
and cellular communication for interference measurement, which leads
to large system overhead. Based on the above, considering both
cellular and D2D link performances, this section proposes a power
control method which can be utilized with distributed scheduling
under uplink resource sharing.

The key ideas of this scheme are as follows.
\begin{enumerate}
\item The BS has no direct control on the D2D link but notifies an
interference margin threshold to D2D transmitter.
\item The BS feeds back the CSI of D2D transmitter UL channel
(not needed for TDD system as the CSI can be obtained by channel
symmetry).
\item D2D transmit power is calculated by D2D transmitter itself
with knowledge of the CSI and the interference margin threshold.
\item D2D transmitter can freely decide whether to transmit according
to allowed transmit power and D2D link status.
\end{enumerate}

The key benefits of the above scheme: the system satisfies D2D link
quality while guarantees cellular link away from destructive
interference, which further improves system performance. In
addition, the scheme has better scalability as distributed
characteristics. It could both restrain interference and guarantee
the feasibility of D2D connection.

A simplified process of signaling interactions is shown in Fig.
\ref{fig:thre_signal}, and Fig. \ref{fig:thre_process} lists the
implementation process of the proposed D2D power control.

\begin{figure}[!ht]
\centering
\includegraphics[width=3.8in]{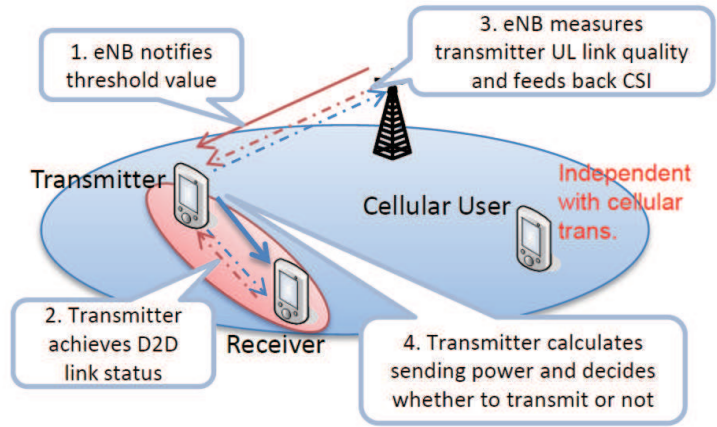}
\caption{Signaling interaction of the threshold based power control
scheme.} \label{fig:thre_signal}
\end{figure}

\begin{figure}[!ht]
\centering
\includegraphics[width=3.8in]{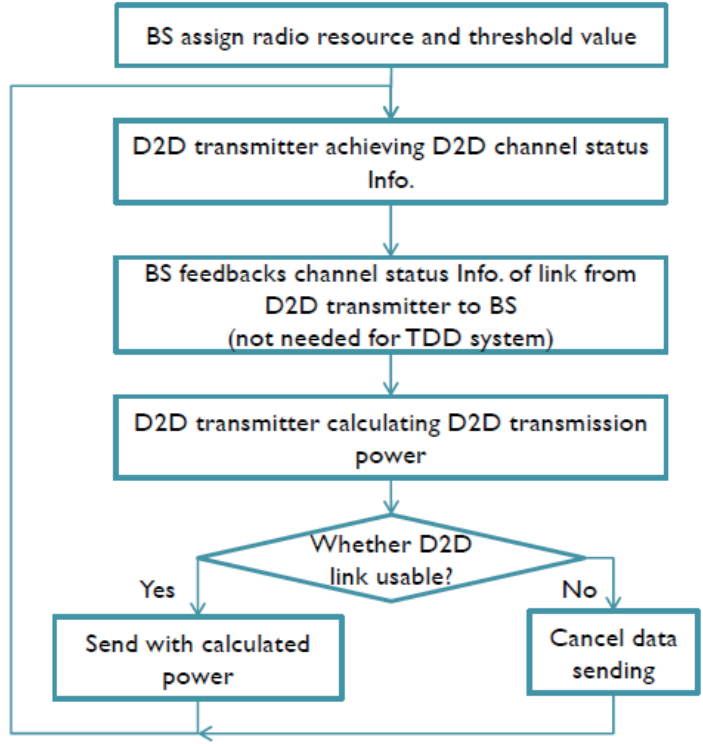}
\caption{Implementation process of the threshold based power control
scheme.} \label{fig:thre_process}
\end{figure}

\subsection{Threshold Based Power Calculation}

In this scheme, D2D transmitter can calculate the permitted transmit
power bound according to the obtained CSI and the interference
margin threshold. For the uplink resource sharing, we assume the
minimum SINR at the BS is required for $\beta_B$, and the
interference margin threshold is represented as $\kappa$. Then, the
corresponding SNR (without the co-channel interference) and SINR at
the BS are expressed as
\begin{equation}\label{eq:thre_SNR}
SNR_B  = \frac{{P_c L_{cB}^{ - 1} h_{cB} }}{{\sigma ^2 }} \ge \kappa
\beta _B,
\end{equation}
\begin{equation}\label{eq:thre_SINR}
SINR_B  = \frac{{P_c L_{cB}^{ - 1} h_{cB} }}{{P_d L_{dB}^{ - 1}
h_{dB}  + \sigma ^2 }} \ge \beta _B,
\end{equation}
respectively. Here, $P_c$ and $P_d$ denote the transmit signal power
of cellular and D2D user, respectively. $L_{cB}$ and $h_{cB}$
represent the path-loss and the channel gain between the BS and
cellular user. $L_{dB}$ and $h_{dB}$ are the path-loss and the
channel gain between D2D transmitter and the BS. $\sigma^2$ is the
AWGN noise power.

According to (\ref{eq:thre_SNR}), the transmit power of cellular
user satisfies
\begin{equation}
P_c  \ge \kappa \beta _B L_{cB} h_{cB}^{ - 1} \sigma ^2  = P_c^{\min
}.
\end{equation}
If the transmit power of cellular user takes the minimum value
$P_c^{min}$, (\ref{eq:thre_SINR}) can be transformed into
\begin{equation}
P_d^{\max }  = \left( {\kappa  - 1} \right)L_{dB} h_{dB}^{ - 1}
\sigma ^2  \ge P_d.
\end{equation}
Thus, on the premise of satisfying the interference margin $\kappa$,
the maximum transmit power of D2D is $P_d^{max}$.

We consider that the channel gain in the practical system may be
estimated, then we have
\begin{equation}
P_d^{\max ^* }  = \left( {\kappa  - 1} \right)L_{dB} \sigma ^2
\left( {\hat h_{dB} } \right)^{ - 1}.
\end{equation}
Moreover, we consider the received power of D2D user $
P_{d_r } = P_d^{\max ^* } L_{dd}^{ - 1} h_{dd}$, where $L_{dd}$ and
$h_{dd}$ are the path-loss and the channel gain between D2D
transmitter and receiver. Thus, we fine tune the transmit power as
$P_d^*  = P_d^{\max ^* } \left( {\hat h_{dd} } \right)^{ - 1}$, and
we obtain
\begin{equation}
P_d^*  = \left( {\kappa  - 1} \right)L_{dB} \sigma ^2 \left( {\hat
h_{dB} } \right)^{ - 1} \left( {\hat h_{dd} } \right)^{ - 1},
\end{equation}
where $\hat h_{dB}$ and $\hat h_{dd}$ are the channel gain
estimation values.

For the performance of D2D links, it requires to satisfy the minimum
SINR of receiver $\beta_d$, i.e.,
\begin{equation}
\frac{{P_d L_{dd}^{ - 1} h_{dd} }}{{P_c L_{cd}^{ - 1} h_{cd}  +
\sigma ^2 }} \ge \beta _d,
\end{equation}
where $L_{cd}$ and $h_cd$ represent the path-loss and the channel
gain between cellular user and D2D receiver. So we have
\begin{equation}
P_d  \ge L_{dd} h_{dd}^{ - 1} \left( {P_c L_{cd}^{ - 1} h_{cd}  +
\sigma ^2 } \right)\beta _d  = P_d^{\min }.
\end{equation}
Here, $P_d^{min}$ is the minimum transmit power to guarantee the
performance of D2D links.


Next, we give some simulation results and related
analysis. Power saving and energy efficiency are two sides that been
investigated. We define the power saving as
\begin{equation}
\alpha  = \frac{{\left( {P_c^{OL}  + P_d^{fixed} } \right) - \left(
{P_c^{OL}  + P_d^* } \right)}}{{P_c^{OL}  + P_d^{fixed} }},
\end{equation}
where $P_d^{fixed}$ is the fixed transmit power of D2D users,
$P_c^{OL}$ is the transmit power of cellular users by LTE uplink
open loop fraction power control scheme, and $P_d^*$ is the transmit
power of D2D users by the proposed power control scheme while
satisfying the constraints $P_d^{\min }  \le P_d^*  \le 23dBm$. That
is, power saving factor $\alpha$ represents the percentage of saving
power by the power control scheme compared to the fixed power
scheme. In the simulation, energy efficiency can be calculated by
\begin{equation}
\eta  = \frac{{\log _2 \left( {1 + SINR_c^{OL} } \right) + \log _2
\left( {1 + SINR_d^* } \right)}}{{P_c^{OL}  + P_d^* }}.
\end{equation}
Here, $SINR_c^{OL}$ and $SINR_d^*$ denote the SINR at the BS and D2D
receiver, respectively. The simulation parameters are listed in Table \ref{tab:thre_parameter}.

\begin{table*}
\begin{center}
\caption{Main Simulation Parameters}\label{tab:thre_parameter}
\begin{tabular}{|l|l|}
\hline \bf{Parameter} & \bf{Value}\\ \hline Cellular & Isolated cell, 1-sector  \\
\hline System area & User device are distributed in a \\ & hexagonal cell with
500m radius. \\ \hline Noise spectral density & -174dBm/Hz \\
\hline Sub-carrier bandwidth & 15kHz \\
\hline Noise figure & 5dB at BS/9dB at device \\
\hline Antenna gains and patterns & BS: 14dBi \\ & Device:Omnidirectional 0dBi \\
\hline Cluster radius & 150m \\
\hline Minimum SINR & BS: 10dB \\ & Device: 5dB \\
\hline Number of cellular users (channels) & 1/1:8 \\
\hline Interference margin & 3dB \\
\hline Transmit power bound & Max: 23dBm \\
\hline Detecting signal power & 5dBm \\
\hline Channel model & WINNER II \\ \hline
\end{tabular}
\end{center}
\end{table*}

\begin{figure}[!ht]
\centering
\includegraphics[height=3.3in]{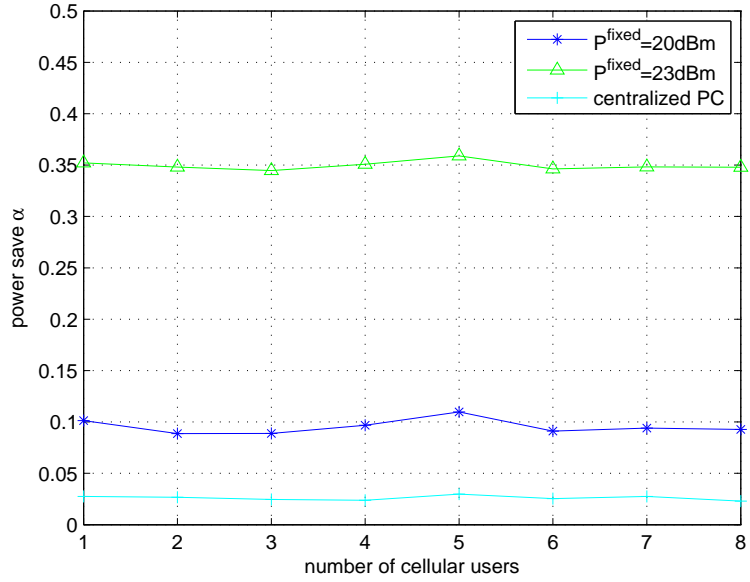}
\caption{Power savings under statistical channel estimation with
different fixed power (UL).} \label{fig:thre_1}
\end{figure}

\begin{figure}[!ht]
\centering
\includegraphics[height=3.3in]{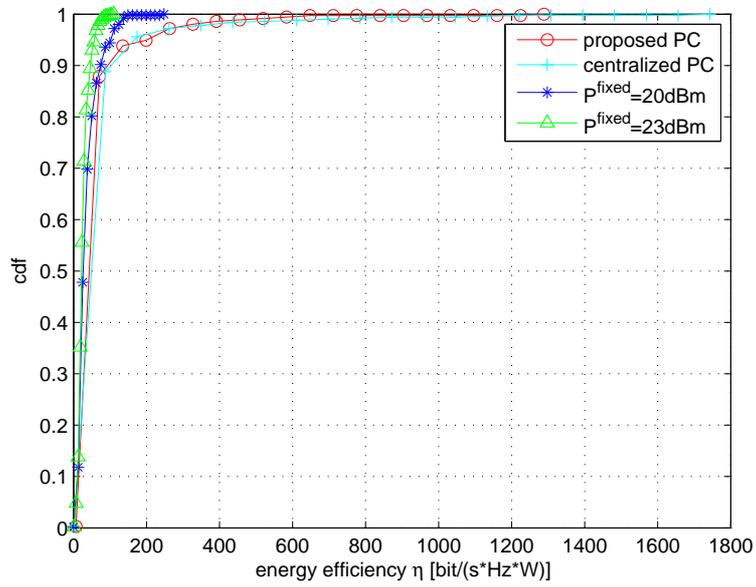}
\caption{Energy efficiency under statistical channel estimation with
different transmit power (UL).} \label{fig:thre_2}
\end{figure}

\begin{figure}[!ht]
\centering
\includegraphics[height=3.3in]{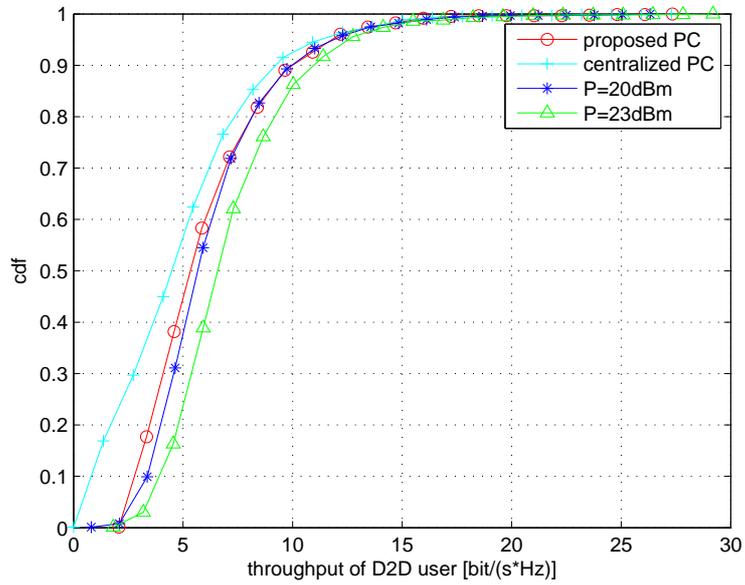}
\caption{Throughput distribution of D2D communication under
statistical channel estimation.} \label{fig:thre_3}
\end{figure}

\begin{figure}[!ht]
\centering
\includegraphics[height=3.3in]{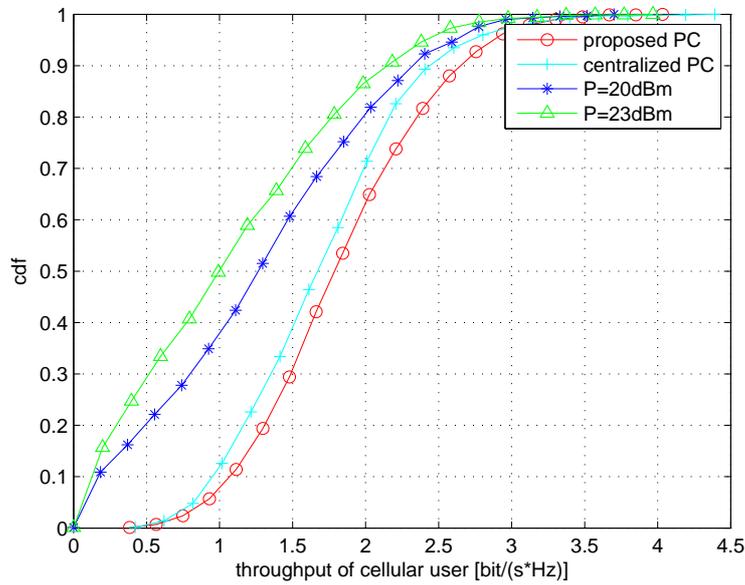}
\caption{Throughput distribution of cellular communication under
statistical channel estimation.} \label{fig:thre_4}
\end{figure}

\begin{figure}[!ht]
\centering
\includegraphics[height=3.3in]{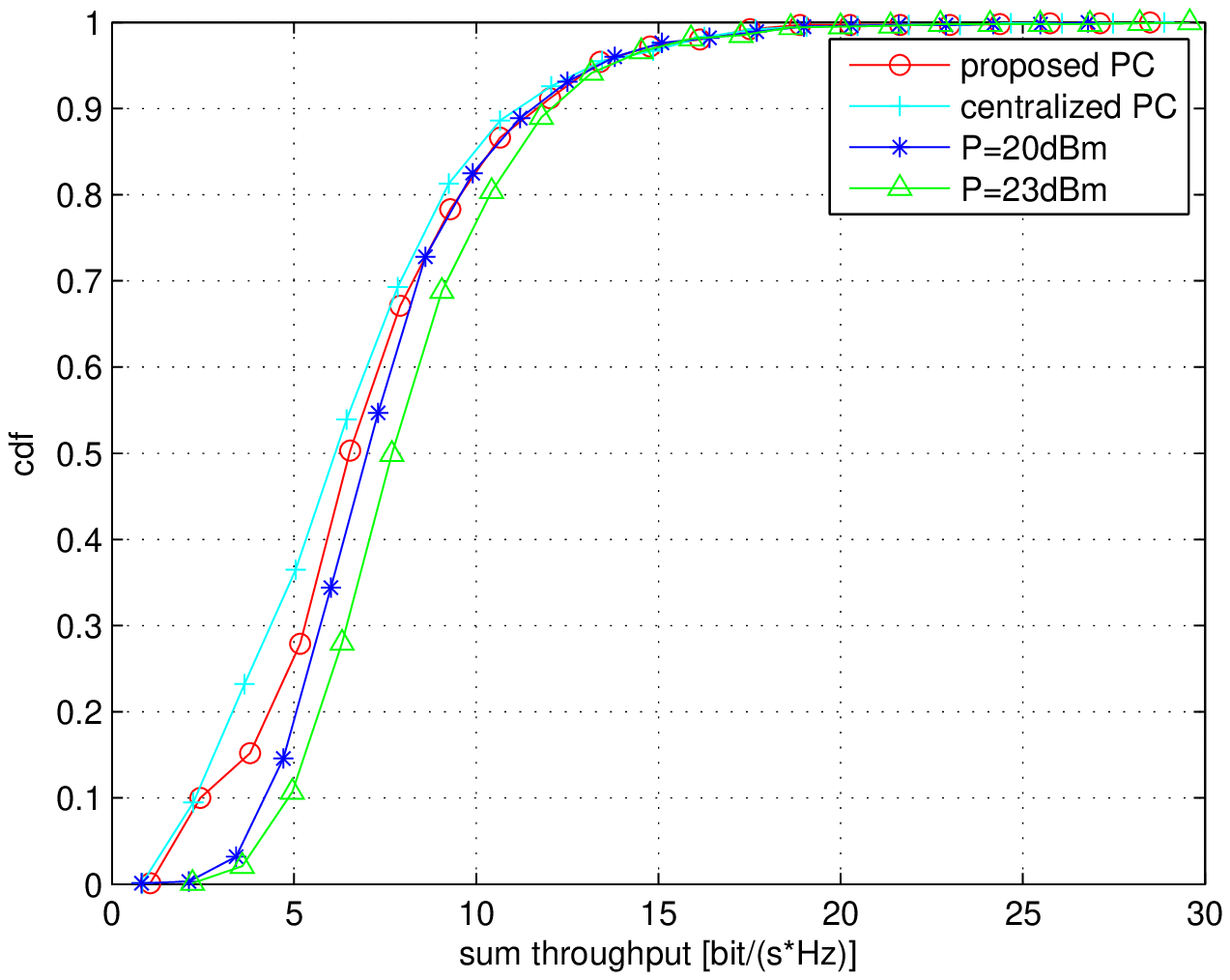}
\caption{Sum throughput distribution of cellular and D2D
communications under statistical channel estimation.}
\label{fig:thre_5}
\end{figure}

Fig. \ref{fig:thre_1} and Fig. \ref{fig:thre_2} give the percentage
of power saving and the distribution of energy efficiency under
statistical channel estimation, respectively. The power control
scheme saves about $35\%$ power than the fixed power 23dBm. Besides,
the power control scheme statistically performs better than fixed
power case in energy efficiency, and the performance is similar to
the centralized scheme in the patent \cite{Patent_2}.

Fig. \ref{fig:thre_3} $\sim$  Fig. \ref{fig:thre_5} show the
distribution of D2D, cellular and system throughput under
statistical channel estimation. Because of the D2D power control,
the throughput of D2D link is lower than that under the fixed
maximum power 23dBm, but higher than that under the centralized PC
scheme due to the D2D link performance guarantee. The sum throughput
is also appreciable as excessive interference to cellular links is
avoided.


This section has proposed a distributed threshold-based power
control scheme that guarantees the feasibility of D2D connection,
and at the same time limits cellular SINR degradation. The BS has no
direct control on the D2D link but notifies an interference margin
threshold to D2D transmitter, and the value can be tuned to meet
corresponding SINR requirements. Power is calculated by D2D
transmitter itself, which makes the operation flexible and
convenient, improving the system efficiency.

\section{Beamforming}\label{sec:beam}

Same frequency-time resources could be shared by cellular and D2D
links to enhance the system capacity, but co-channel interference
exists. During downlink (DL), D2D links receive more interference
from the BS. Interference management is necessary to optimize the
system performance. On one side, we hope to achieve a reliable level
of performance for both cellular and D2D users. On the other side,
to maximize the system throughput is our objective. In this section,
we investigate a joint beamforming and power control method to
reduce the interference and further improve the system performance.

\subsection{Joint Beamforming and Power Control Scheme}\label{subsec:beam}

In this scheme,
we consider a single cell scenario, and only one cellular user and
one D2D pair in the model. We assume that the BS is equipped with
multiple antennas while UEs are respectively with a single antenna.
Fig. \ref{fig:beam_model} gives the system model, where the solid
lines indicate the data transmissions and the dotted lines indicate
the interference links. Moreover, we assume the channel responses
are known by the BS, and the SINR minimum threshold of both cellular
and D2D users are set by the BS.
\begin{figure}[!ht]
\centering
\includegraphics[width=2.8in]{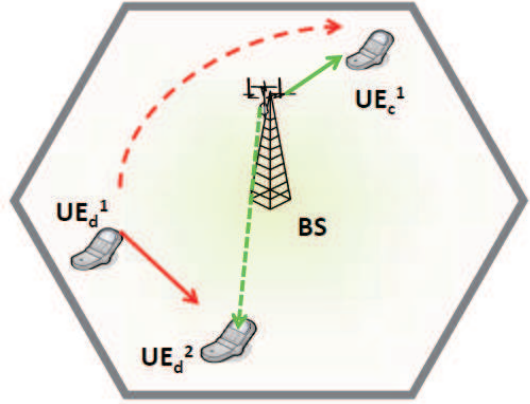}
\caption{System model of D2D communication underlaying cellular
networks with downlink resource sharing.} \label{fig:beam_model}
\end{figure}

Fig. \ref{fig:beam_model2} gives an example of beamforming with two
antennas equipped at the BS. The channel response matrix can be
expressed as
\begin{equation}
{\bf{H}} = \left( {\begin{array}{*{20}c}
   {h_{11} } & {h_{12} }  \\
   {h_{21} } & {h_{22} }  \\
\end{array}} \right),
\end{equation}
where $h_{11}$, $h_{12}$ are data channel responses of cellular
link, and $h_{21}$, $h_{22}$ are interference channel responses of
D2D link. The transmitted signal at the BS is obtained by
\begin{equation}
{\bf{x}} = {\bf{WAs}}.
\end{equation}
Here, $\bf{W}$ is beamforming matrix, $\bf{A}$ is power
normalization matrix, and $\bf{S}$ is data vector. Thus, the
received signal at cellular user $UE_c^1$ and D2D receiver $UE_d^2$
can be jointly written as
\begin{equation}
{\bf{y}} = {\bf{HWAs}} + {\bf{n}}.
\end{equation}

\begin{figure}[!ht]
\centering
\includegraphics[width=2.8in]{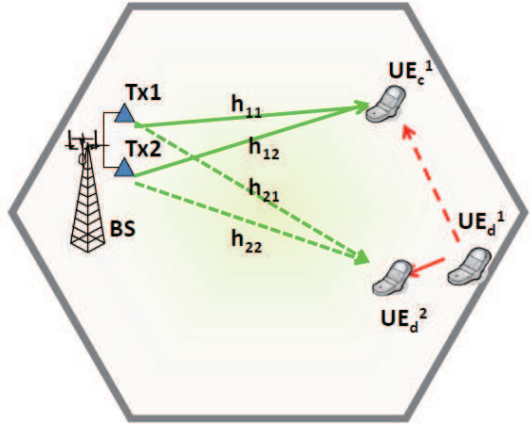}
\caption{An example of beamforming with two antennas equipped at the
BS.} \label{fig:beam_model2}
\end{figure}
In this model, the BS is the control center which conduct
beamforming and power control at the same time.

The key ideas of this scheme are as follows.
\begin{enumerate}
\item The BS carries out beamforming to avoid D2D receiving excessive
interference from the BS.
\item D2D receiver and cellular user feed back downlink CSI to the
BS.
\item The BS calculates transmit power to maximize system sum rate
subjected to SINR threshold of both cellular and D2D links.
\end{enumerate}

The key benefits of the above scheme: the system better adapts to
D2D link quality in downlink resource sharing. In general, the
scheme could guarantee the performance of both cellular and D2D
links, and could maximize system throughput as centralized
characteristics.

A simplified process of signaling interactions is shown in Fig.
\ref{fig:beam_signal}, and Fig. \ref{fig:beam_process} lists the
implementation process of the proposed joint beamforming and power
control scheme.

\begin{figure}[!ht]
\centering
\includegraphics[width=3.8in]{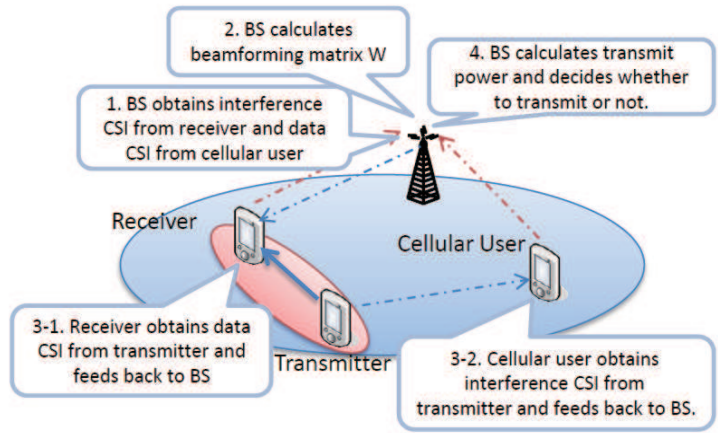}
\caption{Signaling interaction of the joint beamforming and power
control scheme.} \label{fig:beam_signal}
\end{figure}

\begin{figure}[!ht]
\centering
\includegraphics[width=3.8in]{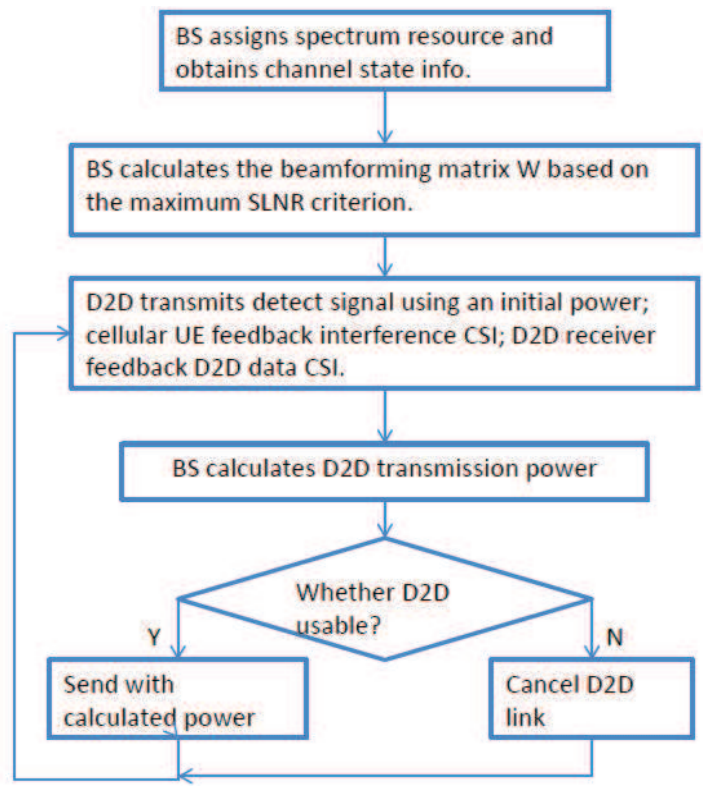}
\caption{Implementation process of the joint beamforming and power
control scheme.} \label{fig:beam_process}
\end{figure}

\subsection{Beamforming Matrix and Power Calculation}

The BS calculates the beamforming matrix according to the CSI that cellular
and D2D user feedback to it. Based on the system model in subsection
\ref{subsec:beam}, the received signal at cellular user and
D2D receiver can be written as
\begin{equation}
y_c  = {\bf{h}}_c^H {\bf{W}}\sqrt {P_B } s_c  + h_{dc} \sqrt {P_d }
s_d  + n,
\end{equation}
\begin{equation}
y_d  = h_{dd} \sqrt {P_d } s_d  + {\bf{h}}_d^H {\bf{W}}\sqrt {P_B }
s_c  + n,
\end{equation}
respectively. Here, ${\bf{h}}_c  = \left( {\begin{array}{*{20}c}
   {h_{11} } & {h_{21} }  \\
\end{array}} \right)^T$ is the signal channel response of cellular
user, and ${\bf{h}}_d  = \left( {\begin{array}{*{20}c}
   {h_{12} } & {h_{22} }  \\
\end{array}} \right)^T$ is the interference channel response of D2D
receiver. $ {\bf{W}} = \left( {\begin{array}{*{20}c}
   {w_1 } & {w_2 }  \\
\end{array}} \right)^T$ is the beamforming matrix satisfying ${\bf{W}}^H {\bf{W}} = 1 $.
$s_c$ and $s_d$ represent transmit signals from the BS and D2D
transmitter, respectively. $P_B$ and $P_d$ denote transmit power
from the BS and D2D transmitter, respectively. $n$ is thermal noise
with variance $N_0$. We employ maximizing SLNR as the beamforming
criterion. That is,
\begin{equation}
\max \begin{array}{*{20}c}
   {}  \\
\end{array}\frac{{{\bf{W}}^H {\bf{h}}_c {\bf{h}}_c^H {\bf{W}}}}
{{{\bf{W}}^H {\bf{h}}_d {\bf{h}}_d^H {\bf{W}} + \frac{{N_0 }}{{P_B
}}}}.
\end{equation}
Thus, the beamforming matrix can be obtained by
\begin{equation}
{\bf{W}} = \frac{1}{\rho }\left( {{\bf{HH}}^H  + \frac{{N_0 }}{{P_B
}}{\bf{I}}} \right)^{ - 1} {\bf{h}}_c,
\end{equation}
where $ {\bf{H}} = \left( {\begin{array}{*{20}c}
   {{\bf{h}}_c } & {{\bf{h}}_d }  \\
\end{array}} \right)$ is the channel response from the BS to users,
and $\rho  = \left\| {\left( {{\bf{HH}}^H  + \frac{{N_0 }}{{P_B
}}{\bf{I}}} \right)^{ - 1} {\bf{h}}_c } \right\|$ is a normalization
factor so that ${\bf{W}}^H {\bf{W}} = 1$.

In this scheme, our objective is to maximize the system sum rate
which is expressed as
\begin{equation}
R = \log _2 (1 + SINR_c ) + \log _2 (1 + SINR_d ).
\end{equation}
In addition, the D2D transmit power $P_d$ also need to satisfy the
SINR threshold of both cellular and D2D links, i.e.,
\begin{equation}
SINR_c  = \frac{{P_B \left\| {{\bf{h}}_c^H {\bf{W}}} \right\|^2
}}{{P_d h_{dc}^2  + N_0 }} \ge \beta _c,
\end{equation}
\begin{equation}
SINR_d  = \frac{{P_d h_{dd}^2 }}{{P_B \left\| {{\bf{h}}_d^H
{\bf{W}}} \right\|^2  + N_0 }} \ge \beta _d,
\end{equation}
where $\beta_c$ and $\beta_d$ are the SINR minimum threshold of
cellular user and D2D receiver, respectively. Thus, we can summarize
the objective function as
\begin{equation}
\max~R = \hspace{-0.1cm}\log _2 \hspace{-0.1cm} \left( {1 + \frac{{P_B \left\|
{{\bf{h}}_c^H {\bf{W}}} \right\|^2 }}{{P_d h_{dc}^2  + N_0 }}}
 \right) \hspace{-0.1cm}+ \log _2\hspace{-0.1cm} \left( {1 + \frac{{P_d h_{dd}^2 }}
 {{P_B \left\| {{\bf{h}}_d^H {\bf{W}}} \right\|^2  + N_0 }}}
 \right),
\end{equation}
\vspace{-0.5cm}
\begin{align}
&subject~to~\nonumber \\
&\left( {P_B \left\| {{\bf{h}}_d^H {\bf{W}}} \right\|^2 \hspace{-0.1cm} + \hspace{-0.1cm}N_0 } \right)\beta _d h_{dd}^{ - 2} \hspace{-0.1cm}\le \hspace{-0.1cm}P_d
\hspace{-0.1cm}\le \hspace{-0.1cm}\min \left( {\left( {P_B \left\| {{\bf{h}}_c^H {\bf{W}}} \right\|^2 \beta _c^{ - 1} \hspace{-0.1cm} -\hspace{-0.1cm} N_0 } \right)h_{dc}^{ - 2} ,P_{\max } }\hspace{-0.1cm} \right),
\end{align}
where $P_{max}$ is the maximum transmit power that UE supports.


Next, we give some
simulation results and related analysis. Throughputs of cellular,
D2D users and the whole system are our main performance metrics. And
we compare some different interference management schemes on the
system performance, including:
\begin{enumerate}
\item Proposed PC \& BF: joint beamforming and power control;
\item No PC \& BF: only beamforming is applied, and power is fixed;
\item PC \& no BF: only power control is applied;
\item No PC \& no BF: neither power control nor beamforming is
applied.
\end{enumerate}
The simulation parameters are listed in Table
\ref{tab:beam_parameter}.


\begin{table*}
\begin{center}
\caption{Main Simulation Parameters}\label{tab:beam_parameter}
\begin{tabular}{|l|l|}
\hline \bf{Parameter} & \bf{Value}\\ \hline Cellular & Isolated cell, 1-sector \\
\hline System area & User devices are distributed in a \\
& hexagonal cell with 600m radius.\\
\hline Noise spectral density & -174dBm/Hz \\
\hline System bandwidth & 20MHz \\
\hline Sub-carrier bandwidth & 15kHz \\
\hline Sub-carrier number of each user & 64 \\
\hline Cluster radius (D2D user scattering) & 50m \\
\hline Minimum SINR & 5dB (both cellular and D2D)\\
\hline Number of cellular users (channels) & 1 \\
\hline Number of D2D pairs & 1/1:4 \\
\hline Device transmit power upper bound & 23dBm\\
\hline BS total transmit power & 46dBm\\
\hline Channel Model & WINNER II \\ \hline
\end{tabular}
\end{center}
\end{table*}

\begin{figure}[!ht]
\centering
\includegraphics[height=3.3in]{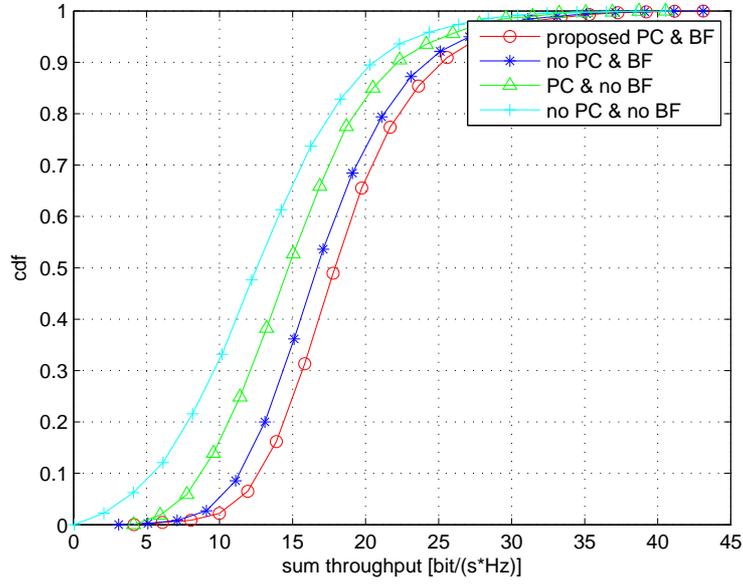}
\caption{System throughput distribution with different interference
management schemes.} \label{fig:beam_sumrate}
\end{figure}

\begin{figure}[!ht]
\centering
\includegraphics[height=3.3in]{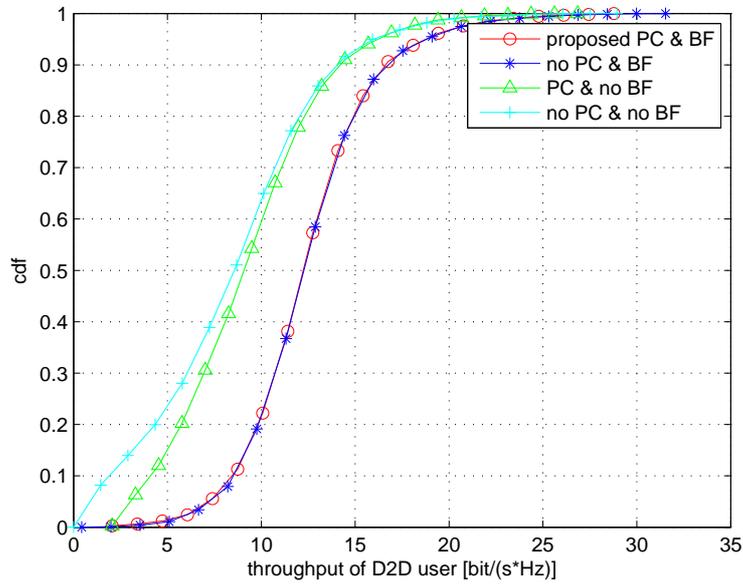}
\caption{Throughput distribution of D2D communication with different
interference management schemes.} \label{fig:beam_D2Drate}
\end{figure}

\begin{figure}[!ht]
\centering
\includegraphics[height=3.4in]{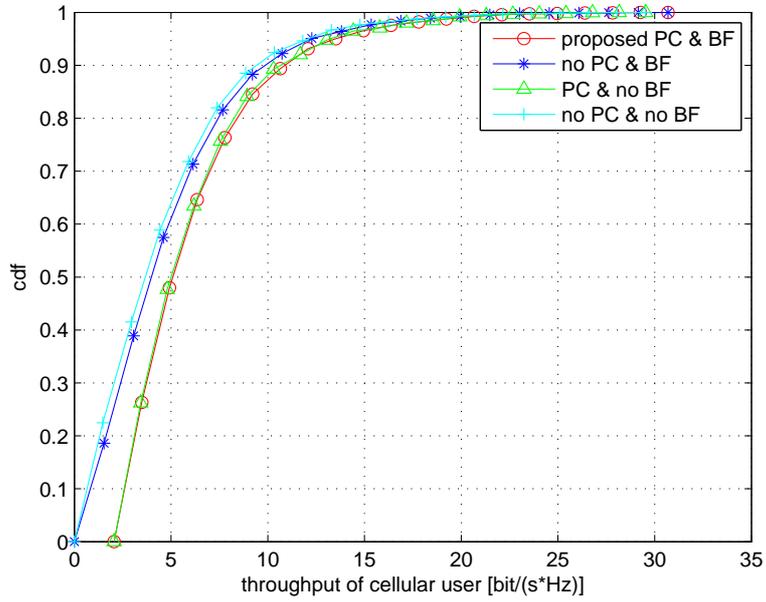}
\caption{Throughput distribution of cellular communication with
different interference management schemes.} \label{fig:beam_celrate}
\end{figure}

\begin{figure}[!ht]
\centering
\includegraphics[height=3.5in]{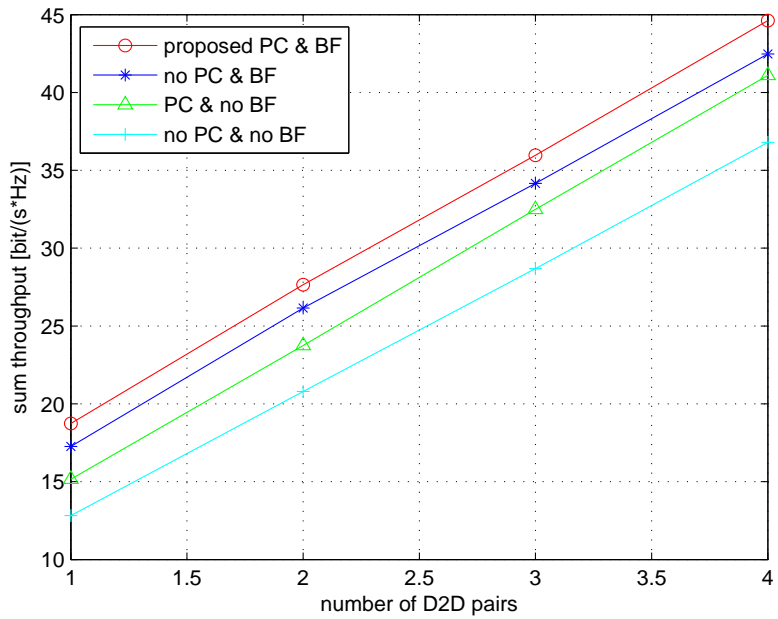}
\caption{System throughput with different number of D2D pairs.}
\label{fig:beam_sumrate_D}
\end{figure}

\begin{figure}[!ht]
\centering
\includegraphics[height=3.5in]{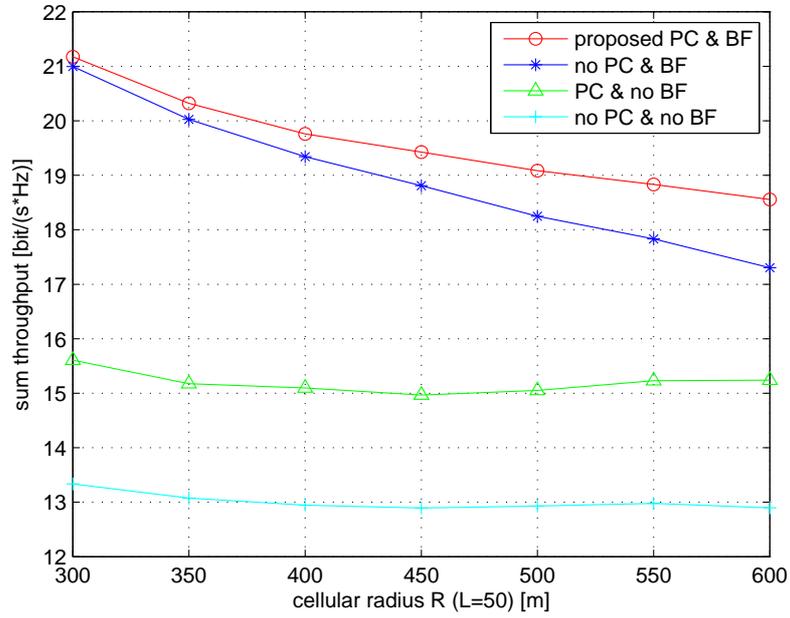}
\caption{System throughput with different cellular radius.}
\label{fig:beam_sumrate_R}
\end{figure}

\begin{figure}[!ht]
\centering
\includegraphics[height=3.4in]{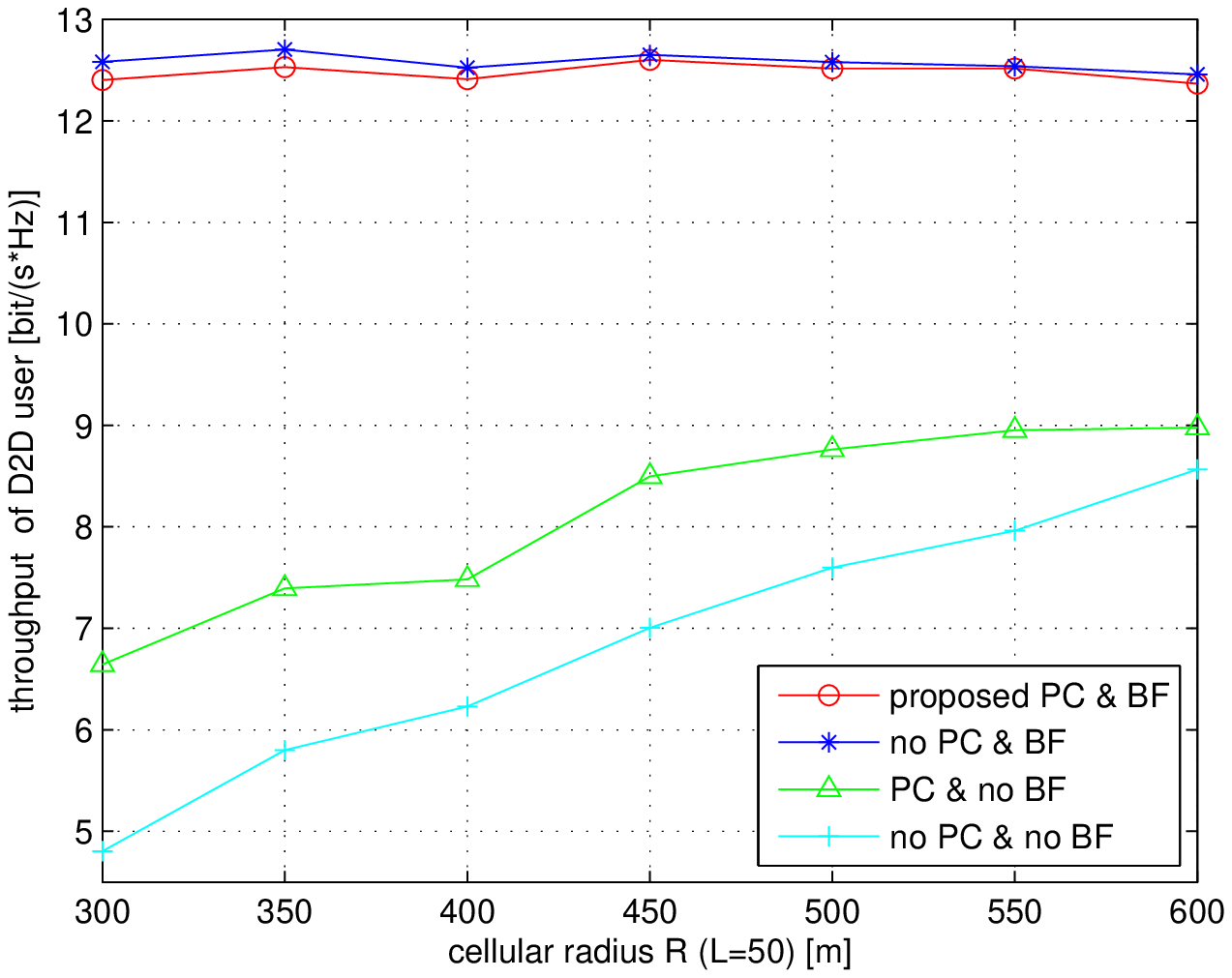}
\caption{Throughput of D2D communication with different cellular
radius.} \label{fig:beam_D2Drate_R}
\end{figure}

Fig. \ref{fig:beam_sumrate} shows the system throughput distribution
with different schemes. Obviously, joint beamforming and power
control scheme makes the whole system performance better. Fig.
\ref{fig:beam_D2Drate} and Fig. \ref{fig:beam_celrate} give the
throughput distribution of D2D and cellular user. On one side,
beamforming makes the performance of D2D communication better as
SLNR criteria constrains the interference from the BS to D2D
receiver. On the other side, power control makes the performance of
cellular communication better as it limits the interference from D2D
to cellular user.

For multiple D2D pairs, we consider the objective function the same
as that of single pair for simplicity. Fig. \ref{fig:beam_sumrate_D}
is the result of the system throughput with different number of D2D
pairs. We can see that the proposed scheme gives the optimal
performance.

Fig. \ref{fig:beam_sumrate_R} gives the system throughput with
different cellular radius. When the radius is small, beamforming
prevents interference from BS to D2D effectively which makes the
performance gain obvious. When the radius increases, the received
signal power of cellular user decreases. Thus, power control becomes
more important to prevent interference from D2D.

Fig. \ref{fig:beam_D2Drate_R} shows the throughput of D2D user with
different cellular radius. We find that power control does not bring
strong fading to D2D communication. In fact, power control includes
the mechanism that guarantees SINR of D2D being above the threshold,
and the objective function of power control is the system
throughput. We focus on the top two lines, and find the gap between
the proposed scheme and the other one is comparably obvious when the
cellular radius is small. Because cellular communication is dominant
when the radius is small, system throughput would get profit from
D2D power reducing. Thus, fading of D2D link becomes obvious.


This section has proposed a joint beamforming and power control
scheme that aims to maximize the system sum rate while guarantees
the performance of both cellular and D2D connections. The BS sets
SINR threshold for D2D and cellular links, and the value can be
tuned to meet corresponding performance requirements. The BS carries
out beamforming to avoid D2D from excessive interference. D2D
transmit power is calculated by the BS based on maximizing the
system sum rate. Also, the BS decides whether the calculated D2D
transmit power available according to SINR threshold of cellular and
D2D links.

\chapter{Radio Resource Management}\label{chap:allocation}
An innovative resource allocation
scheme is proposed to improve the performance of mobile
peer-to-peer, i.e., device-to-device (D2D), communications as an
underlay in the downlink (DL) cellular networks. To optimize the
system sum rate over the resource sharing of both D2D and cellular
modes, we introduce a reverse iterative combinatorial auction as the
allocation mechanism. In the auction, all the spectrum resources are
considered as a set of resource units, which as bidders compete to
obtain business while the packages of the D2D pairs are auctioned
off as goods in each auction round. We first formulate the valuation
of each resource unit, as a basis of the proposed auction. And then
a detailed non-monotonic descending price auction algorithm is
explained depending on the utility function that accounts for the
channel gain from D2D and the costs for the system. Further, we
prove that the proposed auction-based scheme is cheat-proof, and
converges in a finite number of iteration rounds. We explain
non-monotonicity in the price update process and show lower
complexity compared to a traditional combinatorial allocation. The
simulation results demonstrate that the algorithm efficiently leads
to a good performance on the system sum rate.

\begin{figure}[!ht]
\centering
\includegraphics[height=3.7in]{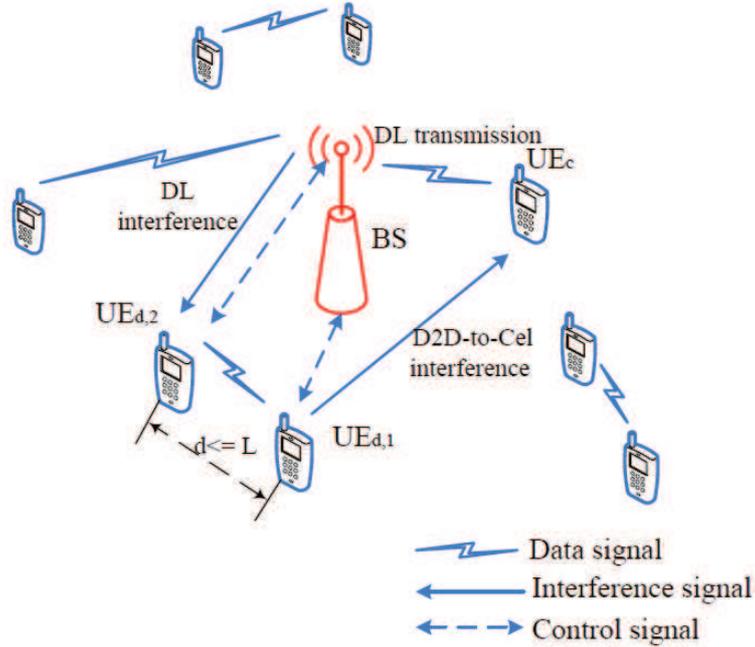}
\caption{System model of D2D communication underlaying cellular
networks with downlink resource sharing.} \label{fig:1}
\end{figure}

\section{Basics of Model and Auction Mechanism}\label{sec:system_model}

In this section, we introduce the system model for D2D underlay
communication. The scenario of multiple D2D and cellular users is
first described, and then, the expression of system sum rate is
given.

{\bf{1) Scenario Description}}

A model of a single cell with multiple users is considered. As shown
in Fig. \ref{fig:1}, UEs with data signals between each other are in
the D2D communication mode while UEs that transmit data signals with
the BS keep in the traditional cellular mode. Each user is equipped
with a single omnidirectional antenna. The locations of cellular
users and D2D pairs are randomly set and traversing the whole cell.
Without loss of generality, we employ the uniform distribution to
describe the user locations which is proposed for system simulation
in \cite{ITU}. Notice that from stochastic geometry with for Poisson
distributions, the users are uniformly located as well if the number
of users is known \cite{Haenggi2009}. For simplicity and clarity, we
illustrate co-channel interference scenario involving three UEs
(UE$_c$, UE$_{d,1}$ and UE$_{d,2}$), and omit the interference and
control signal signs among others. UE$_c$ is a traditional cellular
user that is distributed uniformly in the cell. UE$_{d,1}$ and
UE$_{d,2}$ are close enough to satisfy the distance constraints of
D2D communication, and at the same time they also have communicating
demands. One member of the D2D pair UE$_{d,1}$ is distributed
uniformly in the cell, and the position of the other member
UE$_{d,2}$ follows a uniform distribution inside a region at most
$L$ from UE$_{d,1}$.

The existing researches \cite{Xu2010,Yu_TWC} confirm that with power
control or resource scheduling mechanism, the inter-cell
interference can be managed efficiently. Therefore, we place an
emphasis on the intra-cell interference, which is due to resource
sharing of D2D and cellular communication. Generally speaking, the
session setup of D2D communication requires the following steps
\cite{Doppler2009}:
\begin{enumerate}
\item A request of communicating is initiated by one UE pair.
\item The system detects traffic originating from and destined to the
UE in the same subnet.
\item If the traffic fulfills a certain
criterion (e.g., data rate), the system considers the traffic as
potential D2D traffic.
\item The BS checks if D2D communication
offers higher throughput.
\item If both UEs are D2D capable and D2D mode offers higher
throughput, the BS may set up a D2D bearer.
\end{enumerate}
The cross-layer processes of resource control can be contained in
the above steps, and be generally summarized as: the transmitters
(both cellular and D2D users) send detecting signals. Then CSI would
be obtained by corresponding receivers and be feedback to the
control center (e.g. the BS). The power control and spectrum
allocation are conducted based on certain principles. Finally, the
BS sends control signals to users according to allocation results.

Even if the D2D connection setup is successful, the BS still
maintain detecting if UE should be back to the cellular
communication mode. Furthermore, the BS maintains the radio resource
control for both cellular and D2D communication. Based on these
communication features, we mainly focus on assigning
cellular resources to D2D communication.

Here, we consider a scenario of sharing downlink (DL)
resource of the cellular network as shown in Fig.~\ref{fig:1}. We
assume UE$_{d,1}$ is the transmitter of the D2D pair sharing the
same sub-channel with the BS, and thus, UE$_{d,2}$ as the D2D
receiver receives interference from the BS. Also, the cellular
receiver UE$_c$ is exposed to interference from UE$_{d,1}$. In
addition, the D2D users feed back the CSI to the BS, whereas the BS
transmits control signals to the D2D pair, in the way that the
system achieves D2D power control and resource allocation.

During the DL period of the cellular system, both cellular and D2D
users receive interference as they share the same sub-channels.
Here, we assume that any cellular user's resource blocks (RBs) can
be shared with multiple D2D pairs and each pair can use more than
one user's RBs for transmitting. We assume the numbers of cellular
users and D2D pairs in the model are $C$ and $D$, respectively.
During the DL period, the BS transmits signal $x_c$ to the $c$-th
($c=1,2,...,C$) cellular user, and the $d$-th ($d=1,2,...,D$) D2D
pair uses the same spectrum resources transmitting signal $x_d$. The
received signals at UE $c$ and D2D receiver $d$ are written as
\vspace{-0.2cm}
\begin{equation}
y_c  = \sqrt {P_B } h_{Bc} x_c  + \sum\limits_d {\beta _{cd} \sqrt
{P_d } h_{dc} x_d }  + n_c,
\end{equation}
\vspace{-0.2cm}
\begin{equation}
y_d  \hspace{-0.1cm}= \hspace{-0.1cm}\sqrt {P_d } h_{dd} x_d + \sqrt
{P_B } h_{Bd} x_c
 \hspace{-0.1cm} +  \hspace{-0.1cm} \sum\limits_{d'} {\beta _{dd'}
\sqrt {P_{d'} } h_{d'd} x_{d'} } \hspace{-0.1cm} +  n_d,
\end{equation}
where $P_B$, $P_d$ and $P_{d'}$ are the transmit power of BS, D2D
transmitter $d$, $d'$, respectively. $h_{ij}$ is the channel
response of the $i-j$ link that is from equipments $i$ to $j$. $n_c$
and $n_d$ are the additive white Gaussian noise (AWGN) at the
receivers with one-sided power spectral density (PSD) $N_0$.
$\beta_{cd}$ represents the presence of interference satisfying
$\beta_{cd}=1$ when RBs of UE $c$ are assigned to UE $d$, otherwise
$\beta_{cd}=0$. As a cellular user can share resources with multiple
D2D pairs, it also satisfies  $0 \le \sum\nolimits_d {\beta _{cd} }
\le D$. Similarly, $\beta_{dd'}$ represents the presence of
interference between D2D pairs $d$ and $d'$.

The channel is modeled as the Rayleigh fading
channel, and thus, the channel response follows the independent
identical complex Gaussian distribution. In addition, the free space
propagation path-loss model, $P = P_0  \cdot \left( {{d
\mathord{\left/ {\vphantom {d {d_0 }}} \right.
\kern-\nulldelimiterspace} {d_0 }}} \right)^{ - \alpha}$, is used
where $P_0$ and $P$ represent signal power measured at $d_0$ and $d$
away from the transmitter, respectively. $\alpha$ is the path-loss
exponent. Hence, the received power of each link can be expressed as
\vspace{-0.2cm}
\begin{equation}
P_{r,ij}  = P_i  \cdot h_{ij}^2  = P_i  \cdot \left( {d_{ij} }
\right)^{ - \alpha}  \cdot h_0^2,
\end{equation}
where $P_{r,ij}$ and $d_{ij}$ are the received power and the
distance of the $i-j$ link, respectively. $P_i$ represents the
transmit power of equipment $i$ and $h_0$ is the complex Gaussian
channel coefficient that obeys the distribution $\mathcal
{C}\mathcal {N}(0,1)$. And we simplify the received power at $d_0=1$
equals the transmit power.

{\bf{2) System Sum Rate}}

For the purpose of maximizing the network
capacity, the signal to interference plus noise ratio (SINR) should
be considered as an important indicator. The SINR of user $j$ is
\vspace{-0.1cm}
\begin{equation}\label{eq:SINR}
\gamma _j  = \frac{{P_i h_{ij}^2 }}{{P_{{\mathop{\rm int}} ,j}  +
N_0 }},
\end{equation}
\vspace{-0.1cm}where $P_{{\mathop{\rm int}} ,j}$ denotes the
interference signal power received by user $j$, and $N_0$ accounts
for the terminal noise at the receiver.

Determined by the Shannon capacity formula, we can calculate the
channel rate corresponding to the SINR of cellular and D2D users. As
cellular users suffer interference from D2D communicating that
sharing the same spectrum resource, the interference power of
cellular user $c$ is \vspace{-0.1cm}
\begin{equation}\label{eq:P_int,c}
P_{{\mathop{\rm int}} ,c}  = \sum\limits_d {\beta _{cd} P_d
h_{dc}^2}.
\end{equation}
While the interference of D2D receiver $d$ is from both BS and D2D
users that are assigned the same resources to, the interference
power of user $d$ can be expressed as \vspace{-0.1cm}
\begin{equation}\label{eq:P_int,d}
P_{{\mathop{\rm int}} ,d}  = P_B h_{Bd}^2  + \sum\limits_{d'} {\beta
_{dd'} P_{d'} h_{d'd}^2 }.
\end{equation}
\vspace{-0.2cm}

Based on (\ref{eq:SINR}), (\ref{eq:P_int,c}), and
(\ref{eq:P_int,d}), we can obtain the channel rate of cellular user
$c$ and D2D receiver $d$ as \vspace{-0.2cm}
\begin{equation}\label{eq:R_c}
R_c  = \log _2 \left( {1 + \frac{{P_B h_{Bc}^2 }}{{\sum\limits_d
{\beta _{cd} P_d h_{dc}^2 }  + N_0 }}} \right),
\end{equation}
\vspace{-0.2cm}
\begin{equation}\label{eq:R_d}
R_d  =  \log _2 \left( {1 + \frac{{P_d h_{dd}^2 }}{{P_B h_{Bd}^2 +
\sum\limits_{d'} {\beta _{dd'} P_{d'} h_{d'd}^2 }  + N_0 }}}
\right),
\end{equation}
respectively. Here, $d \ne d'$. So $\sum\limits_{d'} {{\beta
_{dd'}}{P_{d'}}{h_{d'd}^2}}$ represents the interference from the
other D2D pairs that share spectrum resources with pair $d$.

The DL system sum rate can be defined as \vspace{-0.2cm}
\begin{equation}\label{eq:Re}
\Re  = \sum\limits_{c = 1}^C {\left( {R_c  + \sum\limits_{d = 1}^D
{\beta _{cd} R_d } } \right)}.
\end{equation}
In the next subsection, we formulate the problem of designing
$\beta_{cd}$ for each D2D pair as an optimization issue of
maximizing $\Re$.

In this subsection, we introduce two concepts: valuation model and
utility function, which are bases of the auction mechanism. Also,
some definitions are given.

{\bf{3) Valuation Model}}

As D2D communication shares the same spectrum resources with
cellular communication at the same time slot, the co-channel
interference should be limited as much as possible to optimize the
system performance. The radio signals experience different degrees
of fading, and thus, the amount of interference depends on transmit
power and spatial distances. Accordingly, we focus on assigning
appropriate resource blocks (RBs) occupied by cellular users to D2D
pairs in order to minimize interference to achieve a higher system
sum rate. Next, we formulate the relation between the allocation
result and the rate of the shared channel. The relation can be
defined as a value function whose target value is the channel rate.

We define $\mathcal{D}$ as a package of variables representing the
index of D2D pairs that share the same resources. We assume the
total pairs can form $N$ such packages. Thus, if the members of the
$k$-th ($k=1,2,...,N$) D2D user package share resources with
cellular user $c$, the channel rates of UE $c$ and D2D pair $d$ ($d
\in \mathcal{D}_k$) can be written as \vspace{-0.2cm}
\begin{equation}\label{eq:R_ck}
{R_c^k} = {\log _2}\left( {1 +
\frac{{{P_B}h_{Bc}^2}}{{\sum\limits_{d \in \mathcal{D}_k}
{{P_d}h_{dc}^2}  + {N_0}}}} \right),
\end{equation}
\vspace{-0.2cm}
\begin{equation}\label{eq:R_dk}
{R_d^k}  =  {\log _2} \hspace{-0.1cm} \left( {1 +
\frac{{{P_d}h_{dd}^2}}{{{P_B}h_{Bd}^2 + \hspace{-0.1cm}
\sum\limits_{d' \in \mathcal{D}_k - \left\{ d \right\}}
\hspace{-0.1cm} {{P_{d'}}h_{d'd}^2}  + {N_0}}}} \right),
\end{equation}
respectively. The rate of the operating channel shared by UE $c$ and
D2D pairs $d \in \mathcal{D}_k$ is \vspace{-0.2cm}
\begin{equation}\label{eq:R}
{R_{ck}} = {R_c^k} + \sum\limits_{d \in \mathcal{D}_k} {{R_d^k}}.
\end{equation}

According to (\ref{eq:R_ck}) $\sim$ (\ref{eq:R}), when assigning
resources of UE $c$ to the $k$-th package of D2D pairs, the channel
rate is given by \vspace{-0.2cm}
\begin{align}\label{eq:V_ck}
V_c(k)\hspace{-0.1cm}= & \log _2 \left( {1 + \frac{{P_B h_{Bc}^2
}}{{\sum\limits_{d
\in \mathcal {D}_k } {P_d h_{dc}^2 }  + N_0 }}} \right) \nonumber \\
 & \hspace{-0.1cm} + \hspace{-0.15cm} \sum\limits_{d \in \mathcal {D}_k }
 \hspace{-0.1cm} {\log _2 \hspace{-0.1cm}
\left( {\hspace{-0.1cm} 1 \hspace{-0.1cm} + \hspace{-0.1cm}
\frac{{P_d h_{dd}^2 }}{{P_B h_{Bd}^2 \hspace{-0.1cm} +
\hspace{-0.15cm} \sum\limits_{d' \in \mathcal {D}_k \hspace{-0.05cm}
- \hspace{-0.05cm}\left\{ d \right\}} \hspace{-0.15cm} {P_{d'}
h_{d'd}^2 } \hspace{-0.1cm} + N_0 }}\hspace{-0.1cm}} \right)}.
\end{align}

In the proposed reverse I-CA mechanism, we consider spectrum
resources occupied by cellular user $c$ as one of the bidders who
submit bids to compete for the packages of D2D pairs, in order to
maximize the channel rate. It is obvious that there would be a gain
of channel rate owing to D2D communicating as long as the
contribution to data signals from D2D is larger than that to
interference signals. Considering the constraint of a positive
value, we define the performance gain as \vspace{-0.2cm}
\begin{equation}\label{eq:value}
v_{c}(k) =max\left(V_{c}(k) - V_{c},0\right),
\end{equation}
which is the private valuation of bidder $c$ for the package of D2D
pairs $\mathcal{D}_k$. Here, $V_{c}$ denotes the channel rate of UE
$c$ without co-channel interference and is obtained by
\vspace{-0.2cm}
\begin{equation}\label{eq:V_c}
V_c= {\log _2}\left( {1 + \frac{{{P_B}h_{Bc}^2}}{{{N_0}}}} \right).
\end{equation}
Thus, we have the following definition:
\newtheorem{definition}{Definition}
\begin{definition}
 A \textbf{\emph{valuation~model}} $\mathcal
{V} = \left\{ {v_c \left( k \right)} \right\}$ is a set of the
private valuations of all bidders $c \in \left\{ {1,2, \ldots ,C}
\right\}$ for all packages $\mathcal{D}_k \subseteq \left\{ {1,2,
\ldots ,D} \right\}$ ($k \in \left\{ 1,2, \ldots ,N \right\}$).
\end{definition}

{\bf{4) Utility Function}}

In the auction, the cellular resource denoted by $c$ obtains a gain
by getting a package of D2D communications. However, there exists
some cost such as control signals transmission and information
feedback during the access process. We define the cost as a pay
price.

\begin{definition}
The price to be payed by the bidder $c$ for
the package $\mathcal{D}_k$ is called \textbf{\emph{pay~price}}
denoted by $\mathcal {P}_c(k)$. The unit price of item $d$ ($\forall
k$, $d \in \mathcal{D}_k$) can be denoted by $p_c(d)$.
\end{definition}

Here, we consider linear anonymous prices \cite{Pikovsky2008}, which
means the prices are linear if the price of a package is equal to
the sum of the prices of its items, and the prices are anonymous if
the prices of the same package for different bidders are equal.
Thus, we have \vspace{-0.2cm}
\begin{equation}\label{eq:price}
\mathcal {P}_c(k)= \sum\limits_{d \in \mathcal {D}_k} {p_c \left( d
\right)}  = \sum\limits_{d \in \mathcal {D}_k} {p\left( d \right)}
,\forall c = 1,2, \ldots ,C.
\end{equation}
Therefore, the payment of a bidder is determined by the unit price
$p(d)$ and the size of bidding package $\mathcal {D}_k$.

\begin{definition}
 \textbf{\emph{Bidder~utility}}, or named
\textbf{\emph{bidder~payoff}} $\mathcal {U}_c(k)$ expresses
satisfaction of bidder $c$ getting package $\mathcal {D}_k$. The
bidder utility can be defined as \vspace{-0.2cm}
\begin{equation}\label{eq:U}
\mathcal {U}_c(k)= v_c \left( k \right) -\mathcal {P}_c(k).
\end{equation}
\end{definition}

Based on (\ref{eq:value}), (\ref{eq:price}), (\ref{eq:U}), $V_c(k)$
in (\ref{eq:V_ck}) and $V_c$ in (\ref{eq:V_c}), we can obtain the
utility of bidder $c$ as \vspace{-0.2cm}
\begin{align}\label{eq:utility_JSAC}
\mathcal {U}_c (k) = & \log _2 \left( {1 + \frac{{P_B h_{Bc}^2
}}{{\sum\limits_{d \in \mathcal {D}_k } {P_d h_{dc}^2 }  + N_0 }}}
\right) \nonumber \\
& + \hspace{-0.1cm} \sum\limits_{d \in \mathcal {D}_k
}\hspace{-0.1cm} {\log _2 \hspace{-0.1cm} \left( {1 + \frac{{P_d
h_{dd}^2 }}{{P_B h_{Bd}^2  + \hspace{-0.1cm}
 \sum\limits_{d' \in \mathcal {D}_k  - \left\{ d \right\}} \hspace{-0.1cm} {P_{d'}
h_{d'd}^2 }  + N_0 }}} \right)}
\nonumber \\
& - \log _2 \left( {1 + \frac{{P_B h_{Bc}^2 }}{{N_0 }}} \right) -
\sum\limits_{d \in \mathcal {D}_k } {p\left( d \right)}.
\end{align}

In order to describe the allocation outcome intuitively, we give the
definition below.

\begin{definition}
The result of the auction is a spectrum
allocation denoted by $\mathcal {X}=\left(X_1 ,X_2 , \ldots ,X_C
\right)$, which allocates a corresponding package to each bidder.
And the allocated packages may not intersect ($\forall i,j,~X_i \cap
X_j = \emptyset$).
\end{definition}

We consider a set of binary variables $\left\{ x_c(k)\right\}$ to
redefine the allocation as \vspace{-0.2cm}
\begin{equation}\label{eq:x_ck}
x_c \left( k \right) = \left\{ {\begin{array}{*{20}c}
   {1}, & {\mbox{if}~X_c  = \mathcal {D}_k},  \\
   {0}, & {\mbox{otherwise}}.  \\
\end{array}} \right.
\end{equation}
According to the literature, two most popular bidding languages are
exclusive-OR (XOR), which allows a bidder to submit multiple bids
but at most one of the bids can win, and additive-OR (OR), which
allows one to submit multiple bids and any non-intersecting
combination of the bids can win. We consider the XOR bidding
language. Thus, (\ref{eq:x_ck}) satisfies $
\sum\nolimits_{k = 1}^N {x_c \left( k \right)}  \le 1$ and
$\sum\nolimits_{k = 1}^N {x_c \left( k \right)}  = 0 \Rightarrow
X_c= \emptyset$ for $\forall c=1,2, \ldots ,C$. If given an
allocation $\mathcal {X}$, the total bidder utility of all bidders
can be denoted as $\mathcal {U}_{all}(\mathcal {X})=\sum\nolimits_{c
= 1}^C {\sum\nolimits_{k = 1}^N {x_c(k)\mathcal {U}_c(k)}}$.
Furthermore, the auctioneer revenue is denoted as $\mathcal
{A}(\mathcal {X})= \sum\nolimits_{c = 1}^C {\sum\nolimits_{k = 1}^N
{x_c(k)\mathcal {P}_c(k)}} $, which is usually considered to be the
auctioneer's gain.

\section{Resource Allocation Algorithm Based on Reverse Iterative
Combinatorial Auction}

In this section, we formulate the resource allocation for D2D
communication as a reverse I-CA game. First, we introduce some
concepts of the I-CA games. Then, we investigate details of the
allocation process.

{\bf{1) Reverse Iterative Combinatorial Auction Game}}

As mentioned before, we assume the total spectrum resources are
divided into $C$ units with each one already providing communication
service to one cellular user. By the auction game, the spectrum
units are assigned to $N$ user packages $\left\{
\mathcal{D}_1,\mathcal{D}_2,...,\mathcal{D}_N \right\}$, with each
package consisting of at least one D2D pair. In other words, the
spectrum units compete to obtain D2D communication for improving the
channel rate.

During an I-CA game, the auctioneer announces an initial price for
each item, and then, the bidders submit to the auctioneer their bids
at the current price. As long as the demand exceeds the supply, or
on the contrary that the supply exceeds the demand, the auctioneer
updates (raises or reduces) the corresponding price and the auction
goes to the next round.

Obviously, it can be shown that the overall gain, which includes the
total gain of the auctioneer and all bidders does not depend on the
pay price, but equals to the sum of the allocated packages'
valuations, i.e., \vspace{-0.2cm}
\begin{align}
~& \mathcal {A}\left( \mathcal {X} \right) + \mathcal {U}_{all}
\left( \mathcal {X} \right) \nonumber \\
\vspace{-0.1cm} = & \sum\limits_{c = 1}^C {\sum\limits_{k = 1}^N
{x_c \left( k \right)\mathcal {P}_c \left( k \right)} } +
\sum\limits_{c = 1}^C {\sum\limits_{k = 1}^N {x_c
\left( k \right)\mathcal {U}_c \left( k \right)} } \nonumber\\
\vspace{-0.2cm} = & \sum\limits_{c = 1}^C {\sum\limits_{k = 1}^N
{x_c \left( k \right)\mathcal {P}_c \left( k \right)} } +
\sum\limits_{c = 1}^C {\sum\limits_{k = 1}^N {x_c \left( k
\right)[\left( {v_c \left( k
\right) - \mathcal {P}_c \left( k \right)} \right)]} } \nonumber\\
\vspace{-0.2cm} = & \sum\limits_{c = 1}^C {\sum\limits_{k = 1}^N
{x_c \left( k \right)v_c \left( k \right)} }.
\end{align}
As our original intention, we employ the I-CA to obtain an efficient
allocation for spectrum resources.

\begin{definition}
 An \textbf{\emph{efficient allocation}}
denoted by $\tilde{\mathcal
{X}}=(\tilde{X}_1,\tilde{X}_2,\ldots,\tilde{X}_C)=\left\{
\tilde{x}_c(k) \right\}$ is an allocation that maximizes the overall
gain.
\end{definition}

Given the private bidder valuations for all possible packages in
(\ref{eq:value}), an efficient allocation can be obtained by solving
the combinatorial allocation problem (CAP).

\begin{definition}
The \textbf{\emph{Combinatorial Allocation
Problem (CAP)}}, also sometimes referred as \textbf{\emph{Winner
Determination Problem (WDP)}}, leads to an efficient allocation by
maximizing the overall gain: \hspace{-0.6cm}$\mathop {\max }\limits_{\mathcal {D}_k
= X_c  \in \mathcal {X} \in \mathscr{X}} \hspace{-0.1cm}\sum\nolimits_{c = 1}^C \hspace{-0.1cm}
{v_c \hspace{-0.1cm} \left( k \right)}$, where $\mathscr{X}$ denotes the set of all
possible allocations.
\end{definition}

An integer linear program using the binary decision variables
$\left\{ x_c(k)\right\}$ is formulated for the CAP as
\vspace{-0.2cm}
\begin{align}\label{eq:CAP}
&\max \sum\limits_{c = 1}^C {\sum\limits_{k = 1}^N {x_c \left( k
\right)v_c \left( k \right)} }, \\
\vspace{-0.2cm} s.t.~~~&\sum\limits_{k=1}^N {x_c(k)} \leq 1, \forall
c \in \left\{ 1,2, \ldots, C \right\}, \nonumber\\
\vspace{-0.2cm} &\sum\limits_{\mathcal {D}_k :d \in \mathcal {D}_k }
{\sum\limits_{c = 1}^C {x_c (k)}  \le 1} , \forall d \in \left\{
1,2, \ldots, D \right\}. \nonumber
\end{align}
The objective function maximizes the overall gain, and the
constraints guarantee: 1) at most one package can be allocated to
each bidder; 2) each item cannot be sold more than once.

In fact, there might be multiple optimal solutions of the CAP with
the same objective function. From the auctioneer's point of view,
tie-breaking rules are needed to determine which of the optimal
solutions is selected. In a real auction, the auctioneer does not
know the private valuations of the bidders, neither can it solve the
NP-hard problem. To solve the CAP, the auctioneer selects the
winners on the basis of the submitted bids in each round. Therefore,
in case of the XOR bidding language, the WDP formulation is similar
to the CAP and the only difference is the objective function
\vspace{-0.2cm}
\begin{equation}\label{eq:WDP}
\max \sum\limits_{c = 1}^C {\sum\limits_{k = 1}^N {x_c \left( k
\right)\mathcal {P}_c^t \left( k \right)} },
\end{equation}
where $\mathcal {P}_c^t(k)$ represents the pay price of bidder $c$
for package $\mathcal {D}_k$ in round $t$.

Based on Definition 5, the overcome of a CA is not always efficient.
Here, we employ allocating efficiency as a primary measure to
benchmark auctions.

\begin{definition}
\textbf{\emph{Allocating efficiency}} in CAs
can be expressed as the ratio of the overall gain of the final
allocation to that of an efficient allocation \cite{Pikovsky2008}
\vspace{-0.2cm}
\begin{equation}
\mathcal {E}(\mathcal X)=\frac{{\mathcal {A}\left( \mathcal{X}
\right) + \mathcal{U}_{all} \left( \mathcal{X} \right)}}{{\mathcal
{A} \left(\tilde{\mathcal {X}} \right) + \mathcal{U}_{all} \left(
\tilde {\mathcal {X}} \right)}},
\end{equation}
which has $\mathcal{E}(\mathcal X) \in [0,1]$.
\end{definition}

{\bf{2) Algorithm for Resource Allocation}}\label{subsec:algorithm}

In this subsection, the details of the allocation scheme based on
reverse I-CA are introduced. We has modeled the D2D resource
allocation problem as a reverse I-CA game and gave the valuation
model, utility function and other important concepts. Many I-CA
designs, especially for the centralized I-CA design, are based on
ask prices. The price-based I-CA designs differ by the pricing
scheme and price update rules. In the proposed algorithm, linear
prices are used as mentioned in Section \ref{sec:system_model} for
they are easy to understand for bidders and convenient to
communicate in each auction round. Because of the interference from
D2D links, cellular channels should guarantee the performance of
cellular system before allowing the D2D access. Hence, we consider a
descending price criterion in the algorithm. Prices update by a
greedy mode that once a bidder submits a bid for items or packages
the corresponding prices are fixed, otherwise the prices are
decreased.

At the beginning of the allocation, the BS collects the location
information of all the D2D pairs. In addition, the round index
$t=0$, the initial ask price $p^0(d)$ for each item (D2D pair) $d$,
and the fixed price reduction $\Delta>0$ are set up. When the
initial prices are announced to all the bidders (i.e. spectrum
resources occupied by cellular UEs), each bidder submits bids, which
consist of its desired packages and the corresponding pay prices.
Jump bidding where bidders are allowed to bid higher than the
prices, is not allowed in our scheme, thus bidders always bid at the
current prices. According to the CAP proposed in Definition 6 and
the analysis about the WDP, we simplify the problem of maximizing
the overall gain as a process of collecting the highest pay price.
As a result, bidder $c$ bids for package $\mathcal {D}_k$ as long as
$\mathcal{U}_c(k)\geq 0$. Combining (\ref{eq:price}) and
(\ref{eq:U}), we have \vspace{-0.2cm}
\begin{equation}\label{eq:bid_condition}
v_c \left( k \right) \ge \mathcal{P}_c^t \left( k \right) =
\sum\limits_{d \in \mathcal{D}_k } {p^t \left( d \right)},
\end{equation}
where the round index $t \ge 0$. In this case, let $b_c^t(k)=\left\{
\mathcal{D}_k,\mathcal{P}_c^t(k) \right\}$ denote the submitted bid
at the end of round $t$, and $\mathcal{B}^t=\left\{
b_c^t(k)\right\}$ denotes all the bids. When
(\ref{eq:bid_condition}) is not satisfied, bid $b_c^t(k)=\left\{
\emptyset, 0 \right\}$. If $\exists d \in \mathcal{D}_k$ satisfies
$\forall b_c^t(k) \in \mathcal{B}^t, \mathcal{D}_k \notin b_c^t(k)$,
it reveals that the supply exceeds the demand. Then, the BS sets
$t=t+1$, $p^{t+1}(d)=p^t(d)-\Delta$ where $d$ is the over-supplied
item, and the auction moves on to the next round.

In a normal case, as long as the price of a package decreases below
a bidder's valuation for that package, the bidder submits a bid for
it. The BS allocates the package to the bidder, and fixes the
corresponding prices of items. At the same time, constrained by the
XOR bidding language, the bidder is not allowed to participate the
following auction rounds. As the asking prices decrease discretely
every round, it may exist a situation that more than one bidders bid
for packages containing the same items simultaneously. The BS
detects the bids of all the bidders: 1) it exists
$b_{c_1}^t(k)=b_{c_2}^t(k)\neq\left\{ \emptyset,0\right\}$ ($c_1
\neq c_2,k\in \left\{1,2,\ldots,N \right\}$); 2) it exists
$b_{c_1}^t(k_1)=\left\{
\mathcal{D}_{k_1},\mathcal{P}_{c_1}^t(k_1)\right\}$,
$b_{c_2}^t(k_2)=\left\{
\mathcal{D}_{k_2},\mathcal{P}_{c_2}^t(k_2)\right\}$ ($k_1\neq
k_2,c_1,c_2\in \left\{ 1,2,\ldots,C\right\}$) satisfying
$\mathcal{D}_{k_1}\cap\mathcal{D}_{k_2}\neq \emptyset$. If either of
the above conditions is satisfied, the overall demand exceeds supply
for at least one item. Then, the BS sets a fine tuning
$p^t(d)=p^t(d)+\delta$ where $d$ is the temporary over-demanded
item, and $\delta$ can be set by $\delta=\Delta/i$ where $i$ is an
integer factor that affects the convergence rate. The allocation can
be determined by multiple iterations.

The auction continues until all the D2D links are auctioned off or
every channel wins a package. Our algorithm is detailed in
Table~\ref{table_1}.

\begin{table}[!t]
\renewcommand{\arraystretch}{1.3}
\caption{The resource allocation algorithm} \label{table_1}
\centering
\begin{tabular}{p{100mm}}
\hline

$\ast$ \textbf{Initial State:}

\quad The BS collects the location information of all D2D pairs. The
valuation of the $c$-th resource unit for package $k$ is $v_c(k), c
= 1,2,\ldots,C, k = 1,2,\ldots,N$, which is given by
(\ref{eq:value}). The round index
$t=0$, and the initial price $P^0(d)$, the fixed price reduction $\Delta >0$ are set up.\\

$\ast$ \textbf{Resource Allocation Algorithm:}

1. Bidder $c$ submits bids $\left\{
\mathcal{D}_k,\mathcal{P}_c(k)\right\}$ depending on its utility.

\quad $\star$ bidder $c$ bids for package $\mathcal{D}_k$ as long as
$\mathcal{U}_c(k)\geq0$, which is represented by
(\ref{eq:bid_condition}).

\quad $\star$ If $\mathcal{U}_c(k)<0$, bidder $c$ submits $\left\{
\emptyset,0 \right\}$.

2. If $\exists d \in \mathcal{D}_k$ satisfies $\forall b_c^t(k) \in
\mathcal{B}^t, \mathcal{D}_k \notin b_c^t(k)$, the BS sets $t=t+1$,
$p^{t+1}(d)=p^t(d)-\Delta$ where $d$ is the over-supplied item, and
the auction moves on to the next round. Return to step 1.

3. The BS detects the bids of all the bidders:

\quad \quad 1) it exists $b_{c_1}^t(k)=b_{c_2}^t(k)\neq\left\{
\emptyset,0\right\}$ ($c_1 \neq c_2,k\in \left\{1,2,\ldots,N
\right\}$);

\quad \quad 2) it exists $b_{c_1}^t(k_1)=\left\{
\mathcal{D}_{k_1},\mathcal{P}_{c_1}^t(k_1)\right\}$,
$b_{c_2}^t(k_2)=\left\{
\mathcal{D}_{k_2},\mathcal{P}_{c_2}^t(k_2)\right\}$ ($k_1\neq
k_2,c_1,c_2\in \left\{ 1,2,\ldots,C\right\}$) satisfying
$\mathcal{D}_{k_1}\cap\mathcal{D}_{k_2}\neq \emptyset$.

4. If neither of the conditions in step 3 is satisfied, go to step
5. Otherwise, the overall demand exceeds supply for at least one
item. The BS sets $p^t(d)=p^t(d)+\delta$, and $\delta$ can be set by
$\delta=\Delta/i$ where $i$ is an integer factor. Return to step 1.

5. The allocation can be determined by repeating the above steps.
The auction continues until all D2D links are auctioned off
or every cellular channel wins a package.\\

\hline
\end{tabular}
\end{table}

\section{Analysis of the Auction-Based Resource Allocation
Algorithm}\label{sec:analysis}

In this section, we investigate the important properties of the
proposed auction-based resource allocation mechanism.

{\bf{1) Cheat-Proof}}

As the general definition, cheat-proof means that reporting the true
demand in each auction round is a best response for each bidder.
\newtheorem{proposition}{Proposition}
\begin{proposition}
The resource allocation algorithm based on
the reverse I-CA is cheat-proof.
\end{proposition}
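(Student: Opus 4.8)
The plan is to show that, with the bids of all other bidders and the current ask prices held fixed, no unilateral deviation from truthful reporting can strictly increase a bidder's payoff in any round. Here ``truthful reporting'' of bidder $c$ in round $t$ means submitting $b_c^t(k)=\{\mathcal{D}_k,\mathcal{P}_c^t(k)\}$ for a package with $\mathcal{U}_c(k)=v_c(k)-\mathcal{P}_c^t(k)\ge 0$ (equivalently $v_c(k)\ge\sum_{d\in\mathcal{D}_k}p^t(d)$) that maximizes $\mathcal{U}_c(k)$, and submitting $\{\emptyset,0\}$ otherwise. First I would fix this notion of true demand and then enumerate the three possible kinds of deviation, ruling each out in turn.

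The first case is \emph{over-reporting}: bidder $c$ bids for a package $\mathcal{D}_k$ with $\mathcal{U}_c(k)<0$. If this bid is not selected by the winner-determination problem (\ref{eq:WDP}), the outcome is unchanged. If it is selected, $c$ wins $\mathcal{D}_k$ and, because prices are linear and anonymous in (\ref{eq:price}) and the payment equals $\mathcal{P}_c^t(k)$, the realized payoff is $v_c(k)-\mathcal{P}_c^t(k)<0$, strictly worse than the payoff $0$ obtained by submitting $\{\emptyset,0\}$. Moreover, under the greedy price-update rule a bid only \emph{fixes} item prices and never lowers them, so such a bid cannot create a compensating price reduction elsewhere. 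Hence over-reporting is never profitable.

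The second case is \emph{under-reporting}: bidder $c$ withholds a bid for a package with $\mathcal{U}_c(k)\ge 0$. Since the algorithm uses a descending-price criterion — unmatched items have their prices cut by $\Delta$, and the only increases are the bounded fine-tuning steps $\delta=\Delta/i$ used to break ties — the price $\mathcal{P}_c^{t}(\mathcal{D}_k)=\sum_{d\in\mathcal{D}_k}p^t(d)$ that $c$ faces for that package does not trend upward, so $c$ loses nothing on price by bidding immediately. On the other hand, by the XOR bidding constraint and the allocation rule, once any other bidder's bid claims an item of $\mathcal{D}_k$ that item is fixed and removed, and $c$'s best attainable payoff involving it collapses to $0$; since $c$ already had a non-negative payoff available, withholding is weakly dominated, and strictly dominated whenever a competitor also demands an overlapping package. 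The same reasoning, together with the fact that $c$'s private valuations $v_c(\cdot)$ are fixed, disposes of the third case — \emph{mis-reporting} a suboptimal package $\mathcal{D}_{k'}$ with $\mathcal{U}_c(k')<\max_k\mathcal{U}_c(k)$ — since such a report yields a weakly smaller payoff when it wins and no price advantage otherwise.

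Combining the three cases shows that reporting true demand in each round is a best response for every bidder, which is exactly the claimed cheat-proof property. I expect the delicate point to be the under-reporting case: one must argue carefully that the descending-price schedule, together with the greedy ``fix-on-bid'' rule and the XOR constraint, makes postponing a profitable bid non-beneficial, and in particular that the tie-breaking increments $\delta$ do not reverse this monotone tendency. That is where I would be most careful, possibly invoking the price-monotonicity and convergence properties treated later in the chapter to make the ``prices do not work against a bidder who bids early'' statement precise.
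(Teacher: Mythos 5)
Your case analysis is essentially the paper's own argument: the paper likewise shows that bidding on a package with $\mathcal{U}_c(k)<0$ and winning yields a negative surplus, while withholding a bid when $\mathcal{U}_c(k)\ge 0$ forfeits the package that maximizes the channel rate, and concludes that truthful reporting is a best response given the other bidders report truthfully. The only material you omit is the paper's preliminary computation of $\frac{\partial \mathcal{U}_c(d)}{\partial h_{dc}}<0$ and $\frac{\partial \mathcal{U}_c(d)}{\partial h_{Bd}}<0$, used to characterize which (single-item) package a bidder prefers; this is not needed for the deviation argument itself, and the concern you flag about strategic waiting under descending prices is in fact the same soft spot the paper's proof also leaves unaddressed.
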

\newtheorem{proof}{Proof}
\begin{proof} From (\ref{eq:utility_JSAC}), we can get that the utility
of bidder $\mathcal{U}_c(k)$ depends on the valuation of the package
it bids and unit prices of the items. In details, it is the
interference (between cellular and D2D communications) that mainly
affects the utility. As the expression is extremely complex to
resolve, we consider the case that only one item constitutes the
package without loss of generality. The utility of bidder $c$ can be
rewritten as \vspace{-0.2cm}
\begin{align}
\mathcal{U}_c (d) = & \log _2 \hspace{-0.1cm} \left( {1 + \frac{{P_B
h_{Bc}^2 }}{{P_d h_{dc}^2 + N_0 }}} \right) + \log _2
\hspace{-0.1cm} \left( {1 + \frac{{P_d h_{dd}^2
}}{{P_B h_{Bd}^2  + N_0 }}} \right) \nonumber \\
& - \log _2 \left( {1 + \frac{{P_B h_{Bc}^2 }}{{N_0 }}} \right) -
p^t\left( d \right),
\end{align}
and the differential expressions of the utility with respect to
$h_{dc}$ and $h_{Bd}$ are \vspace{-0.2cm}
\begin{equation}\label{eq:differential_1}
\frac{{\partial \mathcal{U}_c \left( d \right)}}{{\partial h_{dc} }}
\hspace{-0.1cm} = \hspace{-0.1cm} \frac{{ - 2P_d h_{dc} P_B h_{Bc}^2
}}{{\ln 2\left( {P_d h_{dc}^2 \hspace{-0.1cm} + \hspace{-0.1cm} P_B
h_{Bc}^2 \hspace{-0.1cm} + \hspace{-0.1cm} N_0 } \right)\left( {P_d
h_{dc}^2 \hspace{-0.1cm} + \hspace{-0.1cm} N_0 } \right)}} < 0,
\end{equation}
\vspace{-0.2cm}
\begin{equation}\label{eq:differential_2}
\frac{{\partial \mathcal{U}_c \left( d \right)}}{{\partial h_{Bd} }}
\hspace{-0.1cm} = \hspace{-0.1cm} \frac{{ - 2P_B h_{Bd} P_d h_{dd}^2
}}{{\ln 2 \hspace{-0.05cm} \left( \hspace{-0.05cm} {P_B h_{Bd}^2
\hspace{-0.1cm} + \hspace{-0.1cm} P_d h_{dd}^2 \hspace{-0.1cm} +
\hspace{-0.1cm} N_0 } \right)\left( \hspace{-0.05cm} {P_B h_{Bd}^2
\hspace{-0.1cm} + \hspace{-0.1cm} N_0 } \right)}} < 0,
\end{equation}
respectively. Accordingly, utility $\mathcal{U}_c(d)$ is a
monotonically decreasing function with respect to both $h_{dc}$ and
$h_{Bd}$. Thus, the optimal strategy is to bid the D2D link that has
a lower channel gain with the cellular transmitter and receiver.

In a descending price auction, items are always too expensive to
afford at the beginning. With the number of iterations $t$
increasing, the prices of items drop off. Given a package
$\mathcal{D}_k$ in round $t$, bidder $c$ has the right to submit bid
$\left\{ \mathcal{D}_k,\mathcal{P}_c^t(k)\right\}$ or $\left\{
\emptyset,0 \right\}$. Given that all the other bidders submit their
true demands according to (\ref{eq:bid_condition}), we consider the
strategy of bidder $c$ in two cases: 1) if $c$ bids $\left\{
\emptyset,0 \right\}$ when its true valuation for $\mathcal{D}_k$
satisfies $\mathcal{U}_c(k) \geq 0$, it will quit this round and
lose the package which maximizes its channel rate; 2) if $c$ bids
$\left\{ \mathcal{D}_k,\mathcal{P}_c^t(k)\right\}$ when its true
valuation for $\mathcal{D}_k$ satisfies $\mathcal{U}_c(k) < 0$ and
finally wins the package, it will obviously get a negative surplus
that is unwanted.

From the above analysis, we can conclude that the optimal strategy
for cellular channel $c$ is to submit its true demand in each round,
or it will get a loss in its utility as a result of any deceiving.
That is, the proposed resource allocation algorithm is cheat-proof.
\end{proof}

{\bf{2) Convergence}}\label{subsec:conv}

In this subsection, we prove that the proposed algorithm has the
convergence property.

\begin{proposition}
The resource allocation algorithm based on
the reverse I-CA has the convergence property that the number of the
iterations is finite.
\end{proposition}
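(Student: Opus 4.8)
The plan is to bound the number of rounds of each of the three types the algorithm can execute: a clean allocation of a package to a bidder, a coarse price reduction $p^{t+1}(d)=p^t(d)-\Delta$ triggered by an over-supplied item, and a fine-tuning increase $p^t(d)\leftarrow p^t(d)+\delta$ (with $\delta=\Delta/i$, $i$ a fixed positive integer) triggered by a conflict. Since the XOR bidding language removes a bidder permanently as soon as one of its bids is accepted, and the auction halts once all $D$ D2D pairs are sold, there are at most $C$ allocation rounds in the entire run; hence it suffices to bound the number of price-update rounds before the first allocation and between any two consecutive allocations.

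First I would show the ask prices remain inside a bounded interval. For a lower bound: a price $p^t(d)$ is decreased only when $d$ is demanded by nobody, but by the bid condition \eqref{eq:bid_condition} and the monotonicity of the bidder utility in the prices exploited in the proof of Proposition~1, as soon as the total ask price $\sum_{d\in\mathcal D_k}p^t(d)$ of some package $\mathcal D_k\ni d$ falls at or below the valuation $v_c(k)$ of an active bidder $c$, that bidder bids for $\mathcal D_k$ and $d$ stops being over-supplied; because the valuations in \eqref{eq:value} form a finite set of finite numbers, this furnishes a uniform lower bound $\underline{p}$ on every price. For an upper bound: a price is raised only inside a fine-tuning phase, and every $\delta$-increase strictly lowers the utility of each bidder still contesting $d$, so after finitely many $\delta$-steps at least one of the at most $C$ conflicting bidders withdraws; iterating, the conflict is resolved (with a single surviving bidder per contested item, to whom the package is then allocated) after a number of steps bounded in terms of the finitely many valuations, so prices cannot run away upward either.

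With every price confined to a fixed interval $[\underline{p},\bar p]$ and each update moving one price by $-\Delta$ or $+\delta$, the counting is routine: an item can suffer at most $\lceil(\bar p-\underline{p})/\Delta\rceil$ coarse decreases before it becomes permanently demanded or allocated, so at most $D\lceil(\bar p-\underline{p})/\Delta\rceil$ decreases occur between two allocations; and each fine-tuning phase lasts at most $\lceil(\bar p-\underline{p})/\delta\rceil$ steps and ends in an allocation, so at most $C$ fine-tuning phases occur overall. Adding these finite quantities bounds the total number of iterations, which is the claim.

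The step I expect to be the main obstacle is making the boundedness of the price range airtight, i.e., ruling out an infinite oscillation of a single price (a coarse drop of $\Delta$, a partial recovery by a few $\delta$'s, another coarse drop, and so on): the descending rule and the fine-tuning rule push in opposite directions, and the per-package form of the bid condition couples the prices of different items, so neither bound follows from monotonicity in isolation. The clean way to close the argument, I think, is to supply a strictly decreasing integer-valued potential --- some lexicographic combination of the number of still-active bidders and a discretized measure of how far the current prices lie above the thresholds that trigger demand --- and to check that every non-allocation round strictly decreases it while every allocation round strictly decreases its first coordinate; verifying that a fine-tuning step genuinely reduces this potential rather than merely redistributing it among items is the delicate point, and it is exactly there that the utility monotonicity from Proposition~1 is needed.
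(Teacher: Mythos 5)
Your decomposition is essentially the route the paper takes, only executed much more carefully. The paper's own proof has three steps of substance: invoking its cheat-proof result (Proposition 1) so that all bidders bid truthfully, it observes that each descending round raises the utility of every still-active bidder ($\mathcal{U}_c^{t+1}-\mathcal{U}_c^{t}=\Delta>0$), so after finitely many $\Delta$-reductions some bidder's utility crosses zero and it bids; if several bidders collide, the ascending $\delta$-process with $\delta<\Delta$ is simply asserted to produce a single winner; and since the constraint $\sum_{\mathcal{D}_k:\,d\in\mathcal{D}_k}\sum_{c}x_c(k)\le 1$ forbids reselling and there are only $N$ packages, the total number of iterations is finite. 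Your lower bound on prices is exactly the paper's ``utility eventually crosses zero'' step made quantitative, and your bound of at most $C$ allocation rounds is its ``finitely many packages'' step. What you add, and what the paper does not have, is any treatment of the interaction between the two price-update directions: the oscillation you flag (every conflicting bidder withdrawing simultaneously after a $\delta$-raise, the item becoming over-supplied again, a $\Delta$-drop reviving them all) is not excluded by anything in the published argument, which tacitly assumes each fine-tuning phase ends with exactly one surviving bidder. So the ``main obstacle'' you identify is a gap in the paper's proof as much as in your own sketch; closing it requires either your potential-function idea or the simpler observation that all prices live on the grid $\{p^0(d)-m\delta\}$ and that, whenever the conflicting bidders' withdrawal thresholds are separated by at least one grid point, the ascending phase necessarily isolates a unique winner (ties being handled by an explicit tie-breaking rule, which the paper mentions only in passing for the CAP). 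At the paper's own level of rigor, your first two paragraphs already subsume its proof.
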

\begin{proof} According to Theorem 1, all the bidders submit their
true demands in each auction round, in order to obtain the utility
from winning. From (\ref{eq:utility_JSAC}), we can derive \vspace{-0.2cm}
\begin{equation}
\mathcal{U}_c^{t+1}-\mathcal{U}_c^{t}=\Delta>0,
\end{equation}
where $\mathcal{U}_c^{t}$ denotes the utility of bidder $c$ in round
$t$. According to the algorithm, we have that bidder $c$ will get
zero utility with no bid if $\mathcal{U}_c^t<0$, and have an
opportunity to win a positive utility with bid $\left\{
\mathcal{D}_k,\mathcal{P}_c^t(k)\right\}$ if $\mathcal{U}_c^t \geq
0$. Therefore, in the beginning, bidder $c$ plays a waiting game,
and once $\mathcal{U}_c^t(k) \geq 0$, it will bid for
$\mathcal{D}_k$. As long as it is the only one that submits a bid,
it will get the package. With a sufficiently large $t$ and
$\Delta>0$, we can finally get $x_c(k)=1$. Similarly, if more than
one bidders bid for the same item, we can have an allocation by
ascending price process with the step $\delta<\Delta$. Subjected to
$\sum\limits_{\mathcal {D}_k :d \in \mathcal {D}_k } {\sum\limits_{c
= 1}^C {x_c (k)}  \le 1}$ in (\ref{eq:CAP}), the package can not be
sold once more. Thus, for a finite number of packages $N$, the
number of iterations is finite. That is, the proposed scheme would
reach convergence.
\end{proof}

In addition, the value of the price step $\Delta$ has a direct
impact on the speed of convergence of the proposed scheme. The
scheme converges fast when $\Delta$ is large, while it converges
slowly when $\Delta$ is small. The fine tuning $\delta$ also has the
same nature, but less impact on convergence.

{\bf{3) Price Monotonicity}}

In an I-CA game, the price updates through several ways, i.e.,
monotonically increasing, monotonically decreasing and non-monotonic
modes. Here, we focus on the price non-monotonicity in the proposed
reverse I-CA algorithm.

\begin{proposition}
In the proposed descending price auction,
the raising item prices in a round may be necessary to reflect the
competitive situation. Moreover, it brings efficiency improvement.
\end{proposition}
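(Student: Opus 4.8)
The plan is to prove the two assertions of the statement separately. First I would show \emph{necessity}: a strictly descending price rule can get stuck, so the ascending fine tuning $p^t(d)=p^t(d)+\delta$ of step~4 is genuinely required. Second I would show the \emph{efficiency improvement}: whenever that step is invoked, it steers the allocation toward the one maximizing $\sum_{c,k}x_c(k)v_c(k)$, which by the overall--gain identity $\mathcal{A}(\mathcal{X})+\mathcal{U}_{all}(\mathcal{X})=\sum_{c,k}x_c(k)v_c(k)$ is exactly the efficient allocation, so the allocating efficiency $\mathcal{E}(\mathcal{X})$ is pushed toward $1$.

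For necessity I would construct a minimal instance: two cellular resource units $c_1,c_2$ and a small pool of D2D pairs arranged so that the only utility--feasible packages for $c_1$ and $c_2$ both contain a common pair $d^{\star}$, with the valuations $v_{c_1}(\cdot)$ and $v_{c_2}(\cdot)$ close enough that at \emph{every} common price level reachable from the initial $p^0(d)$ by integer multiples of $\Delta$ both bidders simultaneously satisfy the bidding condition $v_c(k)\ge \mathcal{P}_c^t(k)=\sum_{d\in\mathcal{D}_k}p^t(d)$. Then step~2 never fires (the contested item is demanded, not over--supplied), step~1 returns two conflicting bids every round, and no further descent separates them; the algorithm cannot terminate unless it is allowed to raise $p^t(d^{\star})$. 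This establishes that the raising move ``may be necessary to reflect the competitive situation.''

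For the efficiency claim I would analyze the effect of the $+\delta$ move, using that valuations are fixed constants while prices move only by $-\Delta$ (step~2) or by $+\delta$ with $\delta=\Delta/i<\Delta$ (step~4). The round at which bidder $c$ can first afford a package is a monotone function of $v_c(k)$, so a larger valuation contribution means (weakly) earlier entry; a collision arises only when discretization places two bidders' entries in the same round. Incrementing the contested item price by $\delta$ and returning to step~1 first drives the utility of the lower--valuation bidder below zero, so it withdraws, while the higher--valuation bidder still bids; choosing $\delta=\Delta/i$ small keeps the correction from overshooting past the point at which the rightful winner would also withdraw. Hence the contested item is awarded to the bidder with the larger valuation contribution, the resulting $\sum_{c,k}x_c(k)v_c(k)$ is weakly larger than under an arbitrary tie--break (strictly larger in the constructed instance), and $\mathcal{E}(\mathcal{X})$ is no smaller than before and closer to $1$. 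As a complementary remark, the surviving bidder's pay price $\mathcal{P}_c^t(k)$ is now higher too, so the WDP objective $\max\sum_{c,k}x_c(k)\mathcal{P}_c^t(k)$ the auctioneer actually solves becomes better aligned with the efficient choice.

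The main obstacle is that the statement is qualitative (``may be necessary,'' ``brings efficiency improvement''), so the real work is selecting the right precise formalization and then nailing the delicate quantitative point: that a single choice $\delta=\Delta/i$ with $i$ a large enough integer both breaks the over--demand and does not eliminate the rightful winner, and that only finitely many such corrections occur --- which I would tie back to the finiteness argument of Proposition~2. Bounding $i$ in terms of the minimal valuation gap $\min|v_{c_1}(k_1)-v_{c_2}(k_2)|$ over colliding bids, and separately handling the degenerate case of exactly equal valuations (where any tie--break is already efficient), are the loose ends to settle.
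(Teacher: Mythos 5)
Your proposal is correct and follows the same overall structure as the paper's proof: both identify the collision created by the discrete price descent (two bidders simultaneously demanding the same package or intersecting packages), and both resolve it with the $+\delta$ fine tuning, which forces bidders to re-inspect their utilities until the lower one withdraws and a unique winner remains after finitely many steps. The genuine difference is in how the efficiency claim is grounded. The paper argues indirectly: it notes that the ascending correction maximizes the auctioneer revenue in the sense of the WDP objective $\max\sum_{c,k}x_c(k)\mathcal{P}_c^t(k)$ and concludes from this that the outcome ``has higher efficiency than a random allocation in that situation.'' You argue directly on the valuation side: since a bidder keeps bidding exactly while $v_c(k)\ge\mathcal{P}_c^t(k)$, the contested item ends up with the bidder of larger valuation contribution, so $\sum_{c,k}x_c(k)v_c(k)$ --- which by the overall-gain identity is precisely what $\mathcal{E}(\mathcal{X})$ measures --- is weakly increased. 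Your route is the sounder one on this point, since revenue maximization and valuation maximization do not coincide in general; the paper's step from higher revenue to higher efficiency is a leap your argument avoids. Your explicit two-bidder instance for the ``necessity'' half, and your attention to choosing $i$ large enough that the $+\delta$ correction does not overshoot past the rightful winner, are refinements the paper does not attempt; the loose ends you flag (exactly equal valuations, bounding $i$ by the minimal valuation gap, and the interaction with the finiteness argument of Proposition 2) are gaps in the paper's own one-paragraph treatment as well, so leaving them as stated caveats puts you at or above the paper's level of rigor.
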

\begin{proof}
From the algorithm proposed in Table \ref{table_1}, there exists a situation that more than one
bidders submit bids for the same package or different packages with
intersection when prices are reduced to some certain values. But
auctions do not allow one item being obtained by multiple bidders as
the second constraint in (\ref{eq:CAP}) shows. In this situation,
raising the corresponding prices by a fine tuning $\delta=\Delta/i$
makes bidders to reinspect their utility functions. Once a bidder
finds its utility less than zero, it quits from the competition. By
a finite number of iterations, the winner converges to one bidder.
Since the ascending price process maximizes the auctioneer revenue
as shown in (\ref{eq:WDP}), the allocation has higher efficiency
than a random allocation in that situation.
\end{proof}

{\bf{4) Complexity}}

As mentioned before, a traditional CAP in fact is an NP-hard
problem, the normal solution of which is the centralized exhaustive
search. We set that the number of items to be allocated is $m$, and
the number of bidders is $n$. For an exhaustive optimal algorithm,
an item can be allocated with $n$ possible results. Thus, all the
$m$ items are allocated with $n^m$ possible results. The complexity
of the algorithm can be denoted by $\mathcal {O}(n^m)$. In the
proposed reverse I-CA scheme, bidders reveal their entire utility
function, i.e., they calculate valuations for all possible packages,
the number of which is
$\mathcal{C}_m^1+\mathcal{C}_m^2+\cdots+\mathcal{C}_m^m=2^m - 1$. If
the total number of iterations is $t$, the complexity of the
auction-based scheme is $\mathcal{O}(n\left( {2^m - 1} \right)+t)$.
From the proposed algorithm, we have $p^t \left( d \right) = p^0
\left( d \right) - \Delta  \cdot t \ge 0 $ (The fine tuning has a
small impact on the result and can be omitted here). So the worst
case is $t = {{p^0 \left( d \right)} \mathord{\left/
 {\vphantom {{p^0 \left( d \right)} \Delta }} \right.
 \kern-\nulldelimiterspace} \Delta }$. It is obvious that for sufficient large values of $m$ and $n$,
general values of $p^0(d)$ and $\Delta$, a much lower complexity is
obtained by using the proposed reverse I-CA scheme. That is,
$\mathcal{O}(n^m)>\mathcal{O}(n\left(2^m-1\right)+{{p^0 \left( d
\right)} \mathord{\left/
 {\vphantom {{p^0 \left( d \right)} \Delta }} \right.
 \kern-\nulldelimiterspace} \Delta })$. If we constrain the
number of D2D pairs sharing the same channel to one, the complexity
would be further reduced to $\mathcal{O}(n \cdot m+{{p^0 \left( d
\right)} \mathord{\left/
 {\vphantom {{p^0 \left( d \right)} \Delta }} \right.
 \kern-\nulldelimiterspace} \Delta })$. And the performance of this
reduced scheme is included in the simulation in Section \ref{sec:simulations_JSAC}.

{\bf{5) Overhead}}

In D2D underlay system, the BS is still the control center of
resource allocation, and the global CSI should indeed be available
at the BS for the proposed scheme. In addition to the CSI detection,
feedback, and the control signaling transmission, the reverse I-CA
scheme does not need additional signaling overhead compared to
existing resource scheduling schemes such as maximum carrier to
interference (Max C/I) and proportional fair (PF), which also need
the global CSI to optimize the system performance. The difference is
that the reverse I-CA scheme requires more complicated CSI due to
the interference between D2D and cellular network.

At the beginning of the allocation, the transmitters need to send
some packets containing detection signals. Then, the obtained CSI at
each terminal (D2D or cellular receiver) would be feedback to the
BS. After that, iteration process would be conducted at the BS, and
no signaling needs to be exchanged among the network nodes until the
control signals forwarding.

Methods, such as CSI feedback compression and signal flooding, would
help reduce the overhead. In addition, the future work on D2D
communication could consider some mechanism that limit the number of
D2D pairs sharing the same channel by, e.g. distance constraint,
which would obviously help reduce the overhead. But for this chapter,
the target is to obtain the nearest-optimal solution, wherefore we
do not consider the simplification.

\section{Performance Results and Discussions}\label{sec:simulations_JSAC}

The main simulation
parameters are listed in Table \ref{table_2}. As shown in Fig.
\ref{fig:1}, simulations are carried out in a single cell. Both
path-loss model and shadow fading are considered for cellular and
D2D links. The wireless propagation is modeled according to WINNER
II channel models \cite{WINNER}, and D2D channel is based on
office/indoor scenario while cellular channel is based on the urban
microcell scenario.
\begin{table}
\begin{center}
\caption{Main Simulation Parameters}\label{table_2}
\begin{tabular}{|l|l|}
\hline \bf{Parameter} & \bf{Value}\\ \hline Cellular layout &
Isolated cell, 1-sector\\ \hline System area & The radius of the
cell is 500 m\\ \hline  Noise spectral density & -174 dBm/Hz\\
\hline Sub-carrier bandwidth & 15 kHz \\ \hline Noise figure & 9 dB at device\\
\hline Antenna gains & BS: 14 dBi; Device: 0 dBi\\ \hline
The maximum distance of D2D & 5 m\\
\hline Transmit power & BS: 46 dBm; Device: 23 dBm \\ \hline
\end{tabular}
\end{center}
\end{table}

{\bf{1) System Sum Rate}}

The system sum rate with different numbers of D2D pairs and
different numbers of resource units using the proposed auction
algorithm is illustrated in Fig. \ref{fig:2} $\sim$ Fig.
\ref{fig:4}. The sum rate can be obtained from (\ref{eq:Re}).

From Fig. \ref{fig:2} and Fig. \ref{fig:4}, we can see that the
system sum rate goes up with both the number of D2D pairs and the
number of resource units increasing. On one side, when the amount of
resources is fixed, more D2D users contribute to a higher system sum
rate. On the other side, as the amount of resource increases, the
probability of resources with less interference to D2D links being
assigned to them enhances, which can lead to the increased sum rate.
This phenomenon is similar to the effect of multiuser diversity.
Definitely, cellular users also contributes to the performance.

From another perspective, Fig. \ref{fig:2} $\sim$ Fig. \ref{fig:4}
shows the system sum rate for different allocation algorithms. The
curve marked exhaustive optimal is simulated by the exhaustive
search way, which guarantees a top bound of the system sum rate. The
curve marked reduced R-I-CA is the result of a reduced reverse I-CA
scheme, in which the number of D2D pairs sharing the same cellular
resources is constrained to one.
\begin{figure}[!ht]
\centering
\includegraphics[height=3.3in]{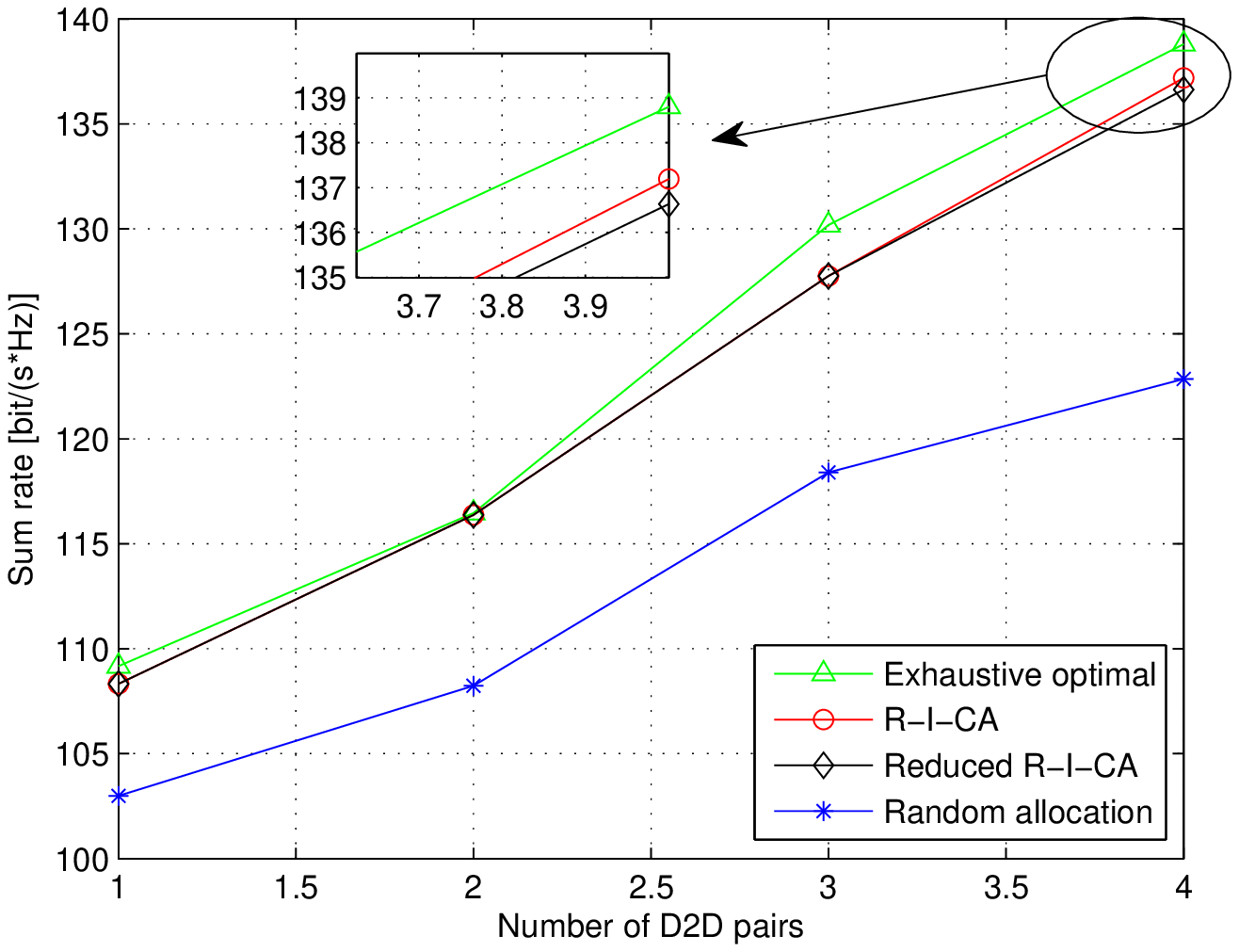}
\caption{System sum rate for different allocation algorithms in the
case of 8 resource units.} \label{fig:2}
\end{figure}
\begin{figure}[!ht]
\centering
\includegraphics[height=3.3in]{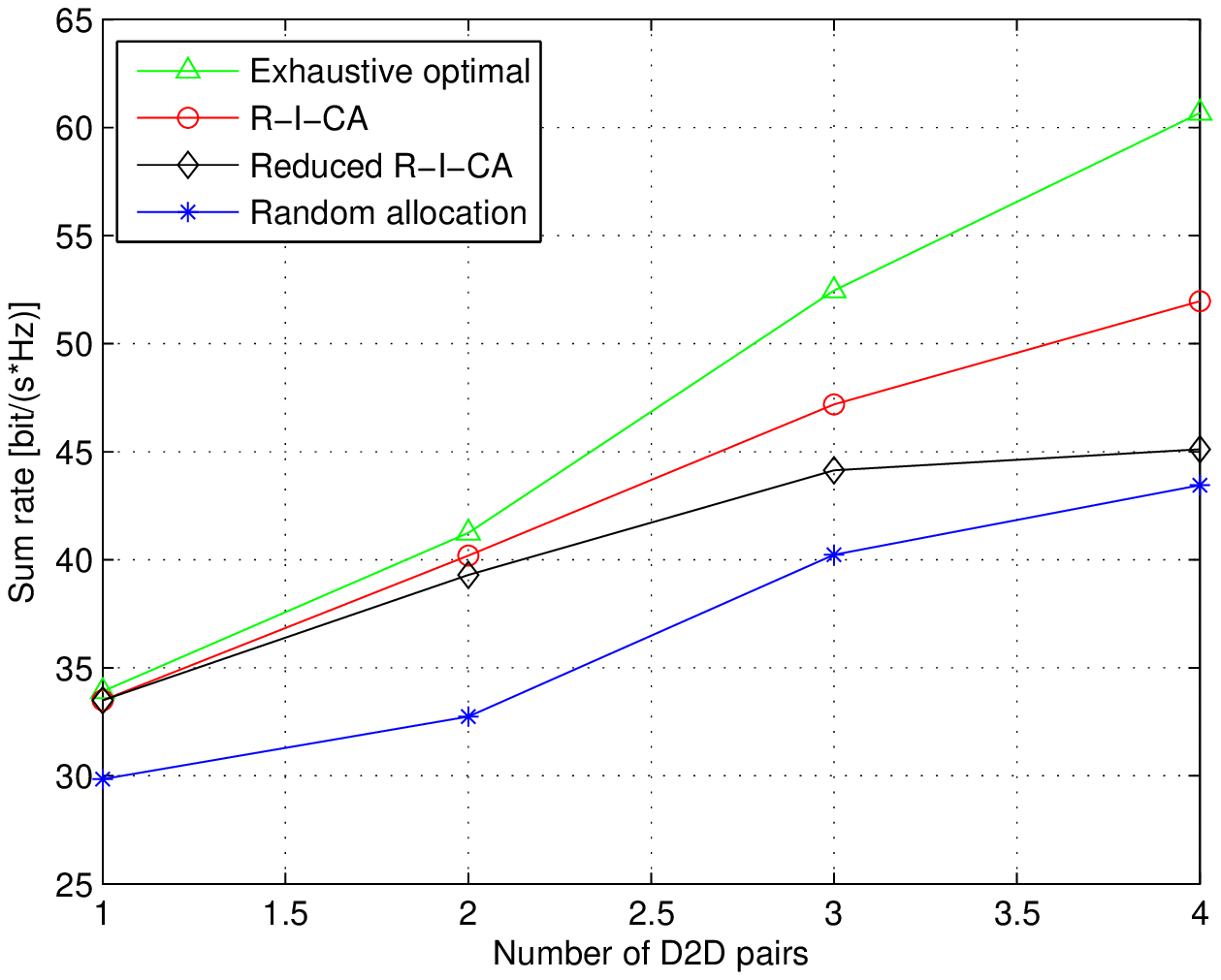}
\caption{System sum rate for different allocation algorithms in the
case of 2 resource units.} \label{fig:3}
\end{figure}
\begin{figure}[!ht]
\centering
\includegraphics[height=3.3in]{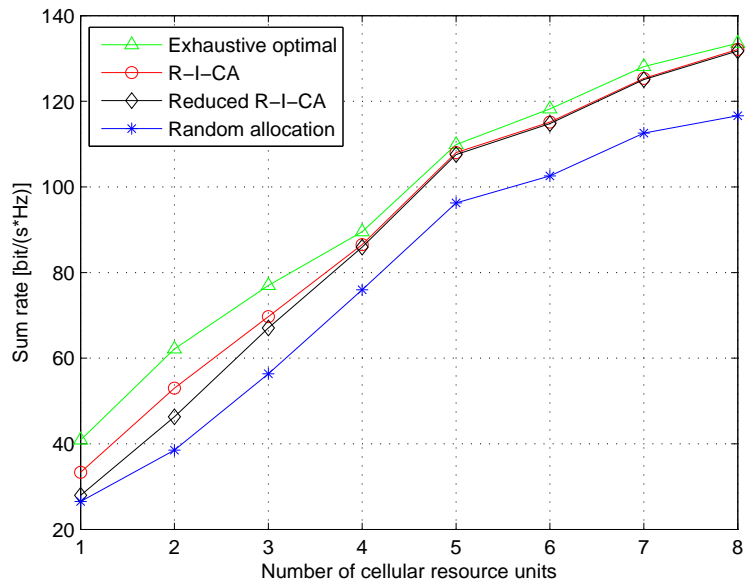}
\caption{System sum rate for different allocation algorithms in the
case of 4 D2D pairs.} \label{fig:4}
\end{figure}
The curve marked R-I-CA represents the performance of the proposed
reverse I-CA algorithm, and the last one is the simulation result
using random allocation of spectrum resources. Firstly, we can see
that the proposed auction algorithm is relatively much superior to
the random allocation. Secondly, the optimal allocation results in
the highest system sum rate, but the superiority compared to R-I-CA
is quite small, especially when the number of cellular resource
units increases as Fig. \ref{fig:4} shows. Moreover, we find that
the performance of reduced R-I-CA approximates to that of R-I-CA
scheme in case of 8 resource units, but differs obviously in case of
2 resource units shown in Fig. \ref{fig:3}. The reason for this
phenomenon is that the constraint of the reduced R-I-CA limits D2D
pairs accessing to the network when the number of resources units is
less than that of D2D pairs, thus a large capacity loss products.

\begin{figure}[!ht]
\centering
\includegraphics[height=3.3in]{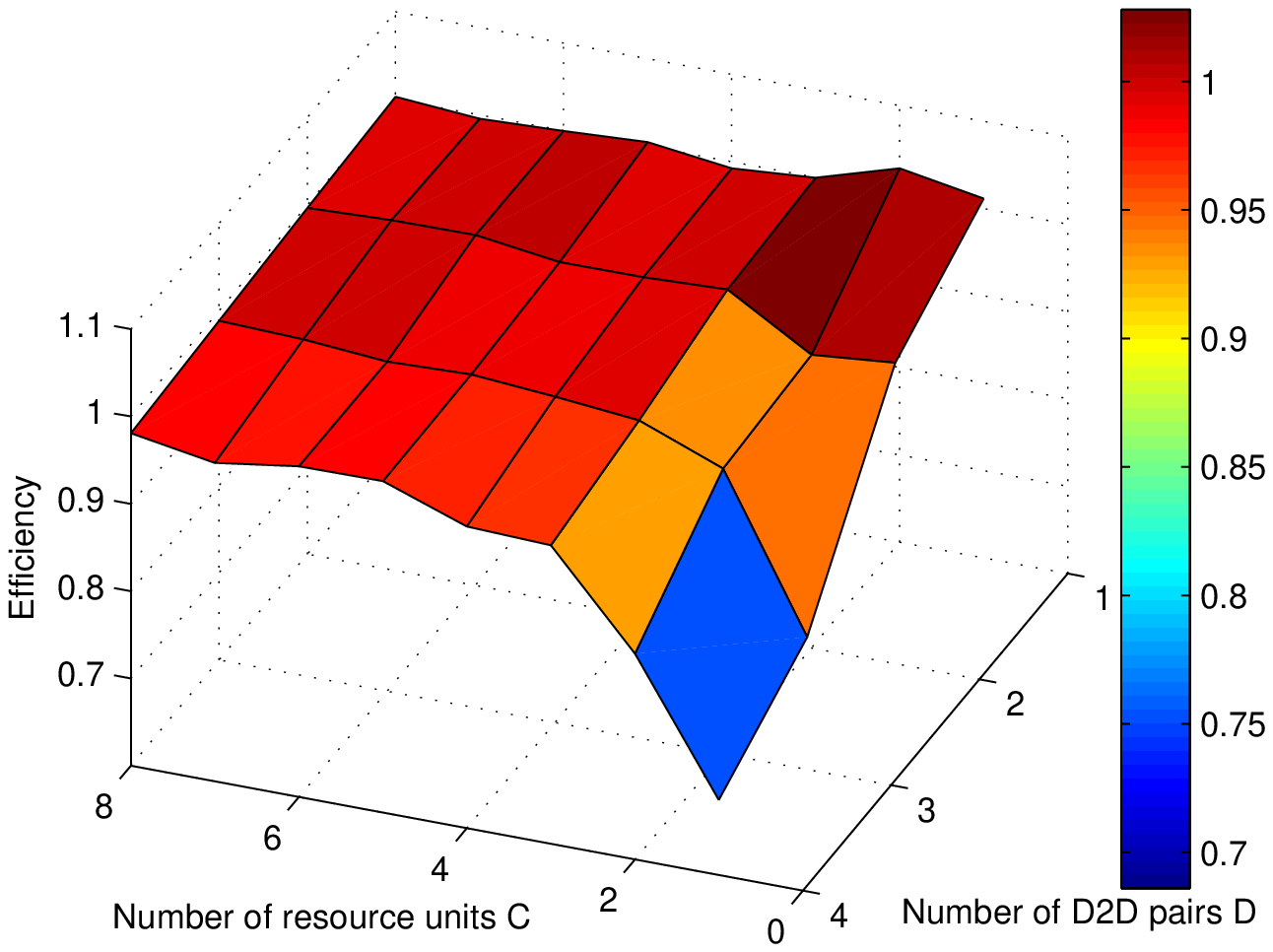}
\caption{System efficiency: $\eta$ with different numbers of D2D
pairs and different numbers of resource units.} \label{fig:5}
\end{figure}

{\bf{2) System Efficiency}}

We define the system efficiency as $\eta  = {\Re \mathord{\left/
 {\vphantom {\Re  {{\Re _{opt}}}}} \right.
 \kern-\nulldelimiterspace} {{\Re _{opt}}}}$, where $\Re_{opt}$
 represents the exhaustive optimal sum rate.
Fig. \ref{fig:5} shows the system efficiency with different numbers
of D2D pairs and different numbers of resource units. The simulation
result indicates that the proposed algorithm provides high (the
lowest value of $\eta$ is around 0.7) system efficiency. Moreover,
the efficiency is stable over different parameters of users and
resources.

As to the point of efficiency value being about 0.7, the number of
resource units and the number of D2D pairs are both small. The
linear price rule limits bidders to bid the maximal valuation
packages, but to bid the packages having maximal average unit
valuation. For this reason, the efficiency decreases slightly.

As to other points, the efficiency is stable above 0.9, which
reflects a small performance gap between the proposed algorithm and
the exhaustive search scheme. In fact, the descending price rule
determines the bidder that has the highest bid on current items
would win the corresponding package, which maximizes the current
overall gain. However, the gap cannot be avoided as the algorithm
essentially follows a local, or an approximate global optimum
principle.

{\bf{3) Price Monotonicity}}\label{subsec:monotonicity}

Fig. \ref{fig:6} shows an example of the price non-monotonicity in
the reverse I-CA scheme. The four curves represent unit price of
four D2D pairs. As the enlarged detail shows, the unit price of D2D
pair $2$ has an ascending process during the auction. As the step
$\delta$ is much less than descending step $\Delta$, the phenomenon
of ascending price is hard to pick out. When the items have been
sold out, their prices are fixed to the selling value. And from the
figure, we can find that the D2D pair $2$ is the last one to be
sold.
\begin{figure}[!ht]
\centering
\includegraphics[height=3.3in]{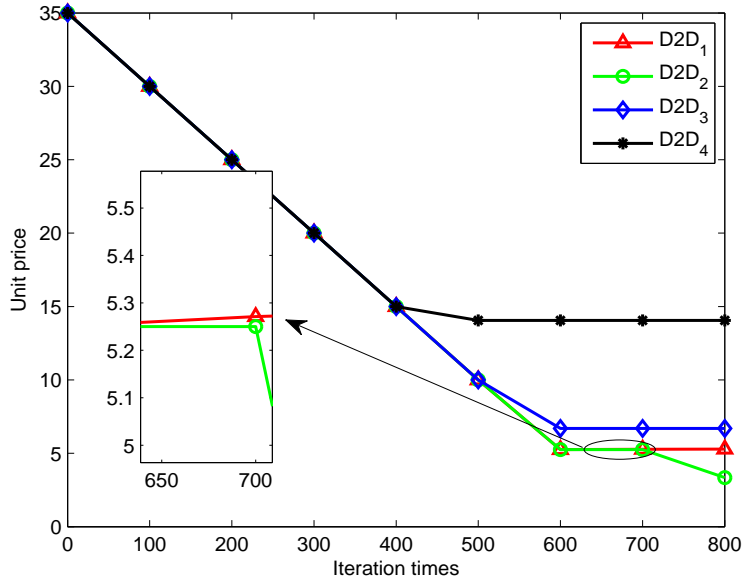}
\caption{Price monotonicity: an example of price non-monotonicity in
the reverse I-CA scheme.} \label{fig:6}
\end{figure}

In this chapter, we have investigated how to reduce the effects of
interference between D2D and cellular users, in order to improve the
system sum rate for a D2D underlay network. We have proposed the
reverse iterative combinatorial auction as the mechanism to allocate
the spectrum resources for D2D communications with multiple user
pairs. We have formulated the valuation of each D2D pair for each
resource unit, and then explained a detailed auction algorithm
depending on the utility function. A non-monotonic descending price
iteration process has been modeled and analyzed to be cheat-proof,
converge in a finite number of rounds, and has low complexity. The
simulation results show that the system sum rate goes up with both
the number of D2D pairs and the number of resource units increasing.
The proposed auction algorithm is much superior to the random
allocation, and provides high system efficiency, which is stable
over different parameters of users and resources.

\chapter{Cross-layer Optimization}\label{chap:cross}
D2D communication as an underlay to cellular
networks brings significant benefits to system throughput and
energy efficiency. However, as D2D UEs can cause interference to
cellular UEs, the scheduling and allocation of channel resources and
power to D2D communication need elaborate coordination. Thus, cross-layer
optimization is considered in this chapter. In Section \ref{sec:scheduling},
both system throughput and fairness are taken into account. A joint time-domain
scheduling and spectrum allocation scheme is studied. Based on the concept of
UE's battery lifetime, we propose an auction-based algorithm for power and channel
resource allocation of D2D communication in Section \ref{sec:energy}.

\section{Time-Domain Scheduling}\label{sec:scheduling}
In this section, we propose a joint scheduling and resource allocation scheme
to improve the performance of D2D communication. We take network
throughput and UEs' fairness into account by performing interference
management. Specifically, we develop a Stackelberg game framework in
which we group a cellular UE and a D2D UE to form a leader-follower
pair. The cellular user is the leader, and the D2D UE is the
follower who buys channel resources from the leader. We analyze the
equilibrium of the game, and propose an algorithm for joint
scheduling and resource allocation. Finally, we perform computer
simulations to study the performance of the proposed algorithm.

\subsection{Model Assumptions}\label{sec:model}

We still consider a single cell scenario with multiple UEs and one eNB
located at the center of the cell. Both the UEs and the eNB are
equipped with a single omni-directional antenna. The system includes
two types of UEs, D2D UEs and cellular UEs. The D2D UEs are in
pairs, each consisting of one transmitter and one receiver. We
consider a dense D2D environment, where the number of cellular UEs
and D2D UEs is $K$ and $D$ $(D>K)$, respectively. The set of
cellular UEs and D2D pairs is $\mathcal{K}$ and $\mathcal{D}$,
respectively. There are $K$ orthogonal channels, which is occupied
by the corresponding cellular UEs. The channels allocated to the
cellular UEs are fixed, and D2D communications share the channels
with cellular UEs. One channel is only allowed to be used by one
cellular UE and one D2D UE. In LTE, scheduling takes place every
transmission time interval (TTI) \cite{Song2010}, which consists of
two time slots. Channels are allocated among D2D UEs according to
their priority. During each TTI, $K$ D2D pairs are selected to share
the $K$ channels with the cellular UEs while other D2D UEs wait for
transmission.

\begin{figure}[!t]
\centering
\includegraphics[height=3.8in]{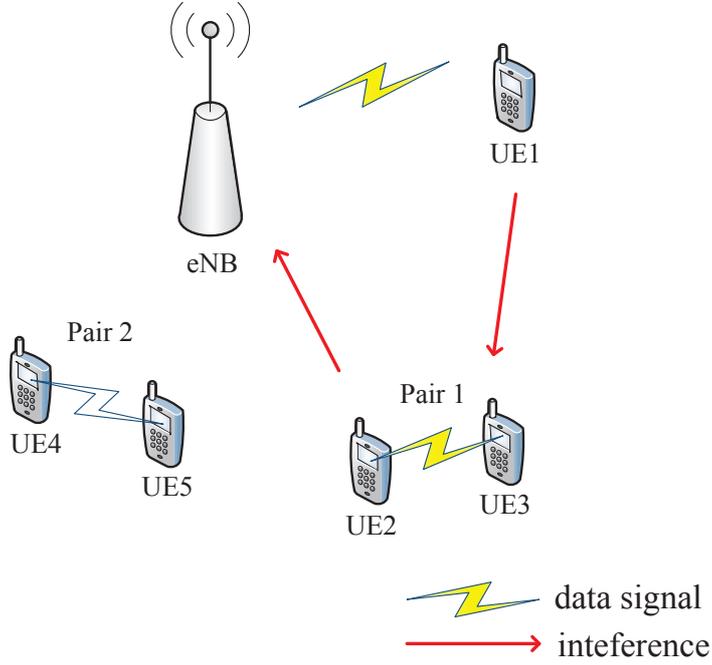}
\caption{System model of D2D underlay communication with uplink
resource sharing. UE$_1$ is a cellular user whereas UE$_2$ and
UE$_3$, UE$_4$ and UE$_5$ are D2D UEs.} \label{fig:system_model}
\end{figure}

A scenario of uplink resource sharing is illustrated in
Fig.~\ref{fig:system_model}, which includes one cellular UE (UE$_1$)
and two D2D pairs (UE$_2$ and UE$_3$, UE$_4$ and UE$_5$). UE$_2$ and
UE$_4$ are transmitters while UE$_3$ and UE$_5$ are receivers. The
two D2D UEs in a pair are close enough to satisfy the maximum
distance constraint of D2D communication, in order to guarantee the
quality of D2D services. D2D pair 1 is selected to share the channel
resource with UE$_1$ whereas D2D pair 2 cannot transmit. During the
uplink period of the cellular network, UE$_1$ transmits data to the
eNB, while the eNB suffers interference from UE$_2$. Also, the D2D
pair 1 is in communication while UE$_3$ are exposed to interference
from UE$_1$.

During the downlink period, the cellular UE receives data from the eNB and interference from D2D transmitter sharing the same channel. The D2D receivers suffer interference from the eNB. The transmit power of eNB is too large, causing serious interference to the D2D UEs, so it is hard to guarantee the quality of D2D services. Therefore, we focus on the uplink frame of the network.

We define a set of binary variables $\{x_{ik}\}(i\in\mathcal{D}, k\in\mathcal{K})$ to denote the current D2D pair in communication. $x_{ik}=1$ if the $i$-th D2D pair is selected to use channel $k$; and $x_{ik}=0$ otherwise. Based on the analysis above, during the uplink period, the $k$-th cellular UE transmits $s_k$ to the eNB, and the $i$-th D2D transmitter transmits $s_i$. The received signals at the eNB and D2D receiver $i$ are written as
\begin{equation}
  y_k^c = \sqrt{p_k g_{ke}}s_k + \sum_{i=1}^D x_{ik} \sqrt{p_i g_{ie}}s_i + n_k,
\end{equation}
\begin{equation}
  y_i^d = \sqrt{p_i g_{ii}}s_i+\sum_{k=1}^K x_{ik} \sqrt{p_k g_{ki}}s_k + n_i,
\end{equation}
where $p_k$ and $p_i$ is the transmit power of the $k$-th cellular UE and the $i$-th D2D transmitter, respectively. $g_{ki}$ denotes the channel gain between the $k$-th cellular UE and the $i$-th D2D receiver. $g_{ii}$ denotes the channel gain between the $i$-th D2D transmitter and the $i$-th D2D receiver, which are in a pair. $g_{ke}$ is the channel gain between cellular UE $k$ and the eNB, and $g_{ie}$ is the channel gain between D2D transmitter $i$ and the eNB. $n_k$ and $n_i$ is the additive white Gaussian noise (AWGN). Without loss of generality, we assume all UEs observe the same noise power $N_0$.

The received SINR at the $i$-th D2D receiver can be expressed as
\begin{equation}
  \gamma_i^d = \frac{p_i g_{ii}}{\sum_k x_{ik} p_k g_{ki}+N_0},
\end{equation}
The SINR at the eNB corresponding to cellular UE $k$ is
\begin{equation}
  \gamma_{k}^c = \frac{p_k g_{ke}}{\sum_{i} x_{ik} p_i g_{ie}+N_0}.
\end{equation}
The channel rate of UEs is obtained by
\begin{equation}
  r = \log_2(1+\gamma).
\end{equation}

\subsection{Stackelberg Game}\label{sec:problem}
As D2D communication takes place underlaying the primary cellular network, we focus on the power control and scheduling of the D2D UEs, while the transmit power and channel of the cellular UE are assumed to be fixed. D2D communication can utilize the proximity between UEs to improve the throughput performance of the system. In the meanwhile, the interference from D2D network to the cellular network should be limited. Thus, the transmit power of the D2D UEs should be properly controlled. Another goal is to guarantee the fairness among D2D UEs when scheduling.

Interactions among the selfish cellular and D2D UEs sharing a channel can be modeled as a non-cooperative game using the game theory framework. When the players choose their strategies independently without any coordination, it usually leads to an inefficient outcome. If we simply model the D2D scenario as a noncooperative game, D2D UEs will choose to use the maximum transmit power to maximize their own payoffs regardless of other players, whereas cellular UEs will choose not to share the channel resources with D2D UEs. This is an inefficient outcome, as the interference is too serious or D2D cannot get access into the network.

Therefore, we employ the Stackelberg game to coordinate the scheduling, in which the cellular UEs are the leaders and the D2D UEs are the followers. We focus on the behavior of a one-leader one-follower pair. The leader owns the channel resource and it can charge the D2D UE some fees for using the channels. The fees are fictitious money to coordinate the system. Thus, the cellular UE has an incentive to share the channel with the D2D UE if it is profitable, and the leader has the right to decide the price. For the D2D UE, under the charging price, it can choose the optimal power to maximize its payoff. In this way, an equilibrium can be reached.

{\bf{1) Utility Functions}}

We analyze the behavior of a one-leader one-follower pair, which includes cellular UE $k$ as the leader, and D2D pair $i$ as the follower. The utility of the leader can be defined as its own throughput performance plus the gain it earns from the follower. The fee should be decided according to the leader's own consideration. Thus, we set the fee proportional to the interference the leader observe. The utility function of the leader can be expressed as
\begin{equation}
  u_k(\alpha_k,p_i) = \log_2\left(1+\frac{p_k g_{ke}}{p_i g_{ie}+N_0}\right) + \alpha_k\beta p_i g_{ie},
\end{equation}
where $\alpha_k$ is the charging price ($\alpha_k>0$). $\beta$ is a scale factor to denote the ratio of the leader's gain and the follower's payment ($\beta>0$). $\beta$ is a key parameter to influence the outcome of the game, which we will discuss later. The optimization problem for the leader is to set a charging price that maximizes its utility, i.e.,
\begin{equation}
  \max\; u_k(\alpha_k,p_i), \quad\mathrm{s.t.}\; \alpha_k > 0.
\end{equation}

For the follower, the utility is its throughput performance minus the cost it pays for using the channel, which can be expressed as
\begin{equation}
  u_i(\alpha_k, p_i) = \log_2\left(1+\frac{p_i g_{ii}}{p_k g_{ki}+N_0}\right) - \alpha_k p_i g_{ie}.
\end{equation}
The optimization problem for the follower is to set proper transmit power to maximize its utility, i.e.,
\begin{equation}
  \max\; u_i(\alpha_k,p_i), \quad\mathrm{s.t.}\; p_{min} \leq p_i \leq p_{max}.
\end{equation}

In the Stackelberg game, the leader moves first and the follower moves sequentially, i.e., the leader sets the price first, and the follower selects its best transmit power based on the price. The leader knows ex ante that the follower observes its action. The game can be solved by backward induction.

{\bf{ 2) Follower Analysis}}


Given $\alpha_k$ decided by the leader, when $p_i$ approaches 0, the utility approaches 0 as well. As $p_i$ increases, $u_i$ also increases. If $p_i$ grows too large, $u_i$ will begin to decrease since the logarithmic function grows slower than the cost. The follower wants to maximize its utility by choosing proper transmit power. The best response is derived by solving
\begin{equation}
  \frac{\partial u_i}{\partial p_i}=\frac{1}{\ln 2} \frac{g_{ii}}{p_i g_{ii}+p_k g_{ki}+N_0} - \alpha_k g_{ie} = 0.
\end{equation}
The solution is
\begin{equation}
  \hat p_i = \frac{1}{\alpha_k g_{ie}\ln 2} - \frac{p_k g_{ki} + N_0}{g_{ii}}.
\label{eq:follower_power}
\end{equation}
The second order derivative is
\begin{equation}
  \frac{\partial^2 u_i}{\partial p_i^2} = -\dfrac{1}{\ln 2} \left(\frac{g_{ii}}{p_i g_{ii}+p_k g_{ki}+N_0}\right)^2 < 0.
\end{equation}
Thus, the solution in (\ref{eq:follower_power}) is a maximum point.

From (\ref{eq:follower_power}), we know the power is monotonically decreasing with $\alpha_k$, which means when the price is higher, the amount of power bought is smaller. Note that $p_{min} \leq p_i \leq p_{max}$, thus, the best response is searched in $\{p_{min},p_{max},\hat p_i\}$.

{\bf{3) Leader Analysis}}

The leader knows ex ante that the follower will react to his price
by searching in $\{p_{min},p_{max},\hat p_i\}$. If the leader sets
the price too low, the follower will only buy $p_{max}$, and the
leader will earn more if it raises the price. Besides, the price is
restricted to be set too high, to prevent inefficient outcome. Thus,
the leader will set a price such that $p_{min} \leq \hat p_i \leq
p_{max}$. Solving the inequations, we have
\begin{equation}
  \alpha_{kmin} = \frac{g_{ii}}{(g_{ii} p_{max}+p_k g_{ki}+N_0)g_{ie} \ln 2},
\end{equation}
\begin{equation}
  \alpha_{kmax} = \frac{g_{ii}}{(g_{ii} p_{min} + p_k g_{ki}+N_0)g_{ie}\ln 2},
\end{equation}
and $\alpha_{kmin}\leq \alpha \leq \alpha_{kmax}$.

Substituting the follower's strategy (\ref{eq:follower_power}) into the leader's utility function, we get
\begin{equation}
\begin{split}
  u_k&(\alpha_k)= \frac{\beta}{\ln 2} - \alpha_k \beta g_{ie}  \frac{p_k g_{ki}+N_0}{g_{ii}}\\
  +&\log_2\left[1+{p_k g_{ke}} \left({\frac{1}{\alpha_k\ln 2} - g_{ie}\frac{p_k g_{ki} + N_0}{g_{ii}} + N_0} \right)^{-1} \right].
\end{split}
\label{eq:leader_uitlity2}
\end{equation}
There is a tradeoff between the gain from the leader itself and the gain from the follower. When the leader raises the price, it will gain less from the follower according to (\ref{eq:leader_uitlity2}), but the follower will buy less power, which will lead to an increase in the leader's rate. Therefore, there is an optimal price for the leader to ask for.

Let $A=p_k g_{ke}, B={1}/{\ln 2}, C=-g_{ie} (p_k g_{ki}+N_0)/{g_{ii}}+N_0$, we have
\begin{equation}
  u_k(\alpha_k)=\log_2\left(1+\frac{A\alpha_k}{C\alpha_k + B}\right) + (C-N_0)\beta\alpha_k + B\beta.
\end{equation}
To get the optimal price, by taking the first order derivative, we obtain
\begin{equation}
  \frac{d u_k}{d \alpha_k} = \frac{AB^2}{(C\alpha_k+B)[(A+C)\alpha_k+B]} + (C-N_0)\beta.
\end{equation}
We consider the following cases:

1) $C=0$. The first order condition is
\begin{equation}
  \frac{d u_k}{d \alpha_k}=\frac{AB}{A\alpha_k+B}-N_0\beta=0.
\end{equation}
The solution is
\begin{equation}
  \hat\alpha_k = \frac{B}{N_0\beta}-\frac{B}{A}.
\end{equation}
The second order derivative is
\begin{equation}
  \frac{d^2 u_k}{d \alpha_k^2} = -B\left(\frac{A}{A\alpha_k+B}\right)^2 <0.
\end{equation}
Note that $\alpha_{kmin} = {B}/{(p_{max}g_{ie}+N_0-C)}$ and $\alpha_{kmax} \hspace{-0.1cm}= \hspace{-0.1cm}{B}/{(p_{min}g_{ie}\hspace{-0.1cm}+\hspace{-0.1cm}N_0\hspace{-0.1cm}-\hspace{-0.1cm}C)}$. Thus, the optimal $\alpha_k$ is searched in $\{ \hat\alpha_k, \alpha_{kmin}, \alpha_{kmax} \}$.

2) $A+C=0$.
The optimal price $\alpha_k$ can be solved similarly. The solution is searched in $\{\frac{B}{A}-\frac{B}{(A+N_0)\beta}, \alpha_{kmin}, \alpha_{kmax} \}$.

If $C\neq 0$ and $A+C \neq 0$, we denote $f(\alpha_k)=(C\alpha_k+B)[(A+C)\alpha_k+B]$, which is a quadratic function of $\alpha_k$. We notice that roots for the $f(\alpha_k)$ are $\alpha_{k1}=-\frac{B}{C}$ and $\alpha_{k2}=-\frac{B}{A+C}$. From $\alpha_k \leq \alpha_{kmax}$, we get $(C-N_0)\alpha_k+B \geq 0$. Thus, $C\alpha_k+B >0$ and $(A+C)\alpha_k+B>0$. We discuss the following three cases based on the sign of $C$ and $A+C$.

3) $C>0$.
We have $\alpha_k\geq \alpha_{kmin} >0>\alpha_{k2}>\alpha_{k1}$. $f(\alpha_k)$ is monotonically increasing with $\alpha_k$ and $f(\alpha_k)>0$ for $\alpha_k>\alpha_{kmin}$. Thus, the first order derivative of the utility $u_k'(\alpha_k)$ is monotonically decreasing with $\alpha_k$ and it follows $\lim\limits_{\alpha_k\rightarrow \infty}u_k'(\alpha_k)=(C-N_0)\beta<0$. If $u_k'(\alpha_{kmin})\leq 0$, it satisfies that $u_k'(\alpha_k)\leq 0, \alpha_k\in[\alpha_{kmin},\alpha_{kmax}]$. Thus, the optimal price is $\alpha_{kmin}$. Otherwise, if $u_k'(\alpha_{kmin})>0$, there exists a unique point such that $u_k'(\alpha_k)=0$. Solving $u_k'(\alpha_k)=0$ gives
\begin{equation}
  \alpha_k = \frac{-B(A+2C) \pm \sqrt\Delta}{2C(A+C)},
\end{equation}
where $\Delta = AB^2\left[A+4C(A+C)\frac{1}{(N_0-C)\beta}\right]$. The maximum point must be the larger root or on the boundary of the feasible region of $\alpha$.

4) $C<0$ and $A+C>0$. We have $\alpha_{k2}<0<\alpha_{kmin}\leq\alpha_k<\alpha_{k1}$, and $f(\alpha_k)>0$. If $\min\, u_k'(\alpha_k)\leq 0, \alpha_k\in[\alpha_{kmin},\alpha_{kmax}]$, with $\alpha_k$ increasing, $u_0'(\alpha_k)$ is increasing, decreasing, and increasing sequentially. Thus, the maximum point is either on the maximum boundary of the feasible region, or is the smaller root of $u_k'(\alpha_k)= 0$, i.e.,
\begin{equation}
   \alpha_k = \frac{-B(A+2C) + \sqrt\Delta}{2C(A+C)}.
\end{equation}
Otherwise, $u_k(\alpha_k)$ is increasing with $\alpha_k$, the maximum point is $\alpha_{kmax}$.

5) $A+C<0$. We have $0<\alpha_k<\alpha_{k1}<\alpha_{k2}$, $f(\alpha_k)>0$ and $u_k'(\alpha_k)$ is monotonically increasing with $\alpha_k$. Thus, there does not exist an maximum point within the feasible region. By similar analysis, we derive the optimal price $\alpha_k$ is on the boundary of the feasible region.

Based on the discussion above, the optimal $\alpha_k$ can be uniquely decided. The strategies of the leader and the follower construct a Stackelberg equilibrium defined below.
\begin{definition}
A pair of strategies $(\alpha_k,p_i)$ is a Stackelberg equilibrium if no unilateral deviation in strategy by the leader or the follower is profitable, i.e.,
\begin{equation}
  u_i(\alpha_k,p_i)\geq u_i(\alpha_k,p_i'),
\end{equation}
\begin{equation}
  u_k(\alpha_k,p_i(\alpha_k))\geq u_k(\alpha_k',p_i(\alpha_k')).
\end{equation}
\end{definition}

The equilibrium is a stable outcome of the Stackelberg game where the leader and the follower compete through self-optimization and reach a point where no player wishes to deviate. The analysis of the leader and the follower above shows the existence and uniqueness of the Stackelberg equilibrium.

\subsection{Joint Scheduling and Resource Allocation}
The scheduling process is conducted at each TTI. The D2D UEs form a priority queue for each channel. During each TTI, the eNB selects $K$ D2D UEs with the highest priority for each channel sequentially and other D2D UEs have to wait.

\begin{figure}[!t]
\centering
\small
\algsetup{indent=2em}
\hrule
Joint D2D Scheduling and Resource Allocation Algorithm
\vskip 0.2em
\hrule
\vskip 0.2em
\begin{algorithmic}[1]
\STATE Given CSI, TTI $t$, the scale factor $\beta$, the fairness coefficient $\delta$, and the additional cost $c_i(t), \forall i$.
\STATE Initialize $x_{ik}=0, \forall i,k$.
\STATE Search the optimal $\alpha_{ik}^*,\forall i,k$ in
\[ \left\{ \frac{B}{p_{max}g_{ie}+N_0-C}, \frac{B}{p_{min}g_{ie}+N_0-C},\right\} \]
and
\[ \left \{ \frac{B}{N_0\beta}-\frac{B}{A}, \frac{B}{A}-\frac{B}{(A+N_0)\beta}, \frac{-B(A+2C) + \sqrt\Delta}{2C(A+C)} \right\}.
\]
\STATE Calculate the optimal power \[p_{ik}^*=\frac{1}{\alpha_{ik}^* g_{ie}\ln 2} - \frac{p_k g_{ki} + N_0}{g_{ii}}, \forall i,k. \]
\STATE Calculate priorities
\[ P_{ik} = \log_2\left(1+\frac{p_{ik}^* g_{ii}}{p_k g_{ki}+N_0}\right)-\alpha_{ik}^* g_{ie} p_{ik}^* - c_i(t), \forall i,k. \]
\STATE Sort $P_{ik}$ in descending order to form a priority queue.
\WHILE{$\sum_i x_{ik}=0, \exists k$}
    \STATE Select the head of the queue. The pair is$(i^*,k^*)$.
    \IF{$\sum_i x_{ik^*}=0$ \AND $\sum_k x_{i^*k}=0$}
        \STATE Schedule the pair $(i^*,k^*)$.
        \STATE Set $x_{i^*k^*}=1$ and $c_{i^*}(t+1)=c_{i^*}(t)+\delta u_{i^*k^*}$.
    \ENDIF
    \STATE Delete the head of the queue.
\ENDWHILE
\end{algorithmic}
\hrule
Algorthm~1.~~~Joint scheduling and resource allocation algorithm.
\end{figure}

In our Stackelberg game framework, the priority is based on the utilities of the followers, which express the satisfaction of the followers. In the design of scheduling scheme, fairness is considered as an important goal. The scheme should take the outcome in the previous TTIs into account. This  can be achieved by adjusting prices for using the channel. The follower has to pay an additional fee for using the channel at TTI $t$ if it has been selected in previous TTIs, which will lead to a decrease in the priority. The additional fee is decided by the cumulative utility of follower. The priority for follower $i$ at TTI $t$ can be defined as
\begin{equation}
  P_{ik}(t) = u_i(\alpha_k^*(t),p_i^*(t)) - c_i(t),
\end{equation}
where $\alpha_k^*(t)$ and $p_i^*(t)$ are the optimal strategy pair under the Stackelberg equilibrium at TTI $t$. $c_i(t)$ is the additional cost, and can be defined as
\begin{equation}
  c_i(t) = \sum_{\tau=0}^{t-1} \sum_{k=1}^K \delta x_{ik}(\tau)  u_i(\alpha_k^*(\tau),p_i^*(\tau)),
\end{equation}
where $\delta>0$ is the fairness coefficient. For a larger $\delta$, the cumulative utility has a larger influence on the priority. If $\delta=0$, the scheduling scheme does not take fairness into account.

Based on the above discussion, during each TTI, every cellular UE and D2D UE form a leader-follower pair and play the Stackelberg game. The optimal price and power can be decided for each pair. The priority for each pair can be calculated and they form a priority queue. Then, the eNB schedules the D2D pairs sequentially according to their order in the queue. If there is a tie, that one channel has been allocated to another D2D pair, or the D2D pair has been scheduled to another channel, the pair is skipped. When each channel is allocated to one D2D pair, the eNB record the outcome and the scheduling is over. The algorithm is summarized in Algorithm~1.

The algorithm has a low complexity, as the optimal strategy for each leader-follower pair is searched in a set with a constant number of elements. To form the priority queue with length $K\times D$, the complexity is $O(KD)$.

To evaluate the performance of the proposed algorithm, we perform several simulations.
We consider a single circular cell environment. The cellular UEs and D2D pairs are uniformly distributed in the cell. The two D2D UEs in a D2D pair are close enough to satisfy the maximum distance constraint of D2D communication. The received signal power is $P_i=P_j d_{ij}^{-2} |h_{ij}|^2$, where $P_i$ and $P_j$ are received power and transmit power, respectively. $d_{ij}$ is the distance between the transmitter and the receiver. $h_{ij}$ represents the complex Gaussian channel coefficient that satisfies $h_{ij}\sim\mathcal{CN}(0,1)$. The scheduling takes place every TTI.
Simulation parameters are summarized in TABLE \ref{table:simulation}.

\begin{table}[!t]
\renewcommand{\arraystretch}{1.3}
\caption{Simulation Parameters and Values}
\label{table:simulation}
\centering
\begin{tabular}{p{1.6in}p{1.5in}}
\hline
Parameter & Values\\
\hline
Cell layout & 1 isolated, circular\\
Cell radius & 500m\\
Number of cellular UEs & 5\\
Number of D2D pairs & 10\\
Max D2D communication distance & 50m\\
Cellular UE Tx power & 23dBm\\
D2D UE Tx power & 0dBm -- 23dBm\\
Thermal noise power density & -174dBm/Hz\\
Bandwidth & 180kHz\\
Transmission time interval & 1ms\\
\hline
\end{tabular}
\end{table}


\begin{figure}[!t]
\centering
\includegraphics[height=3.5in]{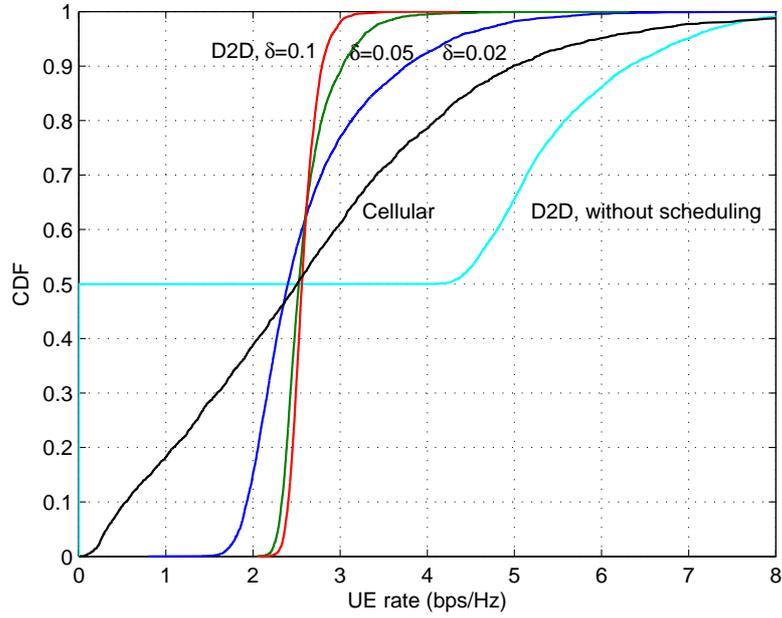}
\caption{UE rate distribution under different $\delta$.}
\label{fig:cdf_rate_delta}
\end{figure}

In Fig.~\ref{fig:cdf_rate_delta}, we study the effect of the fairness coefficient $\delta$. We plot cumulative distribution function (CDF) of UE rate. $\delta$ has little effect on the performance of the cellular UEs. For a small $\delta$, D2D UE rate is distributed in a large range, and has a tendency to converge with a larger $\delta$. Thus, scheduling with a larger $\delta$ achieves better fairness. If we set $\delta$ too large, the scheduling algorithm behaves like the Round Robin scheduling, in which the previous utility is the deciding factor and the utility of the current TTI has little influence. If D2D scheduling is not considered, there will be only $C$ D2D pairs that can get access to the network, resulting in $1-C/D$ proportion of D2D UEs cannot achieve any data transmission.

\begin{figure}[!t]
\centering
\includegraphics[height=3.5in]{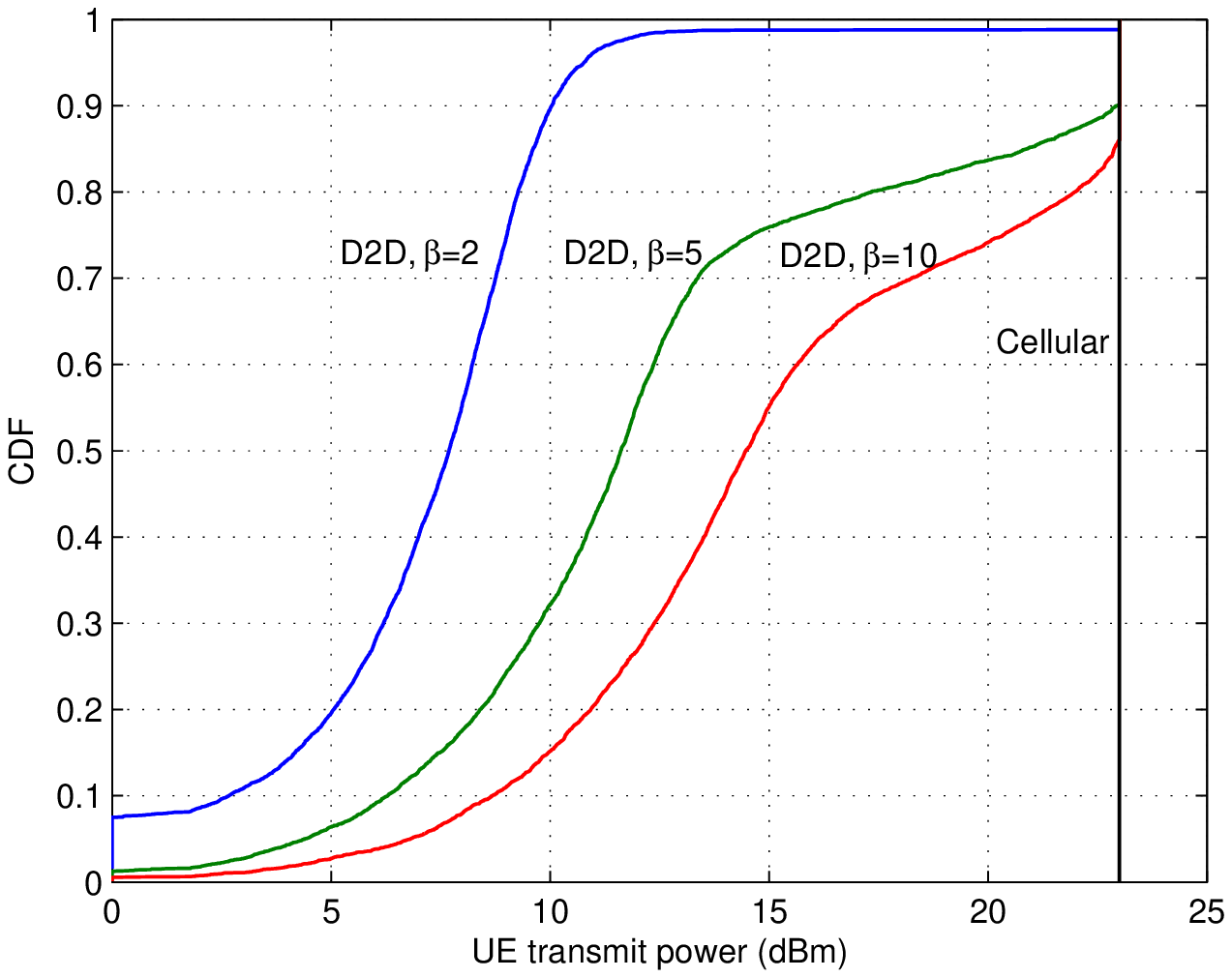}
\caption{UE power distribution under different $\beta$.}
\label{fig:cdf_power} \vskip -0.5em
\end{figure}

\begin{figure}[!t]
\centering
\includegraphics[height=3.5in]{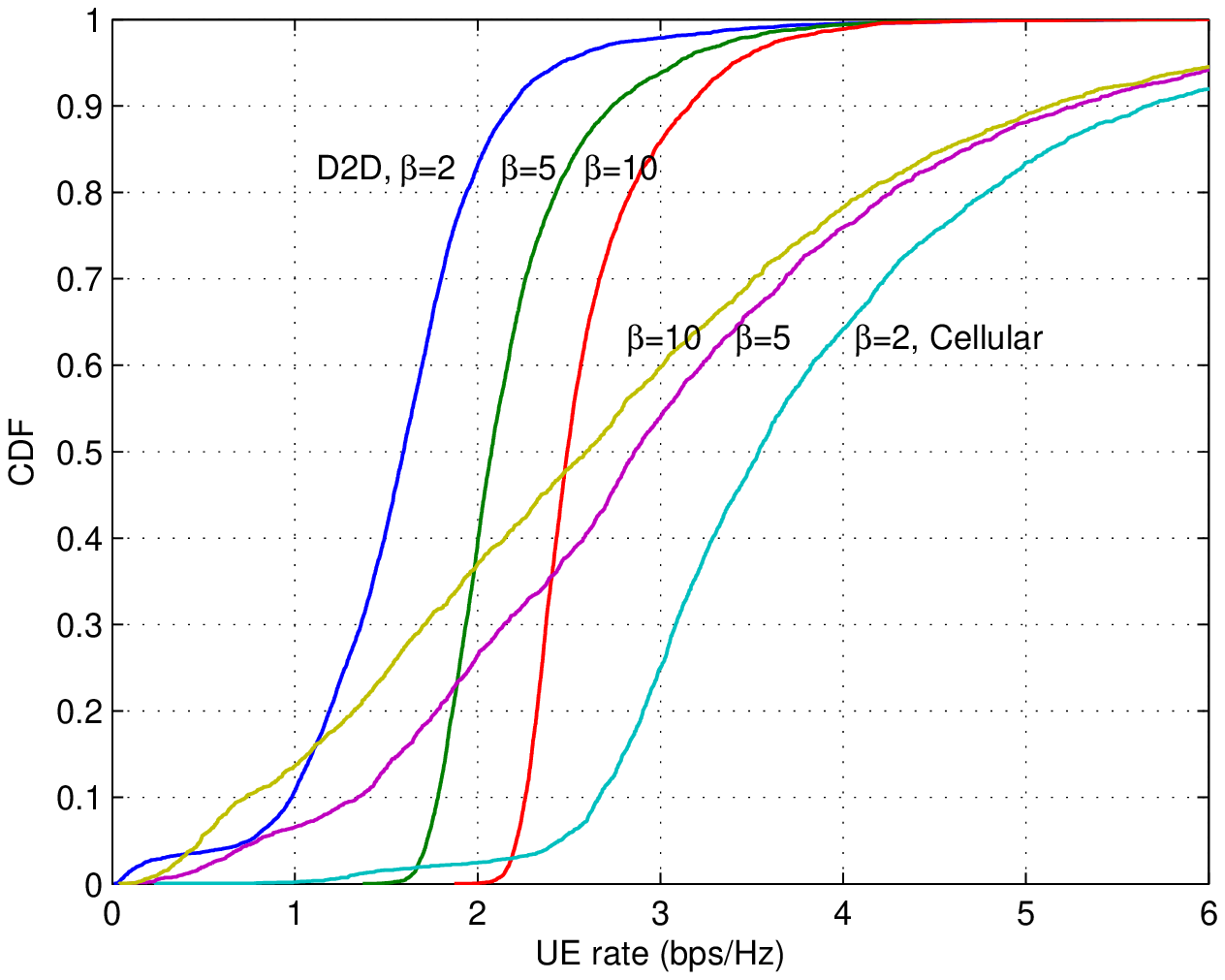}
\caption{UE rate distribution under different $\beta$.}
\label{fig:cdf_rate_beta}
\end{figure}

\begin{figure}[!t]
\centering
\includegraphics[height=3.4in]{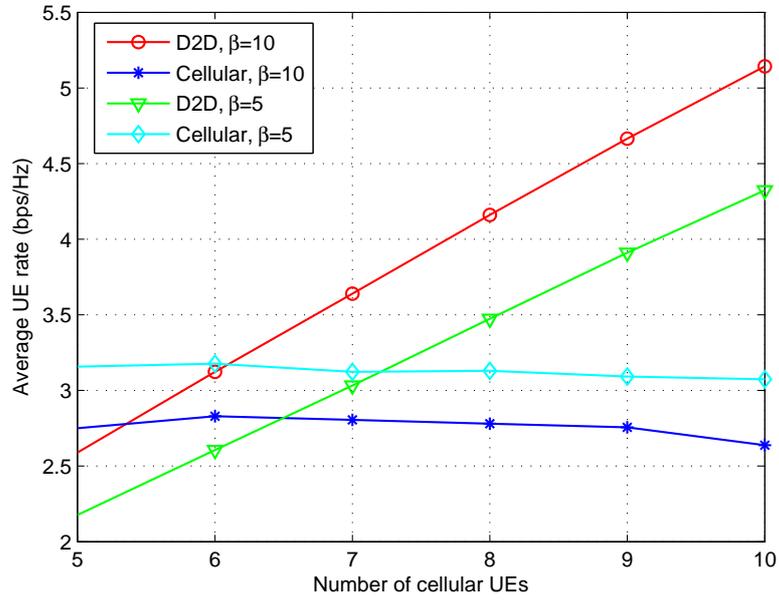}
\caption{Cellular and D2D average UE rate with different number of
cellular UEs.} \label{fig:rate}
\end{figure}

\begin{figure}[!t]
\centering
\includegraphics[height=3.3in]{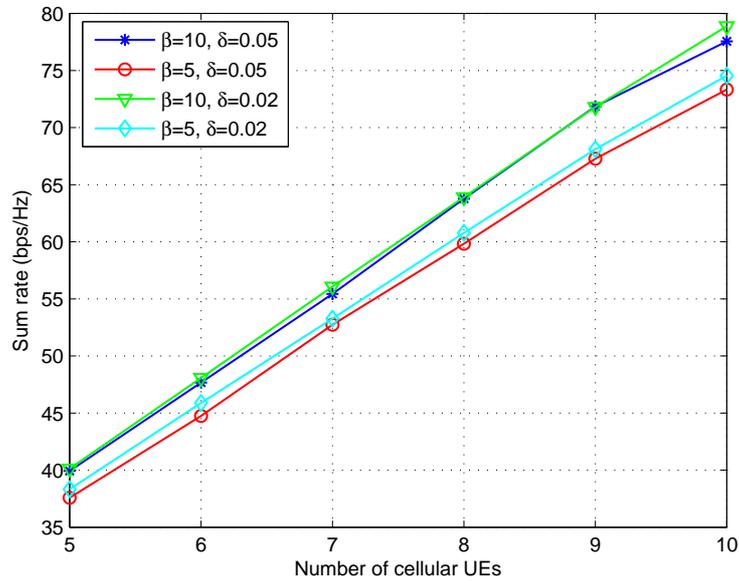}
\caption{System sum rate with different numbers of cellular UEs.}
\label{fig:rate_number}
\end{figure}

In Fig.~\ref{fig:cdf_power} and Fig.~\ref{fig:cdf_rate_beta}, we plot CDF of UE transmit power and UE rate under different scale factor $\beta$, respectively. $\beta$ is the ratio of the leader's gain and the follower's payment. For a larger $\beta$, the payment for the follower is relatively lower, and thus, the follower will choose larger transmit power. This is illustrated in Fig.~\ref{fig:cdf_power}. With larger transmit power of D2D UEs, the rate of D2D UEs is larger. In the meanwhile, D2D UEs cause more interference to cellular UEs, causing a decrease in the rate of cellular UEs. This effect is illustrated in Fig.~\ref{fig:cdf_rate_beta}. We can also observe that D2D UEs use a much smaller transmit power than the cellular UEs.

In Fig.~\ref{fig:rate} and Fig.~\ref{fig:rate_number}, we plot average rate of UEs and system sum rate with different numbers of cellular UEs, respectively. Fig.~\ref{fig:rate} shows that when the number of cellular UEs increases, the rate performance of D2D UEs is improved. This is due to the fact that D2D UEs has more resources to use. We observe that when $C=10$, the rate performance of D2D UEs is much better than that of cellular UEs. 
The effect of scale factor $\beta$ and fairness coefficient $\delta$ is shown clearly in the two figures. It can be seen that the interference is properly managed and the cellular UEs achieve a reasonable rate performance. Besides, as there are $C$ resources and $D$ D2D UEs, the average transmission time of D2D UEs is $D/C$ of that of cellular UEs. In Fig.~\ref{fig:rate}, we observe that D2D UEs achieve a similar or higher rate with cellular UEs. The D2D communication obviously has higher efficiency.

In this section, we developed a Stackelberg game framework for joint power control, channel allocation and scheduling of D2D communication. We analyzed the optimal strategy for the constructed game, and proposed an algorithm to allocate resources and schedule D2D UEs. Throughput, interference management and fairness of the system were considered. Simulation results show that the proposed algorithm can achieve a good throughput performance for both the cellular and the D2D UEs. The D2D UEs can be fairly served. The scale factor $\beta$ and fairness coefficient $\delta$ have an important effect on the performance of the algorithm.
It is also shown that D2D communication can improve the throughput of the system.

%



\section{Energy Efficient Improvement}\label{sec:energy}
Device-to-device (D2D) communication as an underlay to cellular
networks brings significant benefits to users' throughput and
battery lifetime. The allocation of power and channel resources to
D2D communication needs elaborate coordination, as D2D user
equipments (UEs) cause interference to other UEs. In this section, we
propose a novel resource allocation scheme to improve the
performance of D2D communication. Battery lifetime is explicitly
considered as our optimization goal. We first formulate the
allocation problem as a non-cooperative resource allocation game in
which D2D UEs are viewed as players competing for channel resources.
Then, we add pricing to the game in order to improve the efficacy,
and propose an efficient auction algorithm. We also perform
simulations to prove efficacy of the proposed algorithm.

\subsection{Model Assumptions and Battery lifetime}
We also consider a single cell environment with multiple UEs and one eNB
located at the center of the cell. Both the eNB and the UEs are
equipped with a single omni-directional antenna. The system consists
of two types of UEs, cellular UEs and D2D UEs. The D2D UEs are in
pairs, each including one transmitter and one receiver. The number
of cellular UEs and D2D UEs is $C$ and $D(D<C)$, respectively.
There are $C$ orthogonal channels, which is occupied by the corresponding cellular UEs. The channels allocated to cellular UEs are assumed to be fixed. A D2D pair can reuse RBs of one or multiple cellular users, and multiple D2D pairs can share the same RBs.

\begin{figure}[!t]
\centering
\includegraphics[height=3.7in]{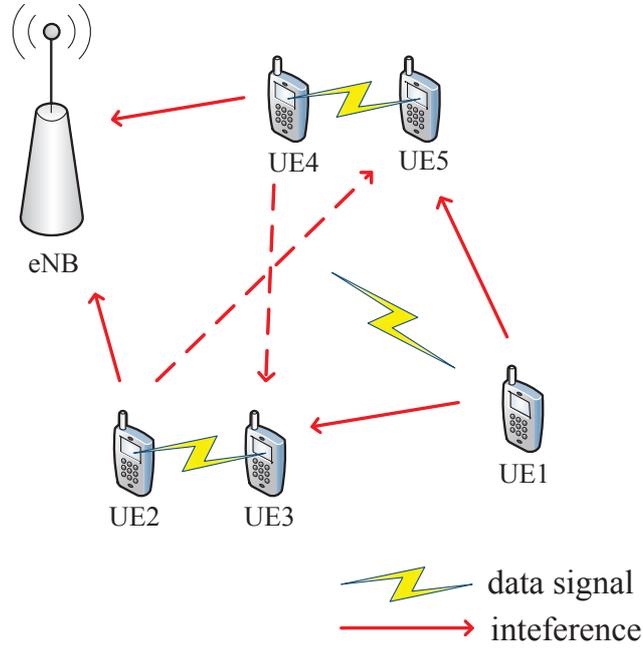}
\caption{System model of D2D underlay communication with uplink resource sharing. UE$_1$ is a cellular user whereas UE$_2$ and UE$_3$, UE$_4$ and UE$_5$ are in D2D communication.}
\label{fig:system_model_ICC}
\end{figure}

D2D communications during the downlink is infeasible
\cite{Wang2013}, and thus, we focus on the uplink period. A scenario
of uplink resource sharing is illustrated in \figurename{
}\ref{fig:system_model_ICC} where one cellular UE (UE$_1$) and two
D2D pairs (UE$_2$ and UE$_3$, UE$_4$ and UE$_5$) are sharing the
same radio resource. UE$_2$ and UE$_4$ are transmitters while UE$_3$
and UE$_5$ are receivers. The two D2D UEs in a pair are close enough
to satisfy the maximum distance constraint of D2D communication, in
order to guarantee the quality of D2D services. The cellular UE
(UE$_1$) transmits data to the eNB, while the eNB suffers
interference from D2D transmitters (UE$_2$ and UE$_4$). Also, the
D2D pairs are in communication while the receivers (UE$_3$, UE$_5$)
are exposed to interference from the cellular user (UE$_1$) and the
other D2D transmitters (UE$_4$, UE$_2$, respectively).

The SINR at the $i$-th D2D receiver's on channel $c$ can be expressed as
\begin{equation}
  \gamma_i^c=\dfrac{p_i^c g_{ii}}{p^c_0 g_{ci}+\sum_{j\neq i} p_j^c g_{ji}+N_0},
\end{equation}
where $p^c_0$ and $p_j^c$ is the transmit power of the $c$-th cellular UE and the $j$-th D2D transmitter on channel $c$, respectively. $g_{ci}$ denotes the channel gain between the $c$-th cellular UE and $i$-th D2D receiver. $g_{ji}$ denotes the channel gain between the $j$-th D2D transmitter and the $i$-th D2D receiver. $N_0$ is the noise power. Note that if $p_i^c=0,\exists i,c$, it means D2D pair $i$ does not reuse the resources of the $c$-th channel. The channel rate of the D2D pair $i$ on the $c$-th channel is
\begin{equation}
  r_i^c=\log_2(1+\gamma_i^c).
\end{equation}

In this section, the battery lifetime is considered as our optimization objective. We focus on the scheduling of D2D communication while the cellular network works in a standard way. The transmit power of cellular users is assumed to be fixed. We want to extend the battery lifetime while achieving a reasonable target rate for every D2D pair.

The energy consumption of D2D users includes two parts, the
transmission energy and circuit energy. The circuit energy
consumption cannot be ignored since it has an important effect on
the battery lifetime. It is the energy consumed by all the circuit
blocks along the signal path \cite{Cui2004}. We define $P_c$ as the
total power consumption of these circuit blocks. Without loss of
generality, we assume all D2D UEs have the same constant circuit
power consumption. Thus, we focus on D2D transmitters. Each D2D
transmitter can distribute its transmit power into $C$ channels. The
total power consumption of the $i$-th D2D transmitter is
$P_i=\sum_{c=1}^C p_i^c+P_c$. According to Peukert's law
\cite{Rao2003}, the battery lifetime $L$ can be approximated by
\begin{equation}
  L=\frac{Q}{I^b},
\end{equation}
where $Q$ is the battery capacity. $I$ is the discharge current. $b$ is a constant around 1.3. We denote the average power consumption of the $i$-th D2D transmitter by $\mathbb{E}[P_i]$.
With an operating voltage $V_0$, the battery lifetime $L_i$ of D2D transmitter $i$ is
\begin{equation}
  L_i=\frac{Q}{(\mathbb{E}[P_i]/V_0)^b}.
\end{equation}

Each D2D pair requires rate $R$.
Our objective is to maximize the total battery lifetime under the rate constraints, which can be expressed as
\begin{equation}
\begin{split}
  \max \sum_{i=1}^D L_i, \quad
  \mathrm{s.t.} \sum_{c=1}^C r_i^c\geq R, \forall i;
   \; p_i^c\geq 0, \forall i, c.
\end{split}
\label{eq:problem}
\end{equation}
Since power is an increasing function of rate, lifetime is maximized when $\sum_{c=1}^C r_i^c = R$. The problem in (\ref{eq:problem}) is complicated to solve and hard for distributed implementation. Therefore, we develop an alternative game-theoretic approach.

\subsection{Resource Allocation Game}\label{sec:game}


Consider D2D transmitters as players. The players are self-interested and each player wants to maximize its own battery lifetime.
In order to achieve this, the best strategy for each player is to minimize its own transmit power at any given time, regardless of other players. Define power vector $\mathbf{p}_i=(p_i^1,p_i^2,...,p_i^C)$ as the transmit power of D2D transmitter $i$ on each channel, which is also seen as player $i$'s strategy. The utility function $u_i$ can be defined as the negative of the total transmit power of the $i$-th D2D transmitter, i.e.,
\begin{equation}
  u_i(\mathbf{p}_i,\mathbf{p}_{-i})=-\sum_{c=1}^C p_i^c,
\end{equation}
where $\mathbf{p}_{-i}$ is the strategy of other players. By taking proper strategy $\mathbf{p}_{i}$, each user wants to maximize its utility.
We begin by characterizing the best response for any player $i$. Define $x^+=\max(x,0), \mathbf{x}^+=(x_1^+, x_2^+,...,x_n^+), \mathbf{q}_i=g_{ii}^{-1}(\mathbf{p}_0 g_{ci}+\sum_{j\neq i}\mathbf{p}_j g_{ji}+N_0)$.
We assume $q_i^1 \leq q_i^2 \leq \cdots \leq q_i^C$. It is obvious that $q_i^c\geq 0, \forall c$.

\begin{proposition}
Given other players' strategies, the best response of player $i$, i.e., the strategy that maximizes its utility is
\begin{equation}
  \mathbf{p}_{i}^*=\left[\left(2^{R} \prod_{c=1}^k q_i^c \right)^{1/k} -\mathbf{q}_i \right]^+,
\label{eq:pi}
\end{equation}
where $k=\arg\min_{k} \sum_{c=1}^k p_i^c, k\in\{1,2,...,C\}.$
\label{prop:strategy}
\end{proposition}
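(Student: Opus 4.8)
The plan is to recognize the best-response computation as a convex power-minimization problem of water-filling type and to solve it through the Karush--Kuhn--Tucker (KKT) conditions. First I would make player $i$'s view of the interference explicit: with the shorthand $q_i^c$ defined just before the statement, the SINR on channel $c$ is simply $\gamma_i^c=p_i^c/q_i^c$, so player $i$'s feasible strategies are the power vectors $\mathbf{p}_i\ge 0$ with $\sum_{c=1}^C\log_2\!\bigl(1+p_i^c/q_i^c\bigr)\ge R$, i.e.\ $\prod_{c=1}^C\bigl(1+p_i^c/q_i^c\bigr)\ge 2^R$. Since maximizing $u_i$ is the same as minimizing $\sum_c p_i^c$ and the rate is strictly increasing in every $p_i^c$, the rate constraint is active at any maximizer. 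The best response therefore solves $\min_{\mathbf{p}_i\ge 0}\sum_c p_i^c$ subject to $\sum_c\log_2(1+p_i^c/q_i^c)=R$; since each summand is concave in $p_i^c$ this is a convex program, and the KKT conditions are necessary and sufficient.

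Next I would substitute $y_c:=p_i^c+q_i^c\ (\ge q_i^c)$, turning the problem into minimizing $\sum_c y_c$ subject to $\prod_c y_c=2^R\prod_c q_i^c$ and $y_c\ge q_i^c$, and write the Lagrangian with a multiplier $\lambda$ for the (logarithm of the) product constraint and multipliers $\mu_c\ge 0$ for the constraints $y_c\ge q_i^c$. Stationarity gives $y_c=\lambda$ on every channel with $\mu_c=0$, and $\mu_c=1-\lambda/q_i^c\ge 0$ (hence $q_i^c\ge\lambda$) on every channel with $y_c=q_i^c$, i.e.\ $p_i^c=0$. This is a threshold rule, and I would use the ordering $q_i^1\le\cdots\le q_i^C$ to conclude that the active (``on'') channels form a prefix $\{1,\dots,k\}$ on which $y_c=\lambda$, the remaining channels being silent. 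Substituting $y_c=\lambda$ for $c\le k$ and $y_c=q_i^c$ for $c>k$ into the product constraint yields $\lambda^k=2^R\prod_{c=1}^k q_i^c$, so $\lambda=\bigl(2^R\prod_{c=1}^k q_i^c\bigr)^{1/k}$ and $p_i^c=\lambda-q_i^c$ for $c\le k$ and $0$ otherwise --- which is exactly (\ref{eq:pi}) once the $[\,\cdot\,]^+$ absorbs the sign. Finally I would pin down $k$: feasibility of the prefix solution needs $\lambda\ge q_i^k$, while uniqueness of the KKT point forces $k$ to minimize the resulting total power $k\lambda-\sum_{c=1}^k q_i^c$, i.e.\ $k=\arg\min_k\sum_{c=1}^k p_i^c$ as stated.

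I expect the crux to be the correct handling of the non-negativity constraints $p_i^c\ge 0$: one must show, by a short exchange/monotonicity argument using the ordering of the $q_i^c$, that the silent set is always an upper prefix of the sorted interference levels, and then verify that the index $k$ picked out by the threshold condition $q_i^k<\lambda\le q_i^{k+1}$ coincides with the power-minimizing index in the statement rather than being an arbitrary feasible choice. The remaining steps --- the change of variables, the KKT (equivalently AM--GM) computation, and reading off $\lambda$ --- should be routine.
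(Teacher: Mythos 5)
Your proposal is correct and follows essentially the same route as the paper: both form the Lagrangian of the power-minimization problem with the active rate constraint, solve the KKT conditions to obtain the water-filling form $p_i^c=(\lambda/\ln 2-q_i^c)^+$ (your change of variables $y_c=p_i^c+q_i^c$ just absorbs the $\ln 2$), use the ordering of the $q_i^c$ to identify the active channels as a prefix, solve the product form of the rate constraint for the water level, and determine $k$ by minimizing the resulting total power over $k\in\{1,\dots,C\}$. Your treatment is slightly more careful than the paper's in explicitly justifying the prefix structure of the silent set and the convexity that makes KKT sufficient, but these are refinements of the same argument rather than a different approach.
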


\begin{proof}
Consider the Lagrangian
\begin{equation}
  \mathcal{L}(\lambda,\mathbf{p}_i)=-\sum_{c=1}^C p_i^c + \lambda\left(\sum_{c=1}^C r_i^c-R\right),
\end{equation}
where $\lambda$ is the Lagrange multiplier.
Solving the Kuhn-Tucker condition $\frac{\partial \mathcal{L}}{\partial p_i^c}=0, \forall c$ and
considering $p_i^c\geq 0,\forall c$, we obtain
\begin{equation}
  {p}_{i}^c=\left(\frac{\lambda}{\ln 2}-q_i^c\right)^+, \forall c.
\label{eq:pi2}
\end{equation}
From the Kuhn-Tucker condition, we know $\lambda \geq 0$. Let us assume $0\leq q_i^0\leq \cdots \leq q_i^k \leq \lambda \leq q_i^{k+1} \leq\cdots\leq q_i^C,1\leq k\leq C$. Considering the rate constraint, we derive
\begin{equation}
  \lambda=\ln 2 \left(2^{R} \prod_{c=1}^k q_i^c \right)^{1/k}.
\label{eq:lambda}
\end{equation}
Substituting (\ref{eq:lambda}) into (\ref{eq:pi2}), we get (\ref{eq:pi}). There are $C$ solutions for $k$, and thus, we get $C$ strategies. To maximize the utility function, $k$ is searched in $\{1,2,...,C\}$.
\end{proof}

From the proof, we know that given other players' strategies, the best response of player $i$ is to allocate its transmit power into the $k$ channels with the best channel qualities. Here, the channel quality refers to $q_i^c$, since the SINR of player $i$ on channel $c$ is given by $\gamma_i^c=p_i^c/q_i^c$, and for lower $q_i^c$, the channel quality is better. $k$ is determined by the channel qualities.

\begin{definition}
A set of strategies $\mathbf{p}$ for each player is a Nash equilibrium if no unilateral deviation in strategy by any single player is profitable for that player, i.e.,
\begin{equation}
  u_i(\mathbf{p}_i,\mathbf{p}_{-i})\geq u_i(\mathbf{p}_i',\mathbf{p}_{-i}),\forall i.
\end{equation}
\end{definition}

The Nash equilibrium offers a stable outcome of a game where multiple players with conflicting interests compete through self-optimization and reach a point where no player wishes to deviate. 

\begin{proposition}
A Nash equilibrium exists in the proposed game.
\end{proposition}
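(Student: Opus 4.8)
The plan is to recognise this as a non-cooperative game with compact convex strategy spaces and to invoke a standard fixed-point existence theorem. First I would make each player's strategy set compact: every device has a finite maximum transmit power $P_{\max}$, so player $i$'s strategy set is $S_i=\{\mathbf{p}_i\in\mathbb{R}_{\ge 0}^C:\sum_{c}p_i^c\le P_{\max}\}$, which is nonempty, convex, and compact, and so is the joint set $S=\prod_{i=1}^D S_i$. Note that the rate requirement $\sum_c r_i^c\ge R$ enters only through the \emph{feasible} set of each player (otherwise $\mathbf{p}_i=\mathbf 0$ would be trivially optimal), and since $r_i^c$ depends on the interference generated by $\mathbf{p}_{-i}$, this is formally a generalized Nash game with a coupled constraint; I will assume throughout that the targets are feasible, which is precisely the regime in which Proposition~\ref{prop:strategy} applies.

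Next I would check the hypotheses of the Debreu--Glicksberg--Fan / Kakutani existence theorem for such games. The payoff $u_i(\mathbf{p}_i,\mathbf{p}_{-i})=-\sum_c p_i^c$ is continuous on $S$ and, being linear, concave (hence quasiconcave) in $\mathbf{p}_i$. The feasible correspondence $\Gamma_i(\mathbf{p}_{-i})=\{\mathbf{p}_i\in S_i:\sum_c\log_2(1+p_i^c/q_i^c(\mathbf{p}_{-i}))\ge R\}$ has convex values, because $\mathbf{p}_i\mapsto\sum_c\log_2(1+p_i^c/q_i^c)$ is concave so its superlevel set is convex; and it is continuous in $\mathbf{p}_{-i}$, since each $q_i^c$ is an affine function of $\mathbf{p}_{-i}$ that stays between $N_0/g_{ii}$ and a finite bound on $S$ (this gives upper hemicontinuity), while lower hemicontinuity follows from a Slater-type observation, namely that a vector with all entries equal to a sufficiently large common value lies in the interior of $\Gamma_i(\mathbf{p}_{-i})$ for every $\mathbf{p}_{-i}\in S_{-i}$. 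By Berge's maximum theorem the best-response correspondence $\mathrm{BR}_i(\mathbf{p}_{-i})=\arg\max_{\mathbf{p}_i\in\Gamma_i(\mathbf{p}_{-i})}u_i(\mathbf{p}_i,\mathbf{p}_{-i})$ is then upper hemicontinuous with nonempty, compact, convex values (convex because the maximizers of a linear function over a convex set form a face; Proposition~\ref{prop:strategy} even writes this set down explicitly). Consequently the product correspondence $\mathrm{BR}=\prod_{i=1}^D\mathrm{BR}_i:S\rightrightarrows S$ satisfies the hypotheses of Kakutani's fixed-point theorem, so there exists $\mathbf{p}^*\in\mathrm{BR}(\mathbf{p}^*)$; by construction $\mathbf{p}_i^*$ maximises $u_i(\cdot,\mathbf{p}_{-i}^*)$ for every $i$, which is exactly the definition of a Nash equilibrium.

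I expect the coupled-constraint aspect to be the main obstacle: since player $i$'s feasible set moves with $\mathbf{p}_{-i}$, one must be careful to establish continuity --- in particular lower hemicontinuity --- of $\Gamma_i$, and to handle the boundary case where the rate target becomes (in)feasible under the power cap. A cleaner route, which I would present as the primary argument if space permits, bypasses the correspondence altogether: Proposition~\ref{prop:strategy} gives the best response in closed form as a function of $\mathbf{q}_i$, and since $\mathbf{q}_i$ is continuous (indeed affine) in $\mathbf{p}_{-i}$ and the water-filling-type map $\mathbf{q}_i\mapsto\mathbf{p}_i^*$ is continuous --- it remains continuous across the thresholds where the number $k$ of active channels changes, just as classical water-filling does --- the joint best response is a single-valued continuous self-map of the compact convex set $S$, and Brouwer's fixed-point theorem then yields the equilibrium directly.
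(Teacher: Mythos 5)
Your proposal is correct and rests on the same foundational result the paper invokes (the existence theorem for concave games with compact convex strategy sets and continuous quasi-concave utilities, cited there as \cite{Osborne1994}), but it is substantially more careful than the paper's own argument on the one point that actually matters. The paper's proof simply takes each strategy set to be the box $\{\mathbf{p}_i : 0 \le p_i^c \le p_{max},\ \forall c\}$, verifies that this box is nonempty, compact and convex, and that $u_i = -\sum_c p_i^c$ is continuous and quasi-concave, and stops; it never explains how the rate constraint $\sum_c r_i^c \ge R$ enters the existence argument, even though without that constraint the game as literally described has the trivial unique equilibrium $\mathbf{p}=\mathbf{0}$, and with it the game becomes a generalized Nash game whose feasible sets are coupled through the interference terms $q_i^c(\mathbf{p}_{-i})$. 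You identify exactly this issue and supply the missing machinery: convexity of the feasible correspondence via concavity of the rate function, upper and lower hemicontinuity (the latter via a Slater point, which does require the feasibility assumption you state explicitly --- an assumption the paper also makes, but only implicitly), then Berge's maximum theorem and Kakutani. Your alternative Brouwer route through the closed-form best response of Proposition \ref{prop:strategy} is also sound: the best response is in fact single-valued, because the rate constraint is active at the optimum and the rate function is strictly concave in $\mathbf{p}_i$, so minimizing the linear power objective over a strictly convex feasible set has a unique solution, and the resulting water-filling map is continuous across changes in the active-channel count $k$. In short, you use the same fixed-point theorem but close a real gap in the paper's proof rather than reproducing it; the only caveat is that both your argument and the paper's ultimately lean on the unstated hypothesis that the rate targets are jointly achievable under the power caps.
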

\begin{proof}
A Nash equilibrium exists \cite{Osborne1994}, if $\forall i$
\begin{enumerate}
  \item the set of strategies is a nonempty compact convex subset of a Euclidean space;
  \item the utility function is continuous and quasi-concave.
\end{enumerate}

Since rate is a $\log$-increasing function of transmit power $p$, to achieve a rate target $R$, player $i$ has an upper bound of its transmit power $p_i^c \leq p_{max}, \forall c$. The set of player $i$'s strategies is $\mathcal{P}_i=\{\mathbf{p}_i|0 \leq p_i^c \leq p_{max}, \forall c\}$, which is nonempty compact convex subset of Euclidean space $\mathbb{R}^C$. For any two strategies $\mathbf{p}_i,\mathbf{p}_i' \in \mathcal{P}_i$, 
we have $\theta \mathbf{p}_i + (1-\theta) \mathbf{p}_i' \in \mathcal{P}_i, \forall \theta\in[0,1]$. Thus, $\mathcal{P}_i$ is a convex set.
$u_i$ is obviously continuous. Besides, we have $u_i(\theta \mathbf{p}_i+ (1-\theta) \mathbf{p}_i')=-\sum_{c=1}^C[\theta p_i^c+(1-\theta) p_i'^{c}]
\geq \min \big(u_i(\mathbf{p}_i), u_i(\mathbf{p}_i') \big), \forall \mathbf{p}_i, \mathbf{p}_i' \in \mathcal{P}_i, \forall \theta\in [0,1]$. Thus, $u_i$ is quasi-concave.
\end{proof}

The proposition establishes the existence of a Nash equilibrium of the game, which guarantees the feasibility of the resource allocation game. We are also concerned about the efficiency of the game.

\begin{definition}
A strategy $\mathbf{\hat p}$ is Pareto optimal (efficient) if there exists no other strategy $\mathbf{p}$ such that $u_i(\mathbf{\hat p}) \geq u_i(\mathbf{p})$ for all $i$ and $u_i(\mathbf{\hat p})>u_i(\mathbf{p})$ for some $i$.
\end{definition}
\begin{proposition}
Pareto optimal $\mathbf{\hat p}$ of the resource allocation game must be a Nash equilibrium, i.e., $\mathbf{\hat p}=\mathbf{p}^*$.
\end{proposition}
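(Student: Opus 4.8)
The plan is to prove the contrapositive: if a strategy profile $\mathbf{\hat p}=(\mathbf{\hat p}_1,\dots,\mathbf{\hat p}_D)$ is \emph{not} a Nash equilibrium, then it cannot be Pareto optimal. The lever is the one structural peculiarity of this game: a D2D transmitter's power is a pure \emph{negative} externality. Each rate $r_j^c=\log_2\big(1+p_j^c g_{jj}/(p_0^c g_{cj}+\sum_{l\ne j}p_l^c g_{lj}+N_0)\big)$ is strictly decreasing in every interfering power $p_l^c$, $l\ne j$, and $u_j=-\sum_c p_j^c$ depends only on $\mathbf{p}_j$; hence lowering one player's powers can only relax the others' rate constraints and never changes anyone else's utility. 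So a player who is not already minimizing its own power can cut back ``for free'', which is incompatible with Pareto optimality.

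I would carry this out in three steps. \textbf{Step 1 (tightness).} At any Pareto optimal $\mathbf{\hat p}$ every rate constraint is tight, i.e.\ $\sum_c r_i^c=R$ for each $i$ at $\mathbf{\hat p}$: if some player $m$ had strict slack, replacing $\mathbf{\hat p}_m$ by $(1-\varepsilon)\mathbf{\hat p}_m$ for small $\varepsilon>0$ keeps $m$ feasible, strictly raises $u_m$, and — being componentwise smaller — only lowers the interference seen by every other player, so all their constraints still hold and their utilities are unchanged, contradicting Pareto optimality. \textbf{Step 2 (non-equilibrium).} By Proposition~\ref{prop:strategy}, $\mathbf{\hat p}_i$ is a best response to $\mathbf{\hat p}_{-i}$ precisely when it is the water-filling minimizer of $\sum_c p_i^c$ subject to $\sum_c r_i^c\ge R$; so if $\mathbf{\hat p}$ is not a Nash equilibrium there is a player $i$ with a feasible allocation $\mathbf{p}_i'$ of strictly smaller total transmit power, i.e.\ $u_i(\mathbf{p}_i',\mathbf{\hat p}_{-i})>u_i(\mathbf{\hat p})$. \textbf{Step 3 (dominating profile).} If $\mathbf{p}_i'$ can be chosen componentwise no larger than $\mathbf{\hat p}_i$, then $(\mathbf{p}_i',\mathbf{\hat p}_{-i})$ is feasible, strictly better for player $i$, and unchanged for everyone else — it Pareto-dominates $\mathbf{\hat p}$, the desired contradiction. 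Thus $\mathbf{\hat p}$ must be a Nash equilibrium, i.e.\ it coincides with an equilibrium profile $\mathbf{p}^*$.

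The crux — and the step I expect to be the real work — is justifying the ``componentwise no larger'' reduction in Step~3. By Step~1, at a Pareto point lowering $p_i^c$ on any single sub-channel already breaks player $i$'s own constraint, so the only profitable moves for $i$ must \emph{reallocate} power: raise $p_i^{c_1}$ on some sub-channel while cutting it by more elsewhere. That raises the interference $i$ inflicts on the users of sub-channel $c_1$, each of whom (again tight) then falls below its target, so $(\mathbf{p}_i',\mathbf{\hat p}_{-i})$ need not be feasible. I would close this gap by a monotone best-response argument: let each player hurt by $i$'s reallocation also re-water-fill, and use concavity of $\sum_c r_i^c$ in $\mathbf{p}_i$, monotonicity of the $r_j^c$ in the interference terms, and the fact that $\mathbf{\hat p}_i$ itself certifies feasibility of player $i$'s problem at the \emph{original} interference level, to show these responses chain down to a feasible profile still dominating $\mathbf{\hat p}$ — intuitively, the affected players were not power-minimizing either, so their re-optimization lowers their powers too. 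A cleaner alternative route would be to first prove the Nash equilibrium is unique and is the unique Pareto optimum; everything outside this reallocation case is routine monotonicity/concavity bookkeeping.
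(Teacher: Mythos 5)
Your Steps 1 and 2 are sound, and the tightness lemma in Step 1 is a sharp observation, but the proposal does not prove the proposition: as you yourself note, Step 1 makes the ``componentwise no larger'' case of Step 3 vacuous, so the entire burden falls on the reallocation case, and the repair you sketch for it does not go through. After player $i$ reallocates power onto some sub-channel $c_1$, every player $j$ sharing $c_1$ sees strictly more interference there and the same interference elsewhere, so its feasible set strictly shrinks. Nothing prevents such a $j$ from \emph{already} being a best responder at $\mathbf{\hat p}$ (only player $i$ need be off its best response for $\mathbf{\hat p}$ to fail to be a Nash equilibrium); in that case $j$'s minimum attainable total power at the new interference level strictly exceeds $\sum_c \hat p_j^c$, so however $j$ re-water-fills it ends up with $u_j$ strictly below $\hat u_j$. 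The resulting profile is then \emph{incomparable} to $\mathbf{\hat p}$, not dominating, and no contradiction with Pareto optimality is obtained. The intuition that ``the affected players were not power-minimizing either'' is exactly what fails; moreover, even where it holds, the proposed best-response chain has no convergence or monotonicity guarantee. The alternative route you mention (uniqueness of the equilibrium and of the Pareto optimum) is also not available here: with multiple channels and water-filling responses the Nash equilibrium of this game need not be unique.

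The paper's own proof takes a different and much more economical route that never creates new interference: since $\hat{\mathbf{p}}_i$ is not a power minimizer, $\sum_c \hat p_i^c > \sum_c p_i^{*c}$, so some component satisfies $\hat p_i^\ell > p_i^{*\ell}$; the deviation lowers \emph{only} that one component to $p_i^{*\ell}$ and leaves everything else untouched. This strictly raises $u_i$ and weakly raises every other player's rate, so the other players' feasibility and utilities are preserved for free. If you want to rescue your argument you should switch to a monotone, single-component reduction of this kind rather than a full re-water-filling deviation. Be aware, though, that your own Step 1 exposes a point the paper's proof is silent on as well: lowering $p_i^\ell$ lowers $r_i^\ell$, so one must still account for player $i$'s own (tight) rate constraint when exhibiting the dominating profile.
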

\begin{proof}
We prove the proposition by contradiction.
Assume $\mathbf{\hat p}_i \neq \mathbf{p}_i^*, \exists i$. Since the utility function $u_i(\mathbf{p}_i,\mathbf{p}_{-i})=-\sum_c p_i^c$ is maximized when $\mathbf{p}_i=\mathbf{p}_i^*$ given $\mathbf{p}_{-i}$, there exists $\ell$ such that $\hat{p}_i^\ell>{p}_i^{*\ell}$. Thus, if player $i$ sets ${p}_i'^{\ell}=p_i^{*\ell}$ and keeps other components unchanged, player $i$ will get larger utility $u'_i>\hat{u}_i$. Moreover, For all $j\neq i$, we have $q_j'^{\ell}<\hat{q}_j^{\ell}$ and $r_j'^{\ell} = \log_2(1+p_j'^\ell/q_j'^\ell) > \hat{r}_j^\ell$. Thus, $\sum_c r_j'^{c}>R$, which indicates the other players' rate constraints are not violated. We still have $u'_j=\hat{u}_j,\forall j\neq i$. This is a contradiction with that $\mathbf{\hat p}$ is Pareto optimal. Therefore, we have $\mathbf{\hat p}_i = \mathbf{p}_i^*$.
\end{proof}

The game constructed above is a general case. In practical scenarios, there may be resource sharing constraints of D2D communication. We address different resource sharing modes based on the following criteria:
\begin{enumerate}
  \item Multiple D2D pairs sharing the same channel;
  \item one D2D pair reusing multiple channels.
\end{enumerate}

For the first criterion, if one channel is only allowed to be reused by no more than one D2D pair, D2D UEs do not receive interference from other D2D UEs. Player $i$'s strategies will be constrained to allocating power to channels that satisfies $p_j^c=0, \forall j\neq i$. The channel qualities are better due to less interference. However, it has the disadvantage of low spectrum
efficiency because of the low frequency reuse.

For the second criterion, if one D2D pair is only allowed to reuse one of the $C$ channels, player $i$ will always choose to reuse the channel of the best quality, i.e., with the lowest $q_i^c$, as analyzed above. Thus, the best response for player $i$ is $p_i^c=q_i^c(2^{R}-1),  c=\arg\min q_i^c$ and $p_i^c=0$ for other $c$.

\subsection{Resource Auction Algorithm}\label{sec:auction}
In the above game, each player competes to maximize its own utility by adjusting transmit power on each channel. However, it ignores the cost it imposes on the cellular UEs and other D2D UEs by causing interference to them. In other words, the behavior of the players has an externality for the system. Therefore, we consider pricing for the channel resources, which is an effective way to deal with externality.

Pricing for the channel resources encourages D2D pairs to reuse the channels more efficiently. The cost is larger if a D2D pair reuse multiple channel resources or multiple D2D pairs sharing the same resources. Here, we consider linear pricing, which is widely used due to its simplicity and efficiency. We redefine the utility function of player $i$ for channel $c$ as
\begin{equation}
  u_i^c = -p_i^c - \beta(m_c+n_i),
\label{eq:utility}
\end{equation}
where $\beta$ is the unit price, $m_c$ denotes the number of players occupying channel $c$ and $n_i$ denotes the number of channels player $i$ reuses.
With pricing, the resource allocation problem can be efficiently solved by auction. Auction is effective in that the bidding and pricing is a guidance to lead the player to consider the real payoffs.

\begin{figure}[!t]
\centering
\small
\algsetup{indent=2em}
\hrule
\vskip 0.4em
Algorithm 1. Resource Auction Algorithm
\vskip 0.3em
\hrule
\vskip 0.4em
\begin{algorithmic}[1]
\STATE Set $\beta$, $R$; $\mathbf{P}_{i,j}\leftarrow 0, \forall i,j$; $m_c\leftarrow 0,\forall c$; $n_i \leftarrow 0, \forall i$;
\WHILE{$n_i=0, \exists i$}
    \FOR{$i=1$ \TO $D$}
    \IF{$n_i=1$}
        \STATE 
        $p_i^{c} \leftarrow q_i^c(2^R-1), \forall c$;
    \ELSE
        \STATE
        Evaluate all channels as its $(n_i+1)$-th channel;
    \ENDIF
    \STATE $u_i^c \leftarrow p_i^c + \beta(m_c+n_i), \forall c$;
    \IF{$\mathbf{P}_{i,c} \neq 0$}
        \STATE $u_i^c\leftarrow\infty$;
    \ENDIF
    \ENDFOR
    \STATE $(\tilde{i},\tilde{c}) \leftarrow \arg\max u_i^c$;
    \IF{$m_{\tilde{c}} \neq 0$ \OR $n_{\tilde{i}} \neq 0$}
        \STATE Adjust the power according to (\ref{eq:multiple});
    \ENDIF
    \STATE $\mathbf{P}_{\tilde{i},\tilde{c}} \leftarrow p_{\tilde{i}}^{\tilde{c}}$; $\mathbf{P}_{i',j'} \leftarrow p_{i'}^{j'}$ for adjusted power;
    \STATE $m_{\tilde{c}} \leftarrow m_{\tilde{c}} + 1, n_{\tilde{i}} \leftarrow n_{\tilde{i}} + 1$;
\ENDWHILE
\end{algorithmic}
\hrule
\end{figure}

Based on (\ref{eq:utility}), we design the mechanism of resource allocation auction which iteratively decides the transmit power. The channels are auctioned one by one. During the auction, players need to bid for all channels to compete for resources.
In the first round, each player calculates its best transmit power and corresponding utility. Since this is the first resource for all players, they will allocate enough power to reach the rate constraint, i.e., $p_i^c=q_i^c(2^R-1),\forall i,c$.
The player and channel with the highest utility $(i,c)=\arg\max u_i^c$ wins the auction and channel ${c}$ is sold to player ${i}$. Then, the auction moves to the next round. In this round, player $i$ can compete for a second resource. It can distribute its rate into two channels by evaluating all channels as its second channel. The transmit power is calculated according to (\ref{eq:pi}). $c$ can also be auctioned once more. But the cost of player $i$ for any resource and the cost of any player for channel $c$ is increased. Besides, the utility of player $i$ for $c$ is set to infinity, to prevent the resource from being sold to the same player twice. If $c$ is sold to another player $j$, player $i$ needs to readjust its transmit power. In general cases, if multiple players share the same channel $c$, and a new player comes in, the transmit power is adjusted in order to keep the former players' rate on channel $c$ unchanged, which can be obtained by solving the linear equations
\begin{equation}
  \frac{g_{ii}}{2^{r_i^c}-1}p_i^c - \sum_{j\neq i}g_{ji} p_j^c - p_0^c g_{ci} - N_0 = 0, \forall i \text{ sharing } c.
  \label{eq:multiple}
\end{equation}
$r_i^c$ is the original rate for the former players, or the target rate for the newcomer. Similarly, if player $i$ obtains another channel, all players adjust the power on $c$.
The auction repeats the above steps until all the players get at least one channel. The algorithm is summarized in Algorithm 1.

To evaluate the performance of the proposed algorithm, we perform several simulations.
We consider a single circular cell environment. The cellular UEs and D2D pairs are uniformly distributed in the cell. The maximum distance constraint of D2D communication is satisfied.
The received signal power is $P_i=P_j d_{ij}^{-2} |h_{ij}|^2$, where $P_i$ and $P_j$ are received power and transmit power, respectively. $d_{ij}$ is the distance between the transmitter and the receiver. $h_{ij}$ represents the complex Gaussian channel coefficient that satisfies $h_{ij}\sim\mathcal{CN}(0,1)$.
Simulation parameters are summarized in TABLE \ref{table:simulation}.

\begin{table}[!t]
\renewcommand{\arraystretch}{1.3}
\caption{Simulation Parameters and Values}
\label{table:simulation}
\centering
\begin{tabular}{p{1.6in}p{1.5in}}
\hline
Parameter & Values\\
\hline
Cell radius & 350m\\
Number of cellular UEs & 8\\
Number of D2D pairs & 3\\
Max D2D communication distance & 30m\\
Cellular UE Tx power & 250mW (24dBm)\\
Thermal Noise power & 1e-7W (-40dBm)\\
Circuit power consumption & 100mW (20dBm)\\
Battery capacity & 800mA$\cdot$h\\
Battery operating voltage & 4V\\
\hline
\end{tabular}
\end{table}

We plot average D2D battery lifetime and cellular rate under
different conditions. In each figure, we compare our algorithm with
random allocation, which allocates resources to D2D pairs randomly,
and centralized allocation, which maximizes the overall lifetime
rather than the individuals. The figures show that our algorithm
performs close to the centralized scheme and is much superior to
random allocation.

\begin{figure}[!t]
\centering
\includegraphics[height=3.5in]{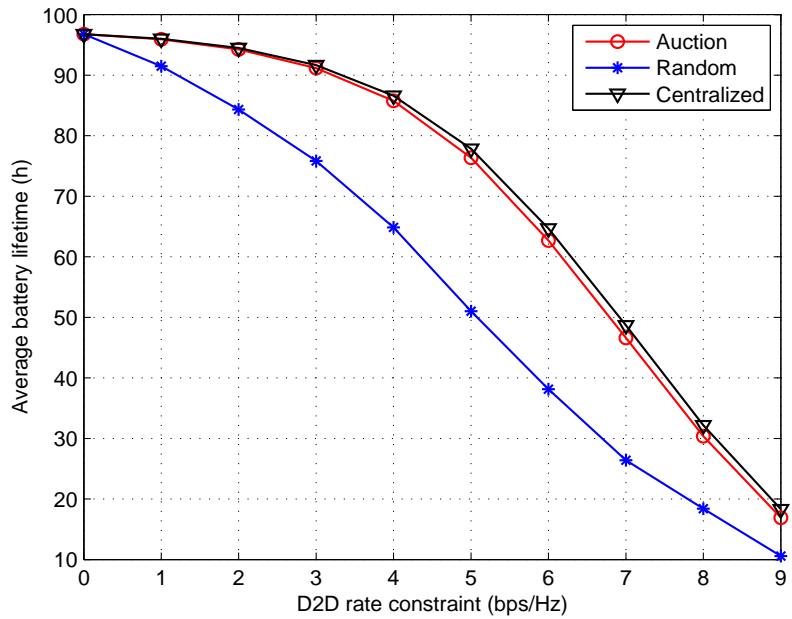}
\caption{Average D2D battery lifetime with different rate
constraints.} \label{fig:lifetime_r}
\end{figure}

\begin{figure}[!t]
\centering
\includegraphics[height=3.5in]{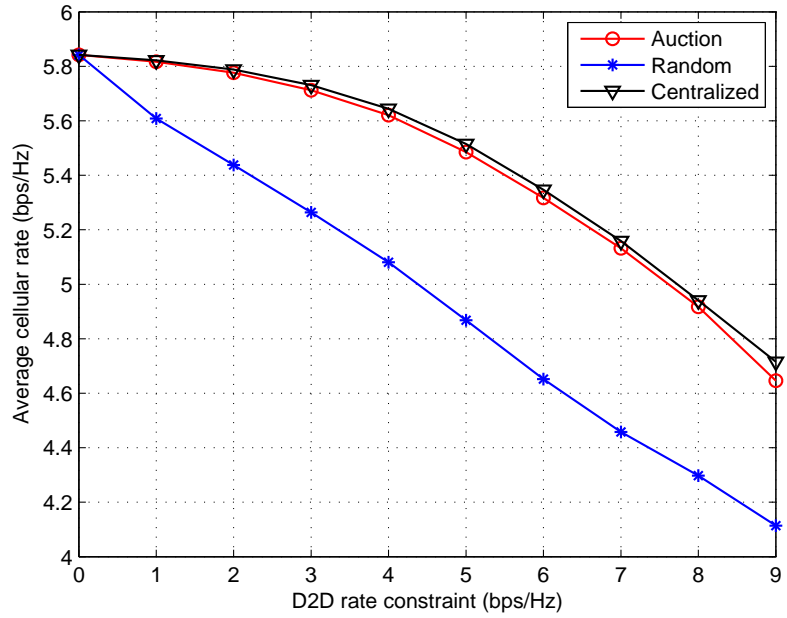}
\caption{Average cellular rate with different D2D rate constraints.}
\label{fig:rate_r}
\end{figure}

\begin{figure}[!t]
\centering
\includegraphics[height=3.4in]{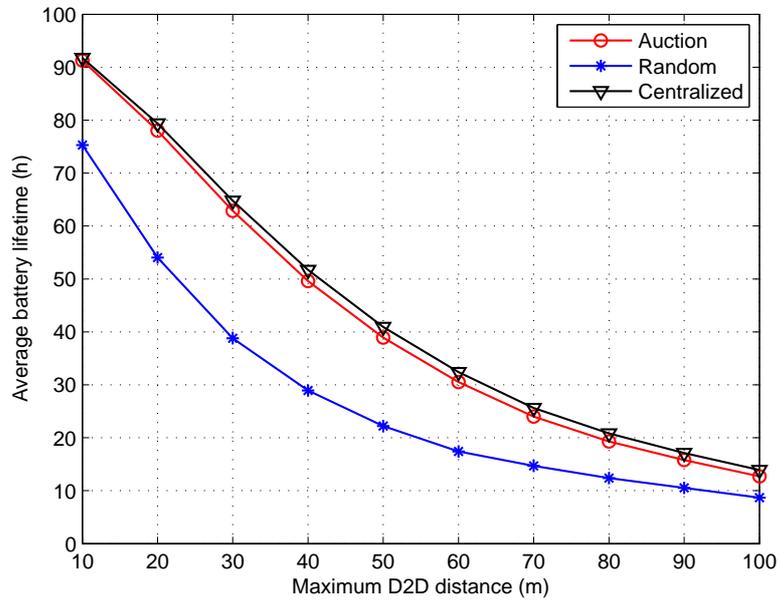}
\caption{Average D2D battery lifetime with different maximum D2D
distances.} \label{fig:lifetime_d}
\end{figure}

\begin{figure}[!t]
\centering
\includegraphics[height=3.5in]{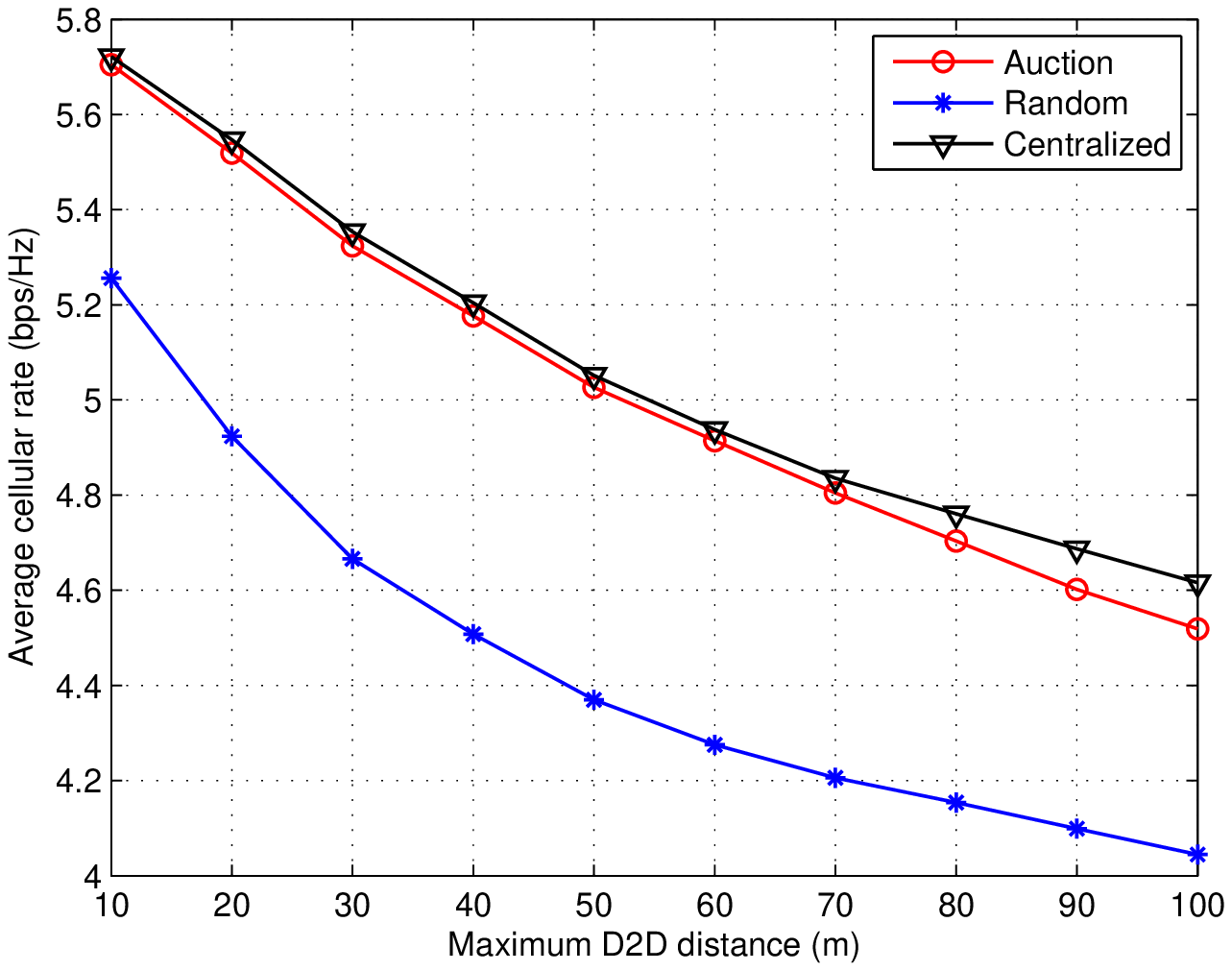}
\caption{Average cellular rate with different maximum D2D
distances.} \label{fig:rate_d}
\end{figure}

In Fig. \ref{fig:lifetime_r} and Fig. \ref{fig:rate_r}, we plot the effect of rate constraints. In Fig. \ref{fig:lifetime_r}, average D2D battery lifetime goes down with higher rate constraints. This also causes more interference to the cellular network, resulting in the degrade in cellular rate, which is shown in Fig. \ref{fig:rate_r}. It is also important to note that without D2D communication, the average rate of cellular UE is about 5.8bps/Hz. The average battery lifetime of cellular UEs under such a rate is $L_c=Q/((p_0+P_c)/V_0)^b=0.8/(0.35/4)^{1.3} \approx 9$h. From Fig. \ref{fig:lifetime_r}, we can see that to reach a rate constraint of 6bps/Hz, the average battery lifetime of D2D UEs is approximately 63h, which is about 7 times of the battery lifetime of cellular UEs. D2D communication fully utilizes the proximity between D2D UEs, and thus greatly extends battery lifetime.



Fig. \ref{fig:lifetime_d} and Fig. \ref{fig:rate_d} show the
influence of maximum D2D communication distances. In D2D underlaying
cellular networks, the maximum communication distance between D2D
UEs is a critical parameter. It is a criterion for the eNB to decide
whether to set up a direct link between the two users or communicate
in ordinary cellular mode. Another important aspect is that D2D
needs more transmit power for longer distance, resulting in shorter
battery lifetime as well as causing more interference to the
cellular network. From the figures, we can see that the distance has
an important influence on the battery lifetime and cellular rate.
Battery lifetime and cellular rate goes down quickly with a larger
D2D distance.

In this section, we investigated resource allocation for
device-to-device communication underlaying cellular networks, in
order to extend UE battery lifetime. The proposed resource
allocation game was analyzed to have a Nash equilibrium that is
Pareto efficient. We added pricing to the game to deal with
externality, and proposed an auction-based resource allocation
algorithm. The simulation results indicate that D2D communication
can greatly extend UE battery lifetime compared with traditional
cellular communication. The results also show that the proposed
algorithm performs close to the centralized scheme, and much better
than the random allocation.

\chapter{Summary}\label{summary}
In this book, Device-to-Device communication underlaying cellular networks has been studied. Some physical-layer techniques and cross-layer optimization methods on resource management and interference avoidance have been proposed and discussed. WINNER II channel models \cite{WINNER} has been applied to be the signal and interference model and simulation results show that the performance of D2D link is closely related to the distance between D2D transmitter and receiver and that between interference source and the receiver. Besides, by power control, D2D SINR has degraded, which will naturally contribute to low interference to cellular communication. A simple mode selection method of D2D communication has been introduced. Based on path-loss (PL) mode selection criterion, D2D gives better performance than traditional cellular system. When D2D pair is farther away from the BS, a better results can be obtained. Game theory, which offers a wide variety of analytical tools to study the complex interactions of players and predict their choices, can be used for power and radio resource management in D2D communication.

A distributed threshold-based power control scheme has been proposed to guarantee the feasibility of D2D connection, and at the same time limit cellular SINR degradation. Power is calculated by D2D transmitter itself, which makes the operation flexible and convenient, improving the system efficiency. Furthermore, a joint beamforming and power control scheme that aims to maximize the system sum rate while guarantees the performance of both cellular and D2D connections has been given. The BS carries out beamforming to avoid D2D from excessive interference, and D2D transmit power is calculated by the BS based on maximizing the system sum rate. Also, the BS decides whether the calculated D2D transmit power available according to SINR threshold of cellular and D2D links.

In the following chapters, game theory was applied to solve resource management problem and cross-layer optimization problem. A reverse iterative combinatorial auction has been formulated as a mechanism to allocate the spectrum resources for D2D communications with multiple user pairs sharing the same channel. In addition, a game theoretic approach has been developed to implement joint scheduling, power control and channel allocation for D2D communication. Finally, joint power and spectrum resource allocation method has been studied under consideration of battery lifetime, which is an important application of D2D communication on increasing UE's energy efficiency. The simulation results show that all these methods have beneficial effects on improving the system performance.

In fact, there still exist numerous challenging problems such as simplification of resource allocation algorithms, multi-cell joint optimization of system performance, multi-hop transmission in D2D communications and other optimization problems, all of which wait to be investigated in the future work. The solutions will bring large improvements on the performance of D2D underlaying systems, which, indeed plays a good role in the next generation wireless communication network.

%


%

\newpage

\chapter*{List of Acronyms}
\begin{acronym}
\acro{3GPP}{Third Generation Partnership Project}
\acro{LTE}{Long Term Evolution}
\acro{LTE-A}{Long Term Evolution-Advanced}
\acro{BS}{Base Station}
\acro{MS}{Mobile Station}
\acro{D2D}{Device-to-Device}
\acro{UEs}{User Equipments}
\acro{CSI}{Channel State Information}
\acro{UL}{Uplink}
\acro{DL}{Downlink}
\acro{OFPC}{Open Loop Fraction Power Control}
\acro{PL}{Path Loss}
\acro{CAs}{Combinatorial Auctions}
\acro{WDP}{Winner Determination Problem}
\acro{I-CAs}{Iterative Combinatorial Auctions}
\acro{HARQ}{Hybrid Automatic Repeat Request}
\acro{eNB}{Evolved Node Base Station}
\acro{SINR}{Signal to Interference plus Noise Radio}
\acro{AWGN}{Additive White Gaussian Noise}
\acro{PC}{Power Control}
\acro{SLNR}{Signal to Leakage plus Noise Radio}
\acro{BF}{Beamforming}
\acro{RBs}{Resource Blocks}
\acro{CAP}{Combinatorial Allocation Problem}
\acro{Max C/I}{Maximum Carrier to Interference}
\acro{PF}{Proportional Fair}
\acro{TTI}{Transmission Time Interval}
\acro{CDF}{Cumulative Distribution Function}

\end{acronym}
\end{document}